\theoremstyle{plain}
\newtheorem{theorem}{Theorem}[section]
\newtheorem{lemma}[theorem]{Lemma}
\newtheorem{corollary}[theorem]{Corollary}
\theoremstyle{definition}
\theoremstyle{remark}
\newtheorem{remark}[theorem]{Remark}
\newenvironment{claim}[1]{\par\noindent\underline{Claim:}\space#1}{}
\newenvironment{claimproof}[1]{\par\noindent\underline{Proof:}\space#1}{\hfill $\square$} 
\renewcommand{\P}{\mathbb{P}} 
\newcommand{\Prob}{\P}
\newcommand{\E}{\mathbb{E}} 
\newcommand{\Var}{\mathrm{Var}} 
\newcommand{\Cov}{\mathrm{Cov}} 
\newcommand\indep{\protect\mathpalette{\protect\independenT}{\perp}} \def\independenT#1#2{\mathrel{\rlap{$#1#2$}\mkern2mu{#1#2}}} 
\newcommand{\1}{\mathbf{1}} 
\newcommand{\R}{\mathbb{R}} 
\newcommand{\N}{\mathbb{N}} 
\renewcommand{\O}{\mathcal{O}} 
\renewcommand{\o}{o}
\DeclarePairedDelimiter\norm{\lVert}{\rVert} 
\newcommand\myeq{\mkern2.5mu{=}\mkern2.5mu} 
\renewcommand\mid{\mkern4mu{|}\mkern4mu} 
\newcommand{\F}{\ensuremath{{\mathcal F}}}
\renewcommand{\H}{\ensuremath{{\mathcal H}}}
\newcommand{\A}{\ensuremath{{\mathcal A}}}
\newcommand{\I}{\ensuremath{{\mathcal I}}}
\newcommand{\HS}{\ensuremath{{\mathcal S}}}
\newcommand{\forestass}[1]{\textbf{(F#1)}}
\newcommand{\dataass}[1]{\textbf{(D#1)}}
\newcommand{\kernelass}[1]{\textbf{(K#1)}}
\newcommand{\Ybf}{\mathbf{Y}}
\newcommand{\Yibf}{\mathbf{Y}_i}
\newcommand{\ybf}{\mathbf{y}}
\newcommand{\Xbf}{\mathbf{X}}
\newcommand{\xbf}{\mathbf{x}}
\newcommand{\Zbf}{\mathbf{Z}}
\newcommand{\Zcal}{\mathcal{Z}}
\newcommand{\PYgXz}{\P_{\Ybf \mid \Xbf \myeq \xbf}^0}
\newcommand{\PYgXo}{\P_{\Ybf \mid \Xbf \myeq \xbf}^1}
\newcommand{\hwix}{\hat w_i(\xbf)}
\newcommand{\kYi}{k(\Yibf, \cdot)}
\newcommand{\Kbf}{\mathbf{K}}
\newcommand{\kbf}{\mathbf{k}}
\newcommand{\tauk}{\tau_{k}(\xbf)}
\newcommand{\htauk}{\hat\tau_{n,k}(\xbf)}
\newcommand{\htaukHS}{\hat\tau_{n,k}^{\HS}(\xbf)}
\newcommand{\htaukHSb}{\hat\tau_{n,k}^{\HS_b}(\xbf)}
\newcommand{\Lcal}{\mathcal{L}}
\newcommand{\wbf}{\mathbf{w}}
\newcommand{\hwbf}{\hat\wbf}
\newcommand{\bandybf}{\mathcal{B}(\ybf)}
\begin{document}

%
\runningtitle{Causal-DRF}

%

\twocolumn[

\aistatstitle{Causal-DRF: Conditional Kernel Treatment Effect Estimation using Distributional Random Forest}

\aistatsauthor{ Jeffrey N\"af \And Junhyung Park \And Herbert Susmann }

\aistatsaddress{ University of Geneva \And  ETH Z\"urich \And NYU Grossman School of Medicine } ]

\begin{abstract}
  The conditional average treatment effect (CATE) is a commonly targeted statistical parameter for measuring the effect of a treatment conditional on covariates. However, the CATE will fail to capture effects of treatments beyond differences in conditional expectations. Inspired by causal forests for CATE estimation, we develop a forest-based method to estimate the conditional kernel treatment effect (CKTE), based on the recently introduced Distributional Random Forest (DRF) algorithm. Adapting the splitting criterion of DRF, we show how one forest fit can be used to obtain a consistent and asymptotically normal estimator of the CKTE, as well as an approximation of its sampling distribution. This allows to study the difference in distribution between control and treatment group and thus yields a more comprehensive understanding of the treatment effect. 
\end{abstract}

\section{Introduction}
The definition, identification and estimation of treatment effects have gained increased attention in statistics and machine learning. Consider random variables \((Y^1, Y^0, W, \mathbf{X})\), where $Y^1, Y^0$ correspond to potential outcomes of treatment and control, respectively \citep{RubinPotentialoutcome}, $W$ is the treatment assignment, and $\mathbf{X}$ additional covariates. The so-called ``fundamental problem of causal inference" \citep{Holland01121986} is that only one of the potential outcomes is observed; that is, one only observes $Y = WY^1 + (1 - W) Y^0$. For example, treatment $W$ could be thought of as indicating the prescription of a drug treatment ($W=1$), $Y$  a measure of health such as blood pressure, and $\mathbf{X}$ consisting of individual baseline covariates. A popular choice to investigate the effectiveness of the treatment given $\mathbf{X}=\mathbf{x}$ is the conditional average treatment effect (CATE), defined as $\mathbb{E}[Y^1 - Y^0\mid \mathbf{X}=\mathbf{x}]$. Under the assumption of no unmeasured confounding, the CATE is identifiable in terms of the observed data $(Y, W, \mathbf{X})$. The CATE is an attractive target, as its conditional structure contains information about treatment heterogeneity across population subgroups, as opposed to the average treatment effect (ATE), which averages treatment effects over the entire population. However, the CATE is limited in that a treatment might affect the outcome beyond just its conditional expectation. For example, giving a drug to several new patients with the same covariates $\mathbf{x}$ could decrease blood pressure on average, but also lead to an undesirable increase in variance. Therefore, it is natural to consider more general measures of differences between treatment and control distributions to obtain a comprehensive picture of the treatment effect. Previous work on causal estimands that go beyond identifying differences in expectations focused, for example, on estimating cumulative distribution functions of the potential outcomes \citep{Chernozhukov2013,
Chernozhukov2020}, or on estimating the difference in treatment and control groups of specified distributional parameters, such as the quantile treatment effect \citep{AI2, CATEGeneralization}. Recently, \citep{GRFDistributionalCausalEffects} provided a unified doubly-robust approach to estimate a variety of distributional parameters to compare treatment and control groups. 

In this paper, we study the conditional kernel treatment effect (CKTE). The CKTE can be seen as an analog to the CATE in a reproducing kernel Hilbert space (RKHS). For a reproducing kernel $k$ with associated Hilbert space $\H$ and a potential multivariate outcome $\Ybf^{1}, \Ybf^{0}$, the CKTE is defined as
\begin{align}\label{taukdef}
   \tauk = \E[k(\Ybf^{1}, \cdot) \mid \Xbf=\xbf] - \E[k(\Ybf^{0}, \cdot) \mid \Xbf=\xbf],
\end{align}
see e.g., \citep{CATEGeneralization}. If the outcome is univariate and we choose the trivial kernel $k(y_1, y_2)=y_1 y_2$, then $\tauk$ reduces to the traditional CATE; thus, the CKTE can be seen as a generalization of the CATE. With other kernel choices, the CKTE provides a more comprehensive distributional comparison of the control and treatment groups. For instance, via the conditional witness function $\ybf \mapsto \tauk(\ybf)$ and with \textit{characteristic} kernels, one can read off the sign of the differences in the local conditional densities around \(\ybf\). That is, $\tauk(\ybf) > 0$ if and only if the difference between the conditional density of the treatment group and that of the control group in the neighborhood of $\ybf$ is greater than zero, and likewise if $\tauk(\ybf) \leq 0$ \citep{CATEGeneralization}. The magnitude indicates how pronounced this local discrepancy is, and the kernel bandwidth controls the size of the neighborhood. Moreover, for characteristic kernels, the hypothesis $H_0 \colon \norm{\tauk}_{\H}=0$ is equivalent to testing the hypothesis $H_0\colon\PYgXz=\PYgXo$. Thus, for characteristic kernels, our pointwise test at a fixed covariate value \(\xbf\) is a no distributional effect test: it assesses whether the conditional treatment and control distributions coincide. By itself, this test does not indicate the direction of the effect, nor which regions of the outcome space are more likely under treatment versus control. That information is instead provided by the conditional witness function \(\ybf\mapsto\tau_k(\ybf)\), whose sign indicates whether outcomes near \(\ybf\) are locally more likely under treatment or control.

Estimating the CKTE involves estimating the conditional mean embedding (CME) $\E[k(\Ybf^{j}, \cdot) \mid \Xbf=\xbf]$, for the groups $j=0,1$. This was done in \citep{CATEGeneralization} using classical kernel regression for each group. More recently, \citep{näf2023confidence} used two distributional random forests (DRFs) for the task. Building on random forests~\citep{breiman2001random}, DRFs provide forest-based estimates of the kernel mean embedding, conditional on $\Xbf$. Using DRF instead of the usual kernel estimators 
is attractive, because DRF inherits the favorable empirical performance of forest-based methods. In particular, DRF tends to deliver accurate estimates, even when $\Xbf$ is high dimensional, and does so without the need for parameter tuning. For instance, the CKTE estimator in \citep{CATEGeneralization} requires to choose two kernels and two regularization parameters, while \citep{näf2023confidence} simply use the default choices of tuning parameters and kernel advocated in \citep{DRF-paper}.

We build on the DRF-based approach of \citet{näf2023confidence} for estimating the CKTE. Specifically, inspired by causal forests \citep{wager2018estimation}, we modify the DRF splitting criterion so that the CKTE is estimated directly from a single forest fit shared across both groups. 
We refer to our new method as ``Causal-DRF''. Adapting the results derived in \citep{näf2023confidence}, we show how an asymptotically valid test of $H_0 \colon \norm{\tauk}_{\H}=0$, as well as uniformly valid confidence bands around $\tauk(\ybf)$, can be constructed. This is notoriously difficult for kernel approaches, as laid out, for instance, in \citep{martineztaboada2023efficient}; because kernel-based tests involve degenerate U-statistics, approximating the (asymptotic) distribution, even if known, is computationally intensive. This is solved in \citep{martineztaboada2023efficient} for the kernel treatment effect using data splitting, allowing for an elegant asymptotic normality result for their test statistic. In contrast, in our methodology an additional resampling step in the forest construction provides a natural approximation of the sampling distribution of our test statistic for fixed $\xbf$  without increasing the complexity of the algorithm. This allows for the construction of an asymptotically valid test of $H_0\colon\PYgXz=\PYgXo$ as well as uniformly valid confidence bands for the witness function from one fit of the forest. Our Causal-DRF thus brings similar advantages to the estimation of the CKTE as the widely known causal-forest of \citep{wager2018estimation} brought to the CATE estimation.

We illustrate the capabilities of Causal-DRF through simulations and by applying the method to the 1991 Survey of Income and Program Participation (SIPP) pension data (previously analyzed in \citep{Benjamin401k, 401KData}, and \citep{ GRFDistributionalCausalEffects}, among others). This dataset contains three wealth measures that form the variable of interest ($\Ybf$), 401(k) eligibility as treatment assignment ($W$), and a range of covariates such as age and income ($\Xbf$). With Causal-DRF we can directly obtain the CKTE jointly for the three measures of wealth, using a single forest fit. This is in contrast to \citep{401KData} who use 3 independent quantile regressions and to \citep{GRFDistributionalCausalEffects} who combine the wealth variables of the dataset into one. This allows us to analyze the effect of 401(k) eligibility on the three wealth measures using the witness function and to test the equality of the conditional distribution for both groups for single test points. 

\subsection{Related Work}

Several recent papers consider the (unconditional) kernel treatment effect, such as \citep{Counterfactualmeanembeddings} and \citep{fawkes2022doubly}. In particular, \citep{martineztaboada2023efficient} develop an efficient test for $H_0: \P_{\Ybf^{0}} = \P_{\Ybf^{1}}$ using the kernel treatment effect. However, just as with the ATE, the kernel treatment effect obscures the heterogeneity of the treatment across different individuals. As such, \citep{CATEGeneralization} and \citep{näf2023confidence}, which consider the \emph{conditional} kernel treatment effect by separately estimating the conditional mean embeddings on both the treatment and control groups, are of special interest. The closest prior work to our current proposal is the one taken in \citep{näf2023confidence} in which two separate DRFs are fitted to obtain the conditional kernel treatment effect with uncertainty quantification and our theoretical analysis is built directly on their work. In fact, on a technical level, we begin the theoretical analysis in Appendix \ref{DRFproofcorrection} by fixing a problematic argument in their proof of a crucial result. With this fix, our proofs continue similarly as in \citep{näf2023confidence}, although considerably more care is needed in case of Causal-DRF. 

The main methodological difference from \citet{näf2023confidence} is that Causal-DRF uses a single forest with a CKTE-targeted splitting criterion shared across both groups, rather than fitting two separate forests. In case of the CATE it is well-known that separate estimation of the mean function can be problematic. For instance, different estimation errors in the two mean estimates may lead to spurious effects, especially when one of the two groups has fewer observations available than the other. We refer to \citep[page 47]{wager2023causalinference} for a more detailed explanation of these two problems in the context of the CATE estimation. The estimation of the CKTE will likely display similar behavior. This is particularly apparent in the case of kernel estimates as in \citep{CATEGeneralization}, as the two kernel mean embedding estimates are likely to have different regularization parameters, leading to a regularization bias in the CKTE. Although there is no explicit regularization with DRF, similar problems likely arise. Similarly, covariate shifts in the two groups might lead to further bias if the propensity score varies strongly with $\Xbf$. In the simulation studies we consider, using two separate DRFs can have detrimental coverage for smaller sample sizes, even in a relatively simple example, while Causal-DRF successfully maintains coverage even for $n=250$ samples. Nonetheless, we note that this shared-partition construction is not uniformly superior to separate estimation; rather, the two approaches impose different inductive biases, as in the corresponding CATE literature \citep{AI4}. When the treatment and control groups share learnable structure, the shared-partition design can be advantageous, whereas separate fits may be preferable when group-specific structure dominates. In our simulation settings, Causal-DRF tends to yield improved finite-sample coverage and competitive or better estimation accuracy, while also requiring only one forest fit.

\subsection{Notation and RKHS Background}
Here we collect the notation used throughout the main paper. We observe an i.i.d. sample $\Zbf_i=(\Ybf_i, W_i, \Xbf_i)$, $i=1,\ldots,n$, where $\Ybf_i$ is an outcome in $\R^d$, $W_i \in \{0,1\}$ indicates treatment status and $\Xbf_i$ are covariates taking values in $\R^p$. We denote the conditional distributions in each group as $\PYgXz$ and $\PYgXo$, respectively. We collect the data as $\Zcal_{n}=\{\Zbf_1, \ldots, \Zbf_n\}$.

Let $\left(\H, \langle\cdot,\cdot\rangle\right)$ be the RKHS induced by the positive definite, bounded, and continuous kernel $k\colon\R^d\times\R^d \to \R$ (see \citealt[Chapter 2.7]{hilbertspacebook} for an introduction to RKHS theory). Continuity of $k$ ensures that $\H$ is separable~\citep[Theorem 2.7.5]{hilbertspacebook}. For a random element $\xi$ taking values in the (separable) Hilbert space $\H$ with $\E[\|\xi \|_{\H}] < \infty$, we define its expected value in $\H$ by
\(\E[\xi]= \int_{\Omega} \xi d \P \in \H\),
where the integral is to be understood in a Bochner sense~\citep[Chapter 3]{hilbertspacebook}. Because $\H$ is separable, this integral is well defined.  When $\E[\| \xi \|_{\H}^2] < \infty$, let the variance of $\xi\in\H$ be written as \(
\Var(\xi)=\E[\|\xi \|_{\H}^2] - \| \E[\xi] \|_{\H}^2\). For a sequence of random elements $ \xi_n$ in $\H$, we denote by $\xi_n \stackrel{D}{\to} \xi$ convergence in distribution, i.e., for all bounded and continuous functions $F\colon \H \to \R$, we have $\E[F(\xi_n)] \to \E[F(\xi)]$ as $n \to \infty$. If, for all $f \in \H$, we have $\langle \xi,f \rangle\sim N(0, \sigma_f^2)$ for some $\sigma_f > 0$, we write $\xi \sim N(0, \boldsymbol{\Sigma})$ with $\boldsymbol{\Sigma}$ a self-adjoint Hilbert-Schmidt (HS) operator satisfying $\langle \boldsymbol{\Sigma}f , f\rangle=\sigma_f^2$. 
In this case, we also write $\xi_n \stackrel{D}{\to} N(0, \boldsymbol{\Sigma})$, if $\xi_n \stackrel{D}{\to} \xi$.

\section{Motivational Examples}
We consider four toy examples of potential causal effects for univariate outcomes ($d=1$), covariates of dimension $p=5$, and sample size $n=1000$. In all examples, we take $X_1, \ldots, X_p$ to be drawn i.i.d. from the uniform distribution on $[2,3]$, and $\Prob(W=1\mid \Xbf)=\Prob(W=1)=0.5$. In the first example, there is no effect such that $\PYgXz, \PYgXo$ are both Gaussian with mean 0 and variance 1 ($N(0,1)$). For the second example, there is a mean effect, where $\PYgXz=N(0,1)$, while $\PYgXo=N(x_1,1)$. In the third example, there is no mean effect; however, the variance decreases when changing from the control to the treatment group: $\PYgXz=N(0,1)$ and $\PYgXo=N(0,1/x_1^2)$. Finally, we consider an example where both the variance and the mean shift when changing from control to treatment group $\PYgXz=N(0,1)$ and $\PYgXo=N(x_1,x_1^2)$. In all examples, we choose the test point to be $\xbf=(2.5, \ldots, 2.5)^{\top}$. In addition, we assume in this example that an increase in $Y$ is desired due to the treatment.

Figures \ref{Example_12} and \ref{Example_34} show both the true underlying conditional densities as well as our estimation method with confidence bands. As mentioned above, the estimated witness functions follow the difference in densities between the treatment group in red and the control group in blue. When the difference in densities is negative at $\ybf$, the value of the estimated witness function tends to be negative, and vice-versa. In particular, negative values indicate that it is less likely to observe an outcome from the treatment population at $\ybf$. Thus the witness function maps out the behavior of the treatment effect for fixed $\xbf$. As the additional confidence bands are simultaneously valid for all $y$, any crossing of zero of the lower or upper confidence bands indicates a rejection of $H_0\colon\PYgXz=\PYgXo$. In the left graph in Figure \ref{Example_12} the upper and lower confidence bands are far from zero, indicating that the method recognizes that there is no effect. For all other examples, Causal-DRF indicates correctly that there is a significant effect, as in each example the confidence bands have regions that do not intersect zero. Even in challenging cases such as Example 3, where there is only a variance shift, Figure \ref{Example_34} shows a clear rejection of $H_0$. 

These examples indicate how the CKTE, estimated through Causal-DRF, can give a clearer picture of conditional treatment effects that go beyond differences in expectation. Moreover, the confidence bands offer a convenient tool to assess whether the effect is statistically significant. We note that the analysis for the last two cases is quite different if only the CATE is considered. First, the CATE will miss the decrease in variance in the third example. Second, while the last example may appear to be a success when only expectations are considered, only examining CATE will lead to missing differing variances in the treatment and control groups.

\begin{figure}[ht!]
    \begin{minipage}{0.47\textwidth}
        \subfigure{
        \includegraphics[width=0.46\linewidth]{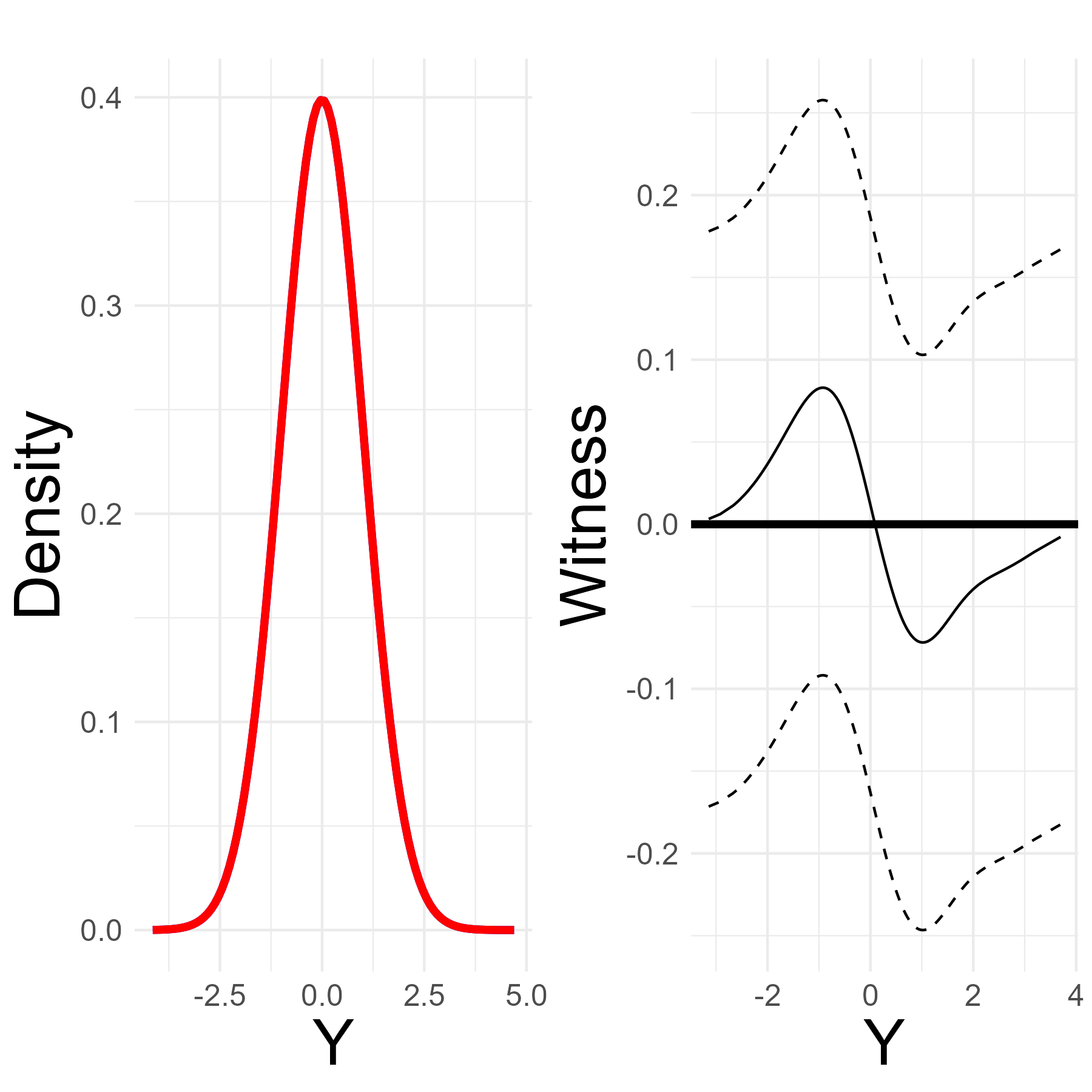}
    }
    \subfigure{
        \includegraphics[width=0.46\linewidth]{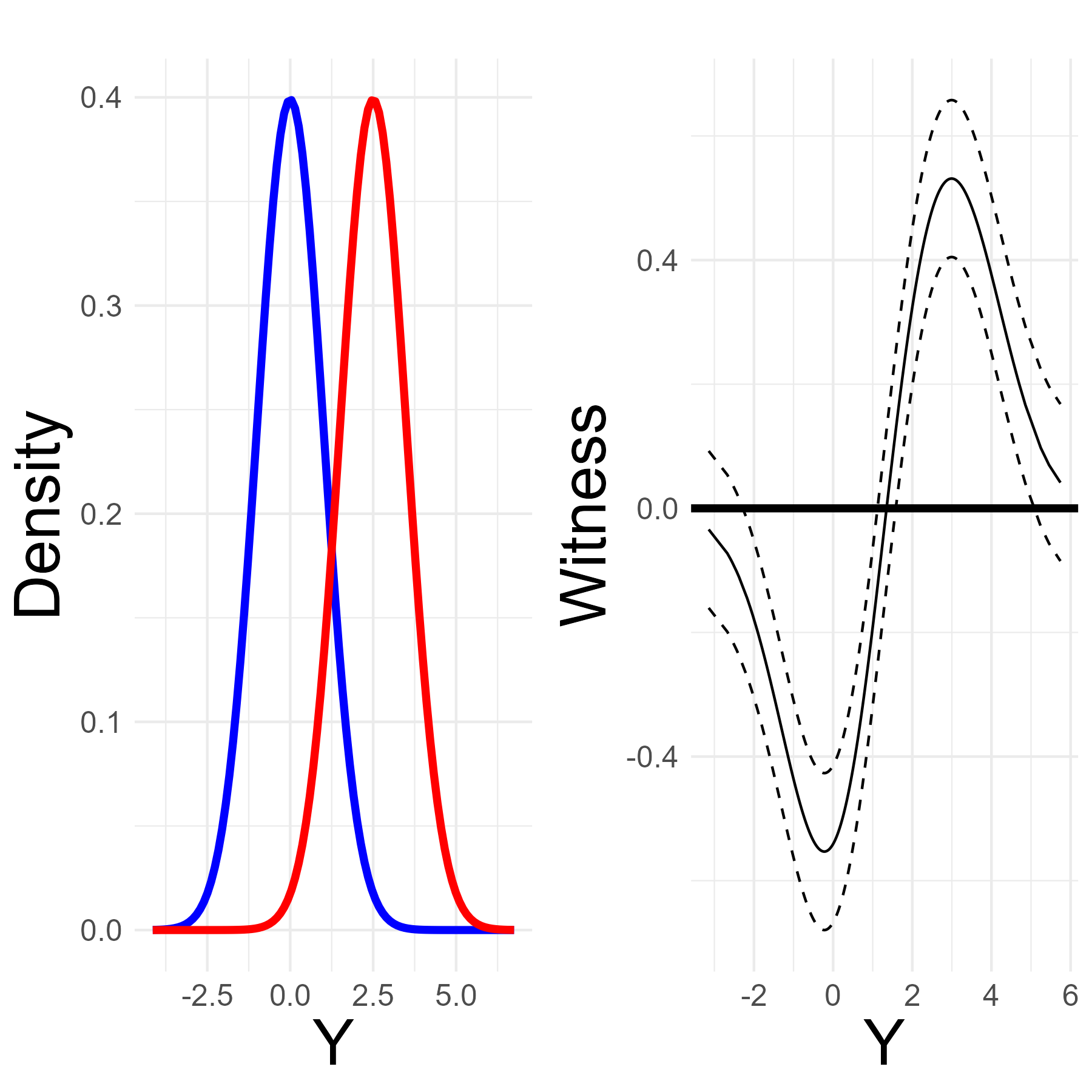}
    }
    \caption{Density and Causal-DRF estimates for Example 1 (left, no effect) and Example 2 (right, mean shift). Blue: Density for the control group $W=0$, Red: Density for the treatment group $W=1$.}
    \label{Example_12}
    \end{minipage}
    \hfill
    \begin{minipage}{0.47\textwidth}
        \subfigure{
        \includegraphics[width=0.46\linewidth]{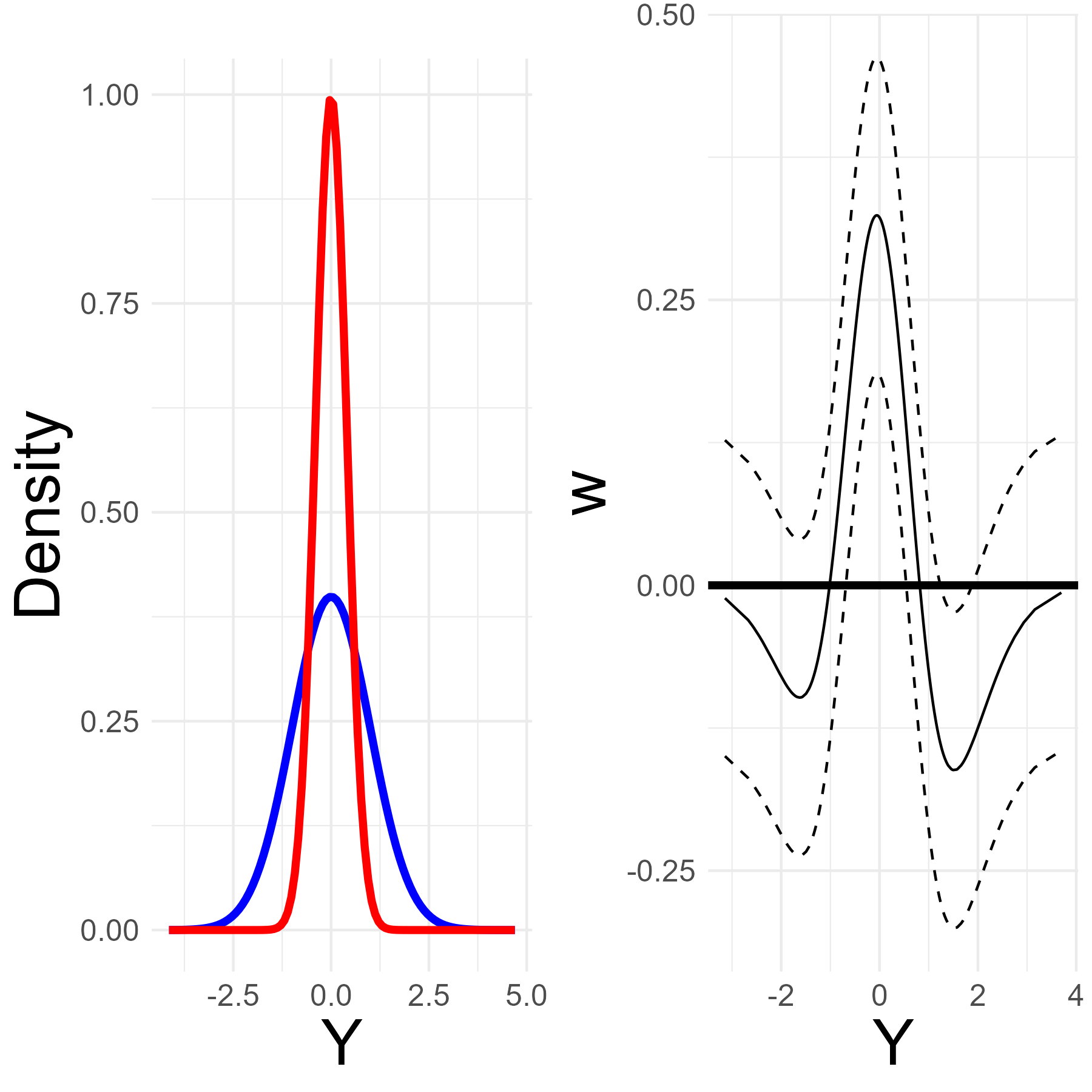}
    }
    \subfigure{
        \includegraphics[width=0.46\linewidth]{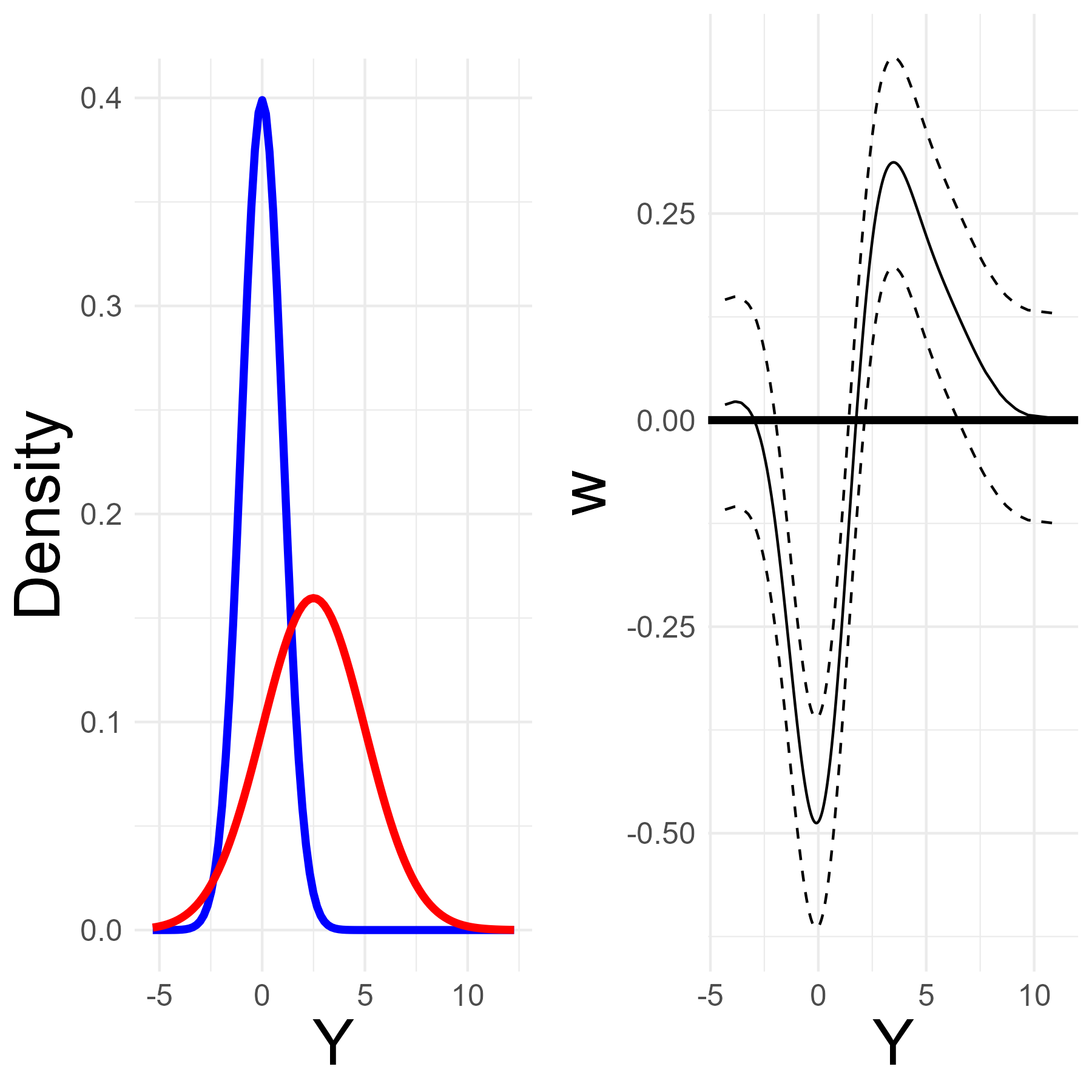}
    }
    \caption{Density and Causal-DRF estimates for Example 3 (left, variance effect) and Example 4 (right, mean shift + variance effect). Blue: Density for the control group $W=0$, Red: Density for the treatment group $W=1$.}
    \label{Example_34}
    \end{minipage}
\end{figure}


\section{Causal-Distributional Random Forests}\label{sec:DRF-recall}\label{sec:background}

We start by reviewing the core DRF algorithm; for a full treatment, see \cite{DRF-paper} and \cite{näf2023confidence}. The DRF estimate of the CME can be viewed as a weighted average over the training points. These weights $\hwix$ are found by fitting a random forest type algorithm within the RKHS $(\H, k)$. Specifically, a forest of $N$ trees is constructed. Each tree recursively splits the observations into sets $\mathcal{I}_L=\{i: X_{ij} < c \}$ and $\mathcal{I}_R=\{i: X_{ij} \geq c \}$, with a splitting value $c \in \R$ and a splitting feature $j$ chosen by a splitting criterion that is allowed to depend on $\Ybf$. In the original DRF algorithm, the Maximum Mean Discrepancy (MMD) statistic~\citep{gretton2007kernel} is used as a splitting criterion. This choice of splitting criterion can be interpreted as an analogue of the traditional CART splitting criterion~\citep{breiman2001random}, but rather applied in the RKHS. Combining this idea with principles of forest construction developed by \citep{wager2018estimation, athey2019generalized}, we develop a novel splitting criterion in the RKHS that can be used to estimate $\tauk$. That is, if for a given node, the estimate in the left child is given as
\vspace{-.3em}
\begin{align*}
    \hat{\tau}_L&=\frac{1}{|\mathcal{I}_L, W_i=1|}\sum_{i\in\mathcal{I}_L, W_i=1}  \kYi\\
    &\qquad- \frac{1}{|\mathcal{I}_L, W_i=0|}\sum_{i\in\mathcal{I}_L, W_i=0}  \kYi,
\end{align*}
and similarly with $\hat{\tau}_R$, then we split such that a weighted version of
\begin{align} \label{eq: splitting}
  &\| \hat{\tau}_L - \hat{\tau}_R \|_{\H}^2 \nonumber\\
        &=\Big\lVert
    \sum_{i\in\mathcal{I}_L}   \left(\frac{W_i }{|\mathcal{I}_L, W_i=1|}  - \frac{  1- W_i }{|\mathcal{I}_L, W_i=0|} \right) \kYi \nonumber \\
    &- \sum_{i\in\mathcal{I}_R}   \left(\frac{W_i }{|\mathcal{I}_R, W_i=1|}  - \frac{  1- W_i }{|\mathcal{I}_R, W_i=0|} \right) \kYi    \Big\rVert_{\H}^2
\end{align}
is maximal. This is a weighted version of the MMD statistic. Thus, using the same random approximation as in \citep{DRF-paper} for a translation-invariant kernel (see \kernelass{3} in Appendix \ref{App_Preliminaries}) this can be efficiently approximated using $S \in \mathbb{N}$ random functionals. 
For a set of candidate variables, $\I_l$, $\I_R$ are chosen to maximize this criterion. Details are given in Appendix~\ref{appendix:splitting}.

For each fitted tree $l=1,\ldots,N$, let $\Lcal(\xbf)$ be the leaf node of $\xbf$. 
The estimate of $\tauk$ is then an average over the $N$ trees.
This approach has the same computational complexity of the original DRF algorithm, specifically $\O(S \times N \times p \times n \log n)$; see \citep{DRF-paper}. Collecting all $k(\Ybf_j, \cdot)$, one can rewrite $\htauk$ conveniently as
\begin{align}\label{eq: weightrepresentation}
    \htauk = \sum_{i=1}^n \hat{w}_i(\xbf)  k(\Ybf_i, \cdot),
\end{align}
with
\begin{align*}
    \hat{w}_i(\xbf)=&\frac{1}{N} \sum_{l=1}^N \Big(\frac{\mathbf{1}\{ \Xbf_i \in \Lcal_l(\xbf)\}W_i}{ | \{j: \Xbf_j \in \Lcal_l(\xbf), W_j=1 \}|} \\
    &- \frac{\mathbf{1}\{ \Xbf_i \in \Lcal_l(\xbf)\}(1-W_i)}{ | \{j: \Xbf_j \in \Lcal_l(\xbf), W_j=0 \}|} \Big) .
\end{align*}

Finally, to construct tests and confidence bands, we consider the following subsampling approach similar to that of  \citep{näf2023confidence}. For a subset $\HS \subset \{1,\ldots,n\}$, denote by $\htaukHS$ the version of $\htauk$ that uses trees built with data from $\HS$. We draw $\HS$ by sampling $n$ i.i.d.\ random variables 
$U_i \sim \mathrm{Bernoulli}(1/2)$ and set 
$\HS=\{i\colon U_i=1 \}$. Thus, we fit $B$ subforests with $L$ trees, such that $N=B\times L$. The resulting estimates $\htaukHSb$, $b=1,\ldots, B$, correspond to an approximation of the sampling distribution of $\htauk$; we formalize this notion in the following section. Note that $|\HS|/n \to 1/2$ almost surely and this approach is motivated from the half-sampling approach of \cite{athey2019generalized}, though our independent sampling allows for a somewhat simpler theoretical treatment. The point estimate of the CME is then calculated by averaging the point predictions over all the half-samples, 
\(\htauk = \frac{1}{B} \sum_{b=1}^B \htaukHSb\)
which corresponds to \eqref{eq: weightrepresentation}. Algorithms~\ref{pseudocodeU} and~\ref{pseudocode} in Appendix \ref{App_Preliminaries} provide pseudocode for this procedure. This approach again has the same complexity as one fit of DRF \citep{DRF-paper}, except that $N=B \times L$ should be chosen quite large.

The Causal-DRF estimator can be conveniently written in matrix form. 
Consider the matrix $\Kbf=(k(\Yibf,\Ybf_{j} ))_{i=1,\ldots n,j=1,\ldots n}$. Denote by $\hwbf \in\R^{n}$ the vector that concatenates the weights in \eqref{eq: weightrepresentation} and similarly let $\hwbf^{\HS_b} \in\R^{n}$ be the weights of the $b$th subforest. Let $ \kbf=( k(\Ybf_1,\cdot ) , \ldots, k(\Ybf_{n},\cdot ) )^{\top}$, and denote by $ \kbf(\ybf)=( k(\Ybf_1,\ybf ) , \ldots, k(\Ybf_{n},\ybf ) )^{\top}$ for $\ybf\in\R^d$.
Then, $\htauk = \hwbf^{\top}  \kbf, \ \ \htaukHSb = ( \hwbf^{\HS_b})^{\top}  \kbf$. Following the software implementation, we will take $k$ to be the Gaussian kernel given in \eqref{Gaussiankernel} in Appendix \ref{App_Preliminaries}. Note that with this choice, it holds that $\sup_{\ybf, \ybf'} k(\ybf, \ybf') \leq 1$. For the Gaussian kernel a bandwidth $\sigma$ needs to be chosen, which may influence the performance of the algorithm. Here, we follow \citep{DRF-paper, näf2023confidence} and choose $\sigma$ using the median heuristic of \citep{gretton2012optimal}. This appears to work quite well, though we note the (empirically) optimal choice of $\sigma$ for (Causal-)DRF remains an open question, as does the influence of $\sigma$ on the empirical performance.

To calculate tests and confidence bands for $\htauk$, we first compute $q_{n,\alpha}$, the $1-\alpha$ quantile of the $B$ many draws from
\begin{align}\label{unscaledresampledstats}
    &\|\htaukHSb - \htauk  \|_{\H}^2= ( \hwbf^{\HS_b} -  \hwbf)^{\top}  \Kbf( \hwbf^{\HS_b} -  \hwbf).
\end{align}
Then a test for $H_0\colon\PYgXz=\PYgXo$ can be constructed as
\begin{align}\label{Test}
\phi(\Zcal_{n})=\1 \left \{\left \| \htauk \right \|_{\H}^2 > q_{n, \alpha} \right \},
\end{align}
while a simultaneous $1-\alpha$ confidence band around $\htauk$ can be constructed as:
\begin{align}\label{CB}
    \bandybf = [\htauk(\ybf)-\sqrt{q_{n,\alpha} }, \htauk(\ybf)+\sqrt{q_{n,\alpha} }].
\end{align}
The test in \eqref{Test} should be interpreted as a test of equality of the conditional outcome distributions at the covariate value \(\xbf\). In particular, rejection indicates that treatment has a distributional effect at \(\xbf\), but the test statistic alone does not reveal the direction of that effect or where the two distributions differ. Such directional and region-specific information is instead conveyed by the witness function and its confidence band in \eqref{CB}: regions where the band lies strictly above zero correspond to outcomes locally more likely under treatment, while regions where it lies strictly below zero correspond to outcomes locally more likely under control. We note that the confidence band in \eqref{CB} is uniform in $\ybf$ but only valid for a fixed $\xbf$. For several test points $\xbf$, one has to account for multiple testing issues.

In the next section we prove consistency and asymptotic normality of $\htauk$, that $\phi$ is an asymptotically valid test and that $\bandybf$ provides asymptotically uniform coverage over $\ybf \in \R^d$. 

\section{Theoretical Development}
In this section we establish theoretical guarantees for the Causal-DRF algorithm, including its consistency and asymptotic normality. Notably, we are able to establish similar guarantees as \citep{näf2023confidence}, who rely on fitting separate forests for the treatment and control groups. 

For convenience, denote the kernel embeddings as $\mu^{(j)}(\xbf)=\E[ k(\Ybf^j,\cdot)  \mid \Xbf \myeq \xbf]$, for $j \in \{0,1\}$ and $\mu(\xbf)=\E[ k(\Ybf,\cdot)  \mid \Xbf \myeq \xbf]$. To begin, we assume the forest satisfies Assumptions \ref{forestass1}--\ref{forestass6CausalDRF} and \ref{forestass1starCausalDRF} in Appendix \ref{Sec_Forestass}. These assumptions are a combination and refinement of the assumptions taken in \citep{wager2018estimation} and \citep{näf2023confidence}. In addition, we place the following assumptions on the data-generating process; \dataass{2}-\dataass{5} hold for \(j\in\{0,1\}\). 
\begin{enumerate}[label=(\textbf{D\arabic*})]
    \item\label{dataass1} $\mathbf{X}_1,\ldots, \mathbf{X}_n$ are i.i.d. on $[0,1]^p$, and admit a density bounded away from $0$ and infinity.
    \item\label{dataass2} The mappings $\xbf \mapsto \mu^{(j)}(\xbf) $ are Lipschitz
    .
    \item\label{dataass4} $\Var(k(\Ybf^j,\cdot)\mid \Xbf=\xbf) = \E[\| k(\Ybf^j,\cdot) \|_{\H}^2 \mid \Xbf=\xbf] - \|\E[ k(\Ybf^j,\cdot)  \mid \Xbf=\xbf]\|_{\H}^2 > 0$
    .
    \item\label{dataass6} For all $f \in \H \setminus \{0\}$, $\Var( \langle k(\Ybf^j,\cdot), f \rangle\mid \Xbf=\xbf)= \Var( f(\Ybf^j)\mid \Xbf=\xbf) > 0$
    .
    \item\label{dataass7} For all $f \in \H\setminus \{0\}$, $\xbf \mapsto \E[\left| f(\Ybf^j)\right|^{2}\mid \Xbf=\xbf]$ is Lipschitz
    . 
\end{enumerate}
As noted in \citep[page 10]{näf2023confidence}, \dataass{3} is automatically satisfied for the Gaussian kernel under the assumption that the conditional distributions $\Ybf^j \mid \Xbf=\xbf$ for $j \in \{0,1\}$ are not point measures. We also adopt the following traditional causal assumptions.
\begin{enumerate}[label=(\textbf{C\arabic*})]
    \item\label{causalityass1} Positivity (overlap): there exists $\varepsilon > 0$ such that for all $\xbf$, 
    \begin{align}\label{overlapcondition}
        \varepsilon < \Prob(W=1 \mid \Xbf=\xbf)  < 1-\varepsilon
    \end{align}
   \item\label{causalityass2} Unconfoundedness (no unmeasured confounding): $(\Ybf^0, \Ybf^1) \indep W \mid \Xbf$
\end{enumerate}
Finally, we place assumptions \ref{kernelass1}--\ref{kernelass4} (Appendix~\ref{App_Preliminaries}) on the kernel function $k$. All of these assumptions are met by the Gaussian kernel 
\eqref{Gaussiankernel}.

Next, we consider the setting where the number of trees goes to infinity ($N \to \infty$). The forest estimator $\htauk$ is then the average over all possible $\binom{n}{s_n}$ many trees:
\begin{equation}\label{finalestimator}
    \htauk = \frac{1}{\binom{n}{s_n}}  \sum_{i_1 < i_2 < \cdots < i_{s_n}} \E_{\mathcal{E}} \left[\Gamma(\xbf; \mathcal{E}, \{\mathbf{Z}_{i_1}, \ldots, \mathbf{Z}_{i_{s_n}}\})\right],
\end{equation}
where $\mathcal{E}$ encodes potential auxiliary randomness from the trees (such as the dimensions considered in the splitting criterion), and
\begin{align}\label{eq:onetree}
    &\E_{\mathcal{E}} \left[\Gamma(\xbf; \mathcal{E}, \{\mathbf{Z}_{i_1}, \ldots, \mathbf{Z}_{i_{s_n}}\})\right]\nonumber\\
    &= \frac{1}{ | \{j: \Xbf_j \in \Lcal(\xbf), W_j=1 \}|}\sum_{\Xbf_j\in\Lcal(\xbf), W_j=1} k(\Ybf_j,\cdot) \nonumber \\
    &-\frac{1}{ | \{j: \Xbf_j \in \Lcal(\xbf), W_j=0 \}|}\sum_{\Xbf_j\in\Lcal(\xbf), W_j=0} k(\Ybf_j,\cdot).
\end{align}
This simplification is also used in~\citep{wager2018estimation, athey2019generalized, DRF-paper}, and \citep{näf2023confidence}. Similarly, $\htaukHS$ is also an infinite forest predictor but using only the samples in $\HS$. Given these assumptions, we have that:
\begin{restatable}{theorem}{asymptoticnormality}\label{thm: asymptoticnormality}
Assume conditions~\ref{forestass1}--\ref{forestass6CausalDRF},~\ref{forestass1starCausalDRF},~\ref{dataass1}--\ref{dataass7},~\ref{kernelass1}--\ref{kernelass4} and \ref{causalityass1}--\ref{causalityass2} hold. Then, there exists $\sigma_n \to 0$, such that, for some self-adjoint HS operator $\boldsymbol{\Sigma}_{\xbf}$, we have
\begin{align}\label{eq:asymptoticnormality}
   \xi_n(\xbf)=\frac{1}{\sigma_n}  \left( \htauk -  \tauk \right) \stackrel{D}{\to} N(0, \boldsymbol{\Sigma}_{\xbf}).
\end{align}
\end{restatable}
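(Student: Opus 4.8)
The plan is to treat $\htauk$ as written in \eqref{finalestimator}--\eqref{eq:onetree}, namely as an infinite-order, $\H$-valued U-statistic with symmetric kernel $\Gamma$ of order $s_n$, and to split the error into a bias part and a centered stochastic part,
\[
\htauk - \tauk = \bigl(\htauk - \E[\htauk]\bigr) + \bigl(\E[\htauk] - \tauk\bigr).
\]
Following the strategy of \citet{wager2018estimation} and \citet{näf2023confidence} (and starting from the corrected argument in Appendix~\ref{DRFproofcorrection}), I would handle the stochastic part via a Hoeffding/ANOVA decomposition and isolate the H\'ajek projection
\[
\mathring{\htauk} = \E[\htauk] + \frac{s_n}{n}\sum_{i=1}^n \bigl(\E[\Gamma \mid \Zbf_i] - \E[\Gamma]\bigr),
\]
which is a normalized sum of i.i.d., mean-zero elements of $\H$. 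The existence of the scalar normalizer $\sigma_n \to 0$ then comes from the common scale of the first-order projection variance, and the target operator $\boldsymbol{\Sigma}_{\xbf}$ from the limiting covariance structure of these summands.

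The first key step is to show the projection remainder is asymptotically negligible, $\Var(\htauk - \mathring{\htauk}) = o(\sigma_n^2)$. For this I would use the honesty and regularity conditions on the forest (Assumptions~\ref{forestass1}--\ref{forestass6CausalDRF} and \ref{forestass1starCausalDRF}) to obtain a total-variance bound of order $s_n/n$, together with the nondegeneracy Assumptions~\ref{dataass4} and \ref{dataass6}, which supply a matching lower bound on the first-order variance so that the H\'ajek term genuinely dominates. Establishing that the higher-order ANOVA components are of strictly smaller order, in the coupled two-group setting, is the crux of this step.

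Given dominance by $\mathring{\htauk}$, weak convergence in $\H$ follows from convergence of finite-dimensional projections plus tightness. For a fixed $f \in \H$, $\langle \htauk - \tauk, f\rangle/\sigma_n$ is a real-valued forest statistic, and a triangular-array (Lindeberg) CLT yields $\langle \xi_n, f\rangle \stackrel{D}{\to} N\bigl(0, \langle \boldsymbol{\Sigma}_{\xbf} f, f\rangle\bigr)$, using the kernel boundedness and moment conditions \ref{kernelass1}--\ref{kernelass4} and \ref{dataass7}. To upgrade this to the genuine weak convergence $\xi_n \stackrel{D}{\to} N(0, \boldsymbol{\Sigma}_{\xbf})$ required by the paper's definition, I would verify tightness in $\H$ by controlling the covariance tails uniformly, $\sum_{m > M}\limsup_n \Var(\langle \xi_n, e_m\rangle) \to 0$ as $M \to \infty$ for an orthonormal basis $\{e_m\}$. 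In parallel, the bias $\norm{\E[\htauk] - \tauk}_{\H}$ is shown to vanish faster than $\sigma_n$: unconfoundedness~\ref{causalityass2} identifies the within-leaf group means with $\mu^{(1)}(\xbf)$ and $\mu^{(0)}(\xbf)$, overlap~\ref{causalityass1} keeps both group counts positive, and the Lipschitz conditions \ref{dataass2} together with shrinking leaf diameters (from forest regularity) give the rate.

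The main obstacle, and where \emph{considerably more care} is needed than in the two-forest construction of \citet{näf2023confidence}, is the random group-wise leaf counts appearing in the denominators of \eqref{eq:onetree}. With a single forest, each leaf contains a random mixture of treated and control observations, so $\Gamma$ carries random denominators $|\{j : \Xbf_j \in \Lcal(\xbf),\, W_j = w\}|$ rather than the fixed within-group structure enjoyed when each group has its own DRF. The central technical work is to replace these random counts by deterministic local proxies (leaf mass times local propensity), uniformly and with negligible error, via overlap~\ref{causalityass1} and regularity, and then to propagate this approximation through the Hoeffding decomposition so that the first-order projection still dominates and assembles into a single well-defined limiting operator $\boldsymbol{\Sigma}_{\xbf}$. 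This coupling between the two groups, once controlled, is precisely what allows us to dispense with the equal-group-size restriction imposed in \citet{näf2023confidence}.
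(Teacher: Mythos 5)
Your proposal follows the same architecture as the paper's proof: decompose into bias plus centered part, isolate the H\'ajek/first-order ANOVA projection, show it dominates via an upper bound of order $s_n^2/n^2$ on the remainder variance against the incrementality lower bound $\Var(\tilde\Gamma(\Zcal_{s_n}))=\Omega(1/\log(s_n)^p)$, prove univariate convergence of $\langle\xi_n,f\rangle$ and tightness via orthonormal projections, and kill the bias with shrinking leaf diameters plus the Lipschitz and causal assumptions. The one place you diverge is the treatment of the random group-wise denominators $|\{j:\Xbf_j\in\Lcal(\xbf),W_j=w\}|$: you propose substituting deterministic local proxies (leaf mass times propensity) with a uniform approximation error, whereas the paper never does this. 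Instead it exploits that \forestass{4} forces every leaf to contain between $\kappa$ and $2\kappa-1$ observations \emph{from each group}, so the signed weights satisfy $\sum_{i:W_i=j}|S_{ij}|=1$ exactly, and all first-order variance computations in Theorem~\ref{thm:varianceassumption} are carried out directly on $\E[S_{1j}\mid\Xbf_1,W_1=j]$ by conditioning on $W_1$; the additional assumption \forestass{6} (convergence of the ratio of group-wise projection variances) is then what guarantees $\sigma_n(f)/\sigma_n$ converges so that a single operator $\boldsymbol{\Sigma}_{\xbf}$ emerges. Your proxy route would require a uniform law of large numbers over shrinking leaves, which is delicate and unnecessary given \forestass{4}; the paper's direct conditioning argument is the cleaner path, and the subtle dependence it must control is instead between $(\Xbf_i,W_i)$ and the leaf $\Lcal(\xbf)$ itself, which is exactly what \ref{forestass1starCausalDRF} is introduced to handle.
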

We note that to prove consistency, more specifically, 
\begin{equation}\label{eq: rate} 
\norm{\htauk- \tauk}_\H = \O_{p}\left(n^{-\gamma} \right)
\end{equation}
for any $\gamma \leq \frac{1}{2} \min\left( 1- \beta, \frac{\log((1-\alpha)^{-1})}{\log(\alpha^{-1})} \frac{\pi}{p} \cdot \beta \right)$, it is enough to assume \ref{forestass1}--\ref{forestass5},~\ref{dataass1} and \ref{dataass2},~\ref{kernelass1}--\ref{kernelass4} and \ref{causalityass1}--\ref{causalityass2}. Here $\alpha \leq 0.2$ is the constant taken from Assumption \forestass{4}, while $\beta$ is the subsampling rate defining $s_n$ in \forestass{5}, which is chosen as $s_n=n^{\beta}$. Moreover, using similar arguments as in \citep[Proposition 3]{DRFVariableimportance}, the assumption of a bounded kernel \kernelass{1} also implies
\begin{equation}\label{eq: meanconsistency}
   \E[\norm{\hat{\tau}_{n,k}(\Xbf)- \tau_k(\Xbf)}_\H] \to 0 ,
\end{equation}
where expectation is taken over the training data as well as $\Xbf$.

\begin{remark}
    Theorem \ref{thm: asymptoticnormality} is similar to Theorem 6 in \cite{näf2023confidence}, but the new tree construction includes several additional technical challenges. In particular, the key to proving Theorem \ref{thm: asymptoticnormality} is to control the ratio of variances of the first order approximation (Theorem \ref{thm:varianceassumption}), which is now more difficult with $W_i$ involved. We refer to Remark \ref{rmk:Difficulty} and to Theorem \ref{thm:varianceassumption} for more details.
\end{remark}

Unfortunately, the convergence in \eqref{eq:asymptoticnormality} is of limited use in practice, as one would also need to estimate self-adjoint HS operator. Instead, we follow a resampling approach as motivated by~\citet{athey2019generalized, näf2023confidence} to approximate the sampling distribution of $\xi_n(\xbf)$. Consider the infinite Causal-DRF $\htaukHS$, constructed using only the samples in $\HS$, where $\HS$ was sampled as described in Section \ref{sec:DRF-recall}. Under such a sampling scheme, randomness in $\htaukHS$ can be decomposed into randomness in the data and from random assignment to $\HS$. Thus, conditional on the data, we show that
\begin{align}\label{xinHSdef}
   \xi_{n}^{\HS}(\xbf)=\frac{1}{\sigma_n}\left( \htaukHS -  \htauk \right)
\end{align}
converges to the same (Gaussian) variable as $\xi_n(\xbf)$ \eqref{eq:asymptoticnormality}. 
With this result in hand, we can randomly sample $\HS$ and use the empirical distribution of $\xi_n^{\HS}(\xbf)$ to approximate the asymptotic distribution of $\xi_n(\xbf)$. Formalizing this result requires extending standard bootstrap arguments \citep{B2, kosorok2008introduction}. Specifically, Theorem~\ref{thm: asymptoticnormalityhalfsampling} shows that
\begin{align}\label{eq:asymptoticnormalityhalfsampling}
   \xi_{n}^{\HS}(\xbf) \xrightarrow[W]{D} N(0, \boldsymbol{\Sigma}_{\xbf}),
\end{align}
where $\xrightarrow[W]{D}$ denotes ``conditional convergence in distribution", characterized by the condition
\begin{align}\label{conditionalconvergence}
 \sup_{h \in \text{BL}_1(\H) }  \left |  \E[ h(\xi_{n}^{\HS}(\xbf)) \mid \Zcal_{n}] - \E[ h(\xi(\xbf)) ] \right | \stackrel{p}{\to} 0,
\end{align}
where $\text{BL}_1(\H)$ denotes the space of all bounded Lipschitz functions from $\H$ to $\R$ with Lipschitz constant bounded by 1 (i.e. $h \in \text{BL}_1(\H)$ satisfies $\sup_{f \in \H} |h(f)| \leq 1$ and $|h(f_1) - h(f_2) | \leq \| f_1 - f_2 \|_{\H}$ for all $f_1, f_2 \in \H$). 
Condition~\eqref{conditionalconvergence} implies that $\xi_{n}^{\HS}(\xbf)$ converges to $\xi(\xbf)$ in distribution in probability, conditionally on the data $\Zcal_{n}$. For details, see for example~\citep{B3};~\citep[Chapter 10]{kosorok2008introduction}.

\begin{restatable}{theorem}{asymptoticnormalityhalfsampling}\label{thm: asymptoticnormalityhalfsampling}
Assume conditions~\ref{forestass1}--\ref{forestass6CausalDRF},~\ref{forestass1starCausalDRF},~\ref{dataass1}--\ref{dataass7},~\ref{kernelass1}--\ref{kernelass4} and \ref{causalityass1}--\ref{causalityass2} hold. Then,~\eqref{eq:asymptoticnormalityhalfsampling} holds.
\end{restatable}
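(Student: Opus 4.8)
The plan is to show that, \emph{conditionally on the data} $\Zcal_{n}$, the half-sampling mechanism behaves like a Rademacher multiplier bootstrap of the forest, so that the conditional law of $\xi_n^{\HS}$ reproduces the unconditional Gaussian limit already established in Theorem~\ref{thm: asymptoticnormality}. I would build directly on the Hoeffding/Hájek decomposition used to prove that theorem, where $\htauk - \tauk$ is shown to be asymptotically equivalent, in $\H$-norm, to its first-order projection $\frac{1}{n}\sum_{i=1}^n \psi_i$, a normalized sum of i.i.d.\ $\H$-valued influence functions satisfying $\frac{1}{\sigma_n}\frac{1}{n}\sum_i \psi_i \stackrel{D}{\to} N(0,\boldsymbol{\Sigma}_{\xbf})$. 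The infinite half-sample forest is itself a U-statistic over the points in $\HS$, and since $|\HS|/n \to 1/2$ almost surely, its projection yields, to leading order, $\htaukHS - \tauk \approx \frac{2}{n}\sum_i U_i \psi_i$ with the \emph{same} influence functions. Subtracting the full-forest projection gives the key representation
\[
\htaukHS - \htauk \approx \frac{1}{n}\sum_{i=1}^n (2U_i - 1)\,\psi_i = \frac{1}{n}\sum_{i=1}^n \epsilon_i\,\psi_i,
\]
where $\epsilon_i = 2U_i - 1$ are i.i.d.\ Rademacher variables independent of the data. Thus $\xi_n^{\HS}$ is, up to a remainder, a multiplier-bootstrap version of the statistic of Theorem~\ref{thm: asymptoticnormality}, with the common normalization $\sigma_n$.

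The next step is a conditional central limit theorem in $\H$. Conditioning on $\Zcal_{n}$, the $\epsilon_i$ remain i.i.d.\ mean-zero with unit variance, so for each fixed $f \in \H$ the conditional variance of $\langle \xi_n^{\HS}, f\rangle$ equals $\frac{1}{\sigma_n^2 n^2}\sum_i \langle \psi_i, f\rangle^2$ (cross terms vanish by independence), which converges in probability over the data to $\langle \boldsymbol{\Sigma}_{\xbf} f, f\rangle$ by the law of large numbers together with the normalization of $\sigma_n$ established in the proof of Theorem~\ref{thm: asymptoticnormality}. A conditional Lindeberg condition follows from the boundedness of the kernel (\kernelass{1}) and the moment conditions \dataass{4}--\dataass{7}, giving $\langle \xi_n^{\HS}, f\rangle \mid \Zcal_{n} \to N(0, \langle \boldsymbol{\Sigma}_{\xbf} f, f\rangle)$ in probability for every $f$. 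To promote this to full conditional weak convergence I would verify conditional asymptotic tightness: because the influence functions lie in a fixed $\H$-ball (bounded kernel) and $\boldsymbol{\Sigma}_{\xbf}$ is trace class, the tail of the projected conditional variances along an orthonormal basis can be controlled uniformly in probability over the data. Combining finite-dimensional conditional convergence with conditional tightness then yields \eqref{conditionalconvergence} through the bounded-Lipschitz characterization, invoking the conditional weak-convergence machinery of \citet[Chapter~10]{kosorok2008introduction} and \citet{B3}.

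It remains to control the remainder, by showing that the $\H$-norm of the difference between $\xi_n^{\HS}$ and its multiplier linearization tends to $0$ in conditional probability (in probability over the data), reusing the rate bound \eqref{eq: rate} and the honesty/subsampling conditions (\forestass{5}). The main obstacle lies in the joint content of the linearization and this remainder bound: the splits and leaves of the half-sample forest are themselves functions of the $U_i$, so the inclusion indicators enter the estimator nonlinearly, and one must establish that this dependence is higher order, so that the clean Rademacher-multiplier representation holds with a conditionally negligible error. This is precisely where considerably more care is needed than in the two-forest analysis of \citet{näf2023confidence}, since the treatment and control influence contributions are entangled within a single forest through the split weights $\nu_{i,L}, \nu_{i,R}$. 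A secondary but essential subtlety is pinning down the normalization so that the conditional covariance converges to \emph{exactly} $\boldsymbol{\Sigma}_{\xbf}$, the same operator as in Theorem~\ref{thm: asymptoticnormality} — this is what makes the half-sampling a valid approximation of the sampling distribution, and it hinges on the $\mathrm{Bernoulli}(1/2)$ choice and the limit $|\HS|/n \to 1/2$ producing the variance-matching factor $\mathrm{Var}(\epsilon_i) = 1$.
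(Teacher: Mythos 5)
The paper does not actually write out a proof of this theorem: it states at the end of Appendix~\ref{appendix: proofs} that, once Theorem~\ref{thm: asymptoticnormality} and Theorem~\ref{thm:varianceassumption} are established, the result follows "using the same approach as in \citet{näf2023confidence}," i.e.\ the bootstrap-style argument via \citet[Chapter 10]{kosorok2008introduction} that you describe. Your route --- writing $\htaukHS - \htauk$ as a Rademacher-multiplier version $\tfrac{s_n}{n}\sum_i \epsilon_i \Gamma_n(\Zbf_i)$ of the shared H\'ajek projection, applying a conditional multiplier CLT in $\H$ with finite-dimensional convergence plus conditional tightness, and disposing of the nonlinear remainder (which in fact reduces to the unconditional $L^2$ bound of Claim~\eqref{asymptoticlin} applied over $\HS$ together with a conditional Markov argument) --- is essentially that same approach.
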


Theorem~\ref{thm: asymptoticnormalityhalfsampling} allows us to approximate the distribution of $\xi_n(\xbf)$ with the distribution of $\xi_n^{\HS}(\xbf)$. Fortunately, the distribution of $\xi_n^{\HS}(\xbf)$ can be approximated using the results of a single forest fit, as in \citep{athey2019generalized} and \citep{näf2023confidence}. To do so, we construct the forest by (1) sampling $B$ subsets $\HS_1, \ldots, \HS_B$ of $\{1,\ldots,n\}$, (2) fitting a Causal-DRF with $L$ trees and calculating the prediction $\htaukHSb$ within each half-sample $b=1,\ldots,B$, and (3) forming the final prediction $\htaukHSb$ by averaging the predictions of all the half-samples: $(\hat{\tau}_n^{\HS_b}(\xbf))_{b=1}^B$. Combining the continuous mapping theorem with the above results, it follows immediately that
\begin{align}\label{normconvergenceS}
    &\frac{1}{\sigma_{n}^2} \left \|  \htaukHS-\htauk\right \|_{\H}^2 \xrightarrow[W]{D} \|\xi \|^2,\nonumber\\
    &\ \     \frac{1}{\sigma_{n}^2} \left \|  \htauk-\tauk\right \|_{\H}^2 \stackrel{D}{\to} \|\xi \|^2,
\end{align}
with $\xi \sim N(0, \boldsymbol{\Sigma}_{\xbf})$. Thus if we consider $H_0\colon\PYgXz=\PYgXo $, then $\tauk=0$ and $\frac{1}{\sigma_{n}^2}\left \| \htauk  \right \|_{\H}^2$ has the same limiting distribution as its resampling bootstrap version
\begin{align}\label{eq:scaled_bootstrap-stat}
    \frac{1}{\sigma_{n}^2} \left \|  \htaukHS-\htauk\right \|_{\H}^2
\end{align}
given the data. Regardless of whether $\PYgXz=\PYgXo$ holds, we can approximate the distribution under $H_0$ by sampling from $\HS$. Consider choosing $c_{n, \alpha}$ to be the smallest value obtained from $B$ draws, with $B$ large enough to ensure that we have
\begin{align}\label{finitealphabound}
   \Prob\left(  \frac{1}{\sigma_{n}^2} \left \|  \htaukHS-\htauk\right \|_{\H}^2 >c_{n, \alpha} \mid \Zcal_{n}  \right) \leq \alpha.
\end{align}
The associated test $\phi(\Zcal_{n})$ is given by
\begin{equation}\label{Test2}
  \phi(\Zcal_{n})=\1 \left \{\frac{1}{\sigma_{n}^2}\left \| \htauk \right \|_{\H}^2 > c_{n, \alpha} \right \}.   
\end{equation}
Note that \eqref{Test2} is a slight variation of \eqref{Test}, better suited for theoretical analysis. That is, \eqref{Test2} uses the theoretical version of $\htauk$, with $N\to \infty$ and similarly the quantile $q_{n,\alpha}$ replaced by $\sigma_{n}^2 c_{n, \alpha}$, with $c_{n, \alpha}$ obtained as in \eqref{finitealphabound}. Though $\sigma_{n}^2$ could be estimated as well, we directly consider the resampled unscaled statistic~\eqref{unscaledresampledstats} and identify its $1-\alpha$ quantile, which corresponds to $\sigma_{n}^2 c_{n, \alpha}$. The following theorem shows that $\phi$ is of level $\alpha$ with power converging to $1$.  

\begin{restatable}{theorem}{powerprop}\label{powerprop}
Assume conditions~\ref{forestass1}--\ref{forestass6CausalDRF},~\ref{forestass1starCausalDRF},~\ref{dataass1}--\ref{dataass7},~\ref{kernelass1}--\ref{kernelass4} and \ref{causalityass1}--\ref{causalityass2} hold. Then, as $n \to \infty$,
\begin{itemize}
    \item[(i)] $\phi$ has a valid type-I error. That is, if $\PYgXz=\PYgXo$,
    \[
    \limsup_{n \to \infty} \Prob\left(\frac{1}{\sigma_{n}^2}\left \| \htauk  \right \|_{\H}^2 > c_{n, \alpha}\right) \leq \alpha.
    \]
    \item[(ii)] $\phi$ has power going to 1. That is, if $\PYgXz \neq \PYgXo$,
    \[
    \lim_{n \to \infty} \Prob\left(\frac{1}{\sigma_{n}^2}\left \| \htauk  \right \|_{\H}^2 > c_{n, \alpha}\right) = 1.
    \]
\end{itemize}
\end{restatable}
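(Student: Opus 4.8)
The plan is to leverage the two limit results already established, namely the distributional convergence \eqref{normconvergence} of the centered statistic and the conditional (bootstrap) convergence \eqref{normconvergenceS}, and to combine them through a bootstrap-consistency argument controlling the data-dependent threshold $c_{n,\alpha}$. The central object is the law of $\|\xi\|_\H^2$ with $\xi \sim N(0,\boldsymbol{\Sigma}_{\xbf})$: writing $\{\lambda_j\}$ for the eigenvalues of the self-adjoint HS operator $\boldsymbol{\Sigma}_{\xbf}$, one has the weighted-chi-square representation $\|\xi\|_\H^2 \stackrel{D}{=} \sum_j \lambda_j Z_j^2$ for i.i.d.\ standard normals $Z_j$. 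Assumption \ref{dataass4} guarantees $\boldsymbol{\Sigma}_{\xbf}\neq 0$, so at least one $\lambda_j>0$ and the distribution function $G(t)=\Prob(\|\xi\|_\H^2\le t)$ is continuous and strictly increasing on its support; let $c_\alpha=G^{-1}(1-\alpha)$ be its $1-\alpha$ quantile, a finite positive constant.

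First I would establish that the critical value converges, $c_{n,\alpha}\stackrel{p}{\to}c_\alpha$. By \eqref{normconvergenceS} the conditional law of the bootstrap statistic $\sigma_n^{-2}\|\htaukHS-\htauk\|_\H^2$ converges to $G$ in probability in the bounded-Lipschitz metric; since $G$ is continuous this upgrades to convergence in probability of the conditional CDFs at every fixed argument. As $c_{n,\alpha}$ is by construction the conditional $1-\alpha$ quantile satisfying \eqref{finitealphabound}, the standard quantile-continuity lemma for the bootstrap (see \citet[Chapters 10, 23]{kosorok2008introduction}), applied at the continuity and strict-monotonicity point $c_\alpha$, yields $c_{n,\alpha}\stackrel{p}{\to}c_\alpha$. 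For part (i), under $H_0$ the Gaussian kernel is characteristic, so $\tauk=0$ and the test statistic equals $\sigma_n^{-2}\|\htauk-\tauk\|_\H^2$, which converges in distribution to $\|\xi\|_\H^2$ by \eqref{normconvergence}. Combining this with $c_{n,\alpha}\stackrel{p}{\to}c_\alpha$ via a Slutsky argument and continuity of $G$ at $c_\alpha$ gives $\Prob(\sigma_n^{-2}\|\htauk\|_\H^2>c_{n,\alpha})\to\Prob(\|\xi\|_\H^2>c_\alpha)=\alpha$, so in particular the $\limsup$ is at most $\alpha$.

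For part (ii), I would use that when $\PYgXz\neq\PYgXo$ the characteristic property forces $\|\tauk\|_\H>0$. The consistency rate \eqref{eq: rate} gives $\htauk\stackrel{p}{\to}\tauk$ in $\H$, hence $\|\htauk\|_\H^2\stackrel{p}{\to}\|\tauk\|_\H^2>0$, while $\sigma_n\to 0$; therefore the test statistic $\sigma_n^{-2}\|\htauk\|_\H^2\stackrel{p}{\to}\infty$. Since Theorem~\ref{thm: asymptoticnormalityhalfsampling} is proved under the general assumptions, irrespective of whether $H_0$ holds, the same quantile argument shows $c_{n,\alpha}$ remains bounded in probability (converging to the finite $c_\alpha$ attached to the relevant $\boldsymbol{\Sigma}_{\xbf}$). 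A statistic diverging to $+\infty$ eventually exceeds any bounded threshold, so $\Prob(\sigma_n^{-2}\|\htauk\|_\H^2>c_{n,\alpha})\to 1$.

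The main obstacle I anticipate is the rigorous passage from the conditional convergence in distribution \eqref{conditionalconvergence} to convergence of the conditional quantiles $c_{n,\alpha}\stackrel{p}{\to}c_\alpha$. This rests on (a) verifying that $G$ is continuous and strictly increasing at $c_\alpha$, which follows from the non-degeneracy of $\boldsymbol{\Sigma}_{\xbf}$ supplied by \ref{dataass4} together with the weighted-chi-square representation, and (b) adapting the deterministic quantile-continuity lemma to the ``in probability'' conditional setting, since the conditional CDFs are themselves random. A further delicate point is that \eqref{finitealphabound} defines $c_{n,\alpha}$ conservatively through an inequality rather than as an exact quantile; some extra care is needed to confirm the limit is exactly $c_\alpha$ in part (i), while for part (ii) only boundedness of $c_{n,\alpha}$ is required and the argument is correspondingly simpler.
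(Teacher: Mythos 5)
Your argument is correct and is essentially the same route the paper takes: the paper omits the proof of Theorem~\ref{powerprop}, deferring to the analogous result in \citet{näf2023confidence}, whose proof is exactly the combination of (a) convergence of the bootstrap critical value $c_{n,\alpha}$ to the $1-\alpha$ quantile of $\|\xi\|_{\H}^2$ via \eqref{normconvergenceS} and a quantile-continuity lemma, (b) Slutsky plus \eqref{normconvergence} under $H_0$ (where $\tauk=0$), and (c) divergence of $\sigma_n^{-2}\|\htauk\|_{\H}^2$ under the alternative using the $\int$spd kernel \ref{kernelass4}, consistency \eqref{eq: rate}, and $\sigma_n\to 0$ against a critical value that remains $O_p(1)$. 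Your flagged caveats (non-degeneracy of $\boldsymbol{\Sigma}_{\xbf}$ from \ref{dataass4}/\ref{dataass6}, and the conservative definition of $c_{n,\alpha}$ in \eqref{finitealphabound} sufficing for a $\limsup$ bound) are exactly the right points of care and do not create a gap.
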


A confidence band for the conditional witness function $\ybf\mapsto \tauk(\ybf)$ that is jointly valid for all $\ybf$-values can be obtained straightforwardly by inverting the testing procedure.
Let $c_{n, \alpha}$ be as in~\eqref{finitealphabound}. We propose constructing an interval given by
\begin{align}\label{CIband}
    & \bandybf=\nonumber\\
    & [\htauk(\ybf)-\sqrt{c_{n,\alpha} C}\sigma_{n}, \htauk(\ybf)+\sqrt{c_{n,\alpha} C}\sigma_{n}],
\end{align}
where $C=\sup_{\ybf} k(\ybf,\ybf)$. Note that 
the boundedness of the reproducing kernel ensures that $C$ is finite; see Assumption~\ref{kernelass2}. In particular, $C = 1$ for the Gaussian kernel. In the following theorem we show that $\bandybf$ is a uniform $1-\alpha$ confidence band for the conditional witness function as a function of $\ybf$. 
\begin{restatable}{theorem}{coverprop}\label{coverprop}
Assume conditions~\ref{forestass1}--\ref{forestass5},~\ref{dataass1}--\ref{dataass7},~\ref{kernelass1}--\ref{kernelass4} and \ref{causalityass1}--\ref{causalityass2} hold. Then, as $n \to \infty$, 
\begin{align}\label{CIguarantee}
  \liminf_{n \to\infty}  \Prob \left( \bigcap_{\ybf} \left\{ \tauk(\ybf) \in   \bandybf \right\} \right) \geq 1-\alpha.
\end{align}
\end{restatable}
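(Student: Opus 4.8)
The plan is to exploit the duality between the test $\phi$ in \eqref{Test2} and the proposed band, reducing the uniform-in-$\ybf$ coverage statement to a single $\H$-norm statement that is already controlled in Theorem~\ref{powerprop}(i). The crucial ingredient is the reproducing property of $\H$, which converts a supremum over $\ybf$ into the $\H$-norm of the estimation error.

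First I would record that both $\tauk$ and $\htauk$ lie in $\H$ (the former as a Bochner integral, the latter by the finite representation~\eqref{eq: weightrepresentation}), so that for every $\ybf \in \R^d$ the reproducing property gives $\htauk(\ybf) - \tauk(\ybf) = \langle \htauk - \tauk, k(\ybf, \cdot)\rangle$. Cauchy--Schwarz together with $\norm{k(\ybf,\cdot)}_{\H} = \sqrt{k(\ybf,\ybf)} \leq \sqrt{C}$ then yields the uniform bound
\begin{align}
 \sup_{\ybf}\, \bigl| \htauk(\ybf) - \tauk(\ybf) \bigr| \leq \sqrt{C}\, \norm{\htauk - \tauk}_{\H}.
\end{align}
Since $\tauk(\ybf) \in \bandybf$ is equivalent to $|\htauk(\ybf) - \tauk(\ybf)| \leq \sqrt{c_{n,\alpha} C}\,\sigma_{n}$, this bound shows that the event $\bigcap_{\ybf}\{\tauk(\ybf) \in \bandybf\}$ is implied by $\{\sqrt{C}\,\norm{\htauk-\tauk}_{\H} \leq \sqrt{c_{n,\alpha}C}\,\sigma_{n}\}$; the factors $\sqrt{C}$ cancel, leaving the single-norm event $\{\sigma_{n}^{-2}\norm{\htauk-\tauk}_{\H}^2 \leq c_{n,\alpha}\}$. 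Consequently,
\begin{align}
 \Prob\left( \bigcap_{\ybf}\{\tauk(\ybf) \in \bandybf\} \right) \geq 1 - \Prob\left( \frac{1}{\sigma_{n}^2}\norm{\htauk - \tauk}_{\H}^2 > c_{n,\alpha} \right),
\end{align}
and it remains to show that the subtracted probability has $\limsup$ at most $\alpha$.

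This last step is exactly the type-I error computation of Theorem~\ref{powerprop}(i), except that the statistic is centered at the true $\tauk$ rather than at $0$. By~\eqref{normconvergence} the centered statistic satisfies $\sigma_{n}^{-2}\norm{\htauk-\tauk}_{\H}^2 \stackrel{D}{\to} \norm{\xi}^2$, while by~\eqref{normconvergenceS} the bootstrap statistic used to define $c_{n,\alpha}$ converges conditionally to the same limit $\norm{\xi}^2$. Because the resampling scheme targets the law of the error $\htauk-\tauk$, and not of $\htauk$ itself, this alignment holds whether or not $\PYgXz = \PYgXo$, which is precisely what is needed for a confidence band (as opposed to a test under $H_0$).

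The main obstacle is the quantile-matching step: I must argue that the bootstrap quantile $c_{n,\alpha}$ and the $(1-\alpha)$-quantile of the limit $\norm{\xi}^2$ align asymptotically. This rests on (i) the conditional convergence~\eqref{normconvergenceS}, (ii) the unconditional convergence~\eqref{normconvergence}, and (iii) continuity of the law of $\norm{\xi}^2$ at its $(1-\alpha)$-quantile. Point (iii) holds because $\xi \sim N(0,\boldsymbol{\Sigma}_{\xbf})$ is a nondegenerate Gaussian in $\H$, so $\norm{\xi}^2$ is a weighted sum of independent $\chi^2_1$ variables (with weights the eigenvalues of $\boldsymbol{\Sigma}_{\xbf}$) and hence has a continuous, strictly increasing distribution function on its support. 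Combining these via the standard bootstrap-quantile argument already carried out for Theorem~\ref{powerprop}(i) gives $\limsup_n \Prob(\sigma_{n}^{-2}\norm{\htauk-\tauk}_{\H}^2 > c_{n,\alpha}) \leq \alpha$, whence the claimed coverage~\eqref{CIguarantee} follows.
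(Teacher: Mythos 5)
Your proposal is correct and follows exactly the route the paper intends: the band \eqref{CIband} is obtained "by inverting the testing procedure," and the paper (deferring the details to the analogous argument in N\"af and Susmann's earlier work) relies on precisely your chain of reproducing property plus Cauchy--Schwarz to reduce the uniform-in-$\ybf$ event to $\{\sigma_n^{-2}\norm{\htauk-\tauk}_{\H}^2 \leq c_{n,\alpha}\}$, followed by the quantile-matching argument via \eqref{normconvergence}, \eqref{normconvergenceS}, and continuity of the law of $\norm{\xi}^2$. The only caveat is that your argument (like the paper's) invokes Theorems~\ref{thm: asymptoticnormality} and~\ref{thm: asymptoticnormalityhalfsampling}, which require the additional assumptions \ref{forestass6CausalDRF} and \ref{forestass1starCausalDRF} omitted from the statement of Theorem~\ref{coverprop}; this appears to be an inconsistency in the paper's assumption list rather than a gap in your proof.
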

The confidence band in \eqref{CIband} is adapted analogously from the version in \eqref{CB}. We note that this band has uniform coverage in $\ybf$, thus the reason why any crossing with zero is a rejection, but pointwise coverage in $\xbf$. Thus, the test will not be valid over different test point $\xbf$ without a correction for multiple testing.

\section{Empirical Study: Impact of 401(k) Eligibility}\label{sec:Experimental}

We study the impact of 401(k) eligibility on wealth, based on a sample of households from Wave 4 of the 1990 Survey of Income and Program Participation (SIPP).
This dataset was used in several studies of the relationship between 401(k) plans and wealth, e.g., \citep{Benjamin401k, 401KData, GRFDistributionalCausalEffects}, and includes 9,915 observations. The variables of interest $\Ybf$ are three free wealth measures: ``Net Financial Assets'', ``Net Non-401(k) Financial Assets'' and ``Total Wealth''. Following \citep{Benjamin401k, 401KData, GRFDistributionalCausalEffects}, we consider 9 covariates $\Xbf$ that include age, income, education, family size, marital status, and IRA participation, as detailed in Table~\ref{table:variables} in Appendix~\ref{sec:401(k)table}. Finally, the treatment $W$ is taken to be 401(k) eligibility. While 401(k) eligibility is not randomly assigned, prior studies (see, e.g., \citep{401KData, GRFDistributionalCausalEffects} and the references therein) have commonly assumed unconfoundedness conditional on these nine covariates, supported by subject-matter expertise. The three wealth measures we consider are the same as in \citep{401KData}. While they use separate quantile regressions, we utilize the ability of Causal-DRF to estimate the CKTE of the three wealth measures jointly. For this reason, after taking the first 10 households as a test set, we train one Causal-DRF on the remaining 9,905 observations, taking the three variables of financial wealth as one response vector $\Ybf$ in $\R^3$. As one would expect, the three measures are highly skewed. We begin by scaling the three measures of wealth by subtracting the mean and dividing by the standard deviation. We then fit Causal-DRF with $N=4,000$ trees with $100$ trees per group ($B=40$). We focus on two test points $\xbf_1$, $\xbf_2$ out of the test set of 10 households. The two households were chosen to be relatively dissimilar to each other; the reference persons in each household have an age gap of 19 years and the two households have a large difference in family size and income. A more detailed description of the characteristics of the two households is given in the captions of Figures \ref{fig:401KResults1} and \ref{fig:401KResults2}.
Using the test statistic in \eqref{Test}, a test of $H_0: \Prob_{\Ybf \mid \Xbf=\xbf_j}^0=\Prob_{\Ybf \mid \Xbf=\xbf_j}^1$ is rejected for both points $j=1,2$. This indicates a significant impact of eligibility on wealth for these two points, though we did not account for multiple testing. To better understand the effect, Figures \ref{fig:401KResults1} and \ref{fig:401KResults2} show the (univariate) witness function for Net Financial Assets ($Y_1$), Net Non-401(k) Financial Assets ($Y_2$) and Total Wealth ($Y_3$). First, the significant result appears to stem mostly from the influence of eligibility on Net Financial Assets, while the other two wealth measures appear not significantly impacted by eligibility. This is in particular true for the older household, where there is barely an effect of eligibility on Net Non-401(k) Financial Assets. We thus focus on the effect on Net Financial Assets and Total Wealth. For Net Financial Assets the negative values for small amounts of assets indicate that we are more likely to obtain small values for people who are not eligible for 401(k) (the control group). On the other hand, high values of Net Financial Assets are more likely to come from people eligible for 401(k). The picture is not perfectly symmetric, which indicates a higher skewness of the distribution for $W=1$. As skewness in wealth distributions is usually oriented towards higher values, this is not entirely surprising. Interestingly, though there is a slightly different magnitude, the picture for Net Financial Assets is roughly the same for both test points. Though these are only two test households, it might be that for this variable alone, there is less heterogeneity over different households. On the other side of the spectrum, the picture for Total Wealth of the two test points is quite different. While for the older household, the witness function looks similar to the one of Net Financial assets, albeit with a higher skewness in the eligible group, Total wealth for the younger household in Figure~\ref{fig:401KResults1} has another dip in the witness function for higher values of income. This indicates there is a certain spectrum of higher incomes where it is more likely to observe people who are not eligible. 
These discussions should illustrate the more comprehensive analysis of the treatment effect that are possible with Causal-DRF.

\begin{figure}[t]
    \begin{minipage}{0.48\textwidth}
        \centering
        \includegraphics[width=0.95\linewidth]{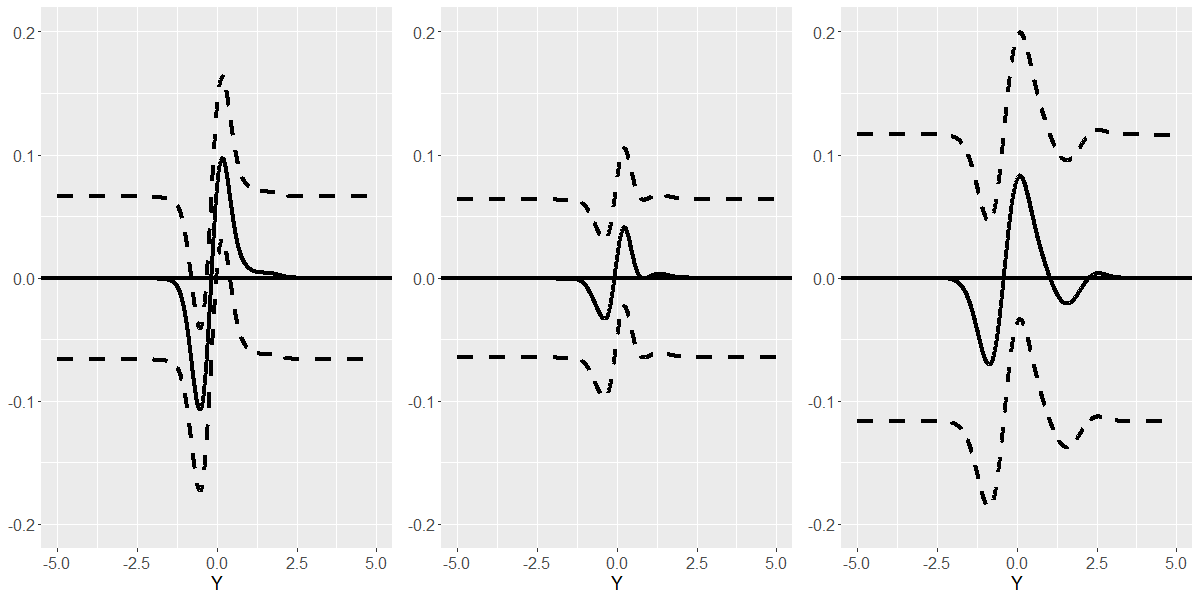}
        \caption{Witness functions with confidence bands for Net Financial Assets (Left), Net Non-401(k) Financial Assets (Middle), and Total Wealth (Right) for a person of age 31, 28'146 income, family size of 5, 12 years of education, married, single earner, homeowner, no defined benefit pension status, and no IRA participation. }
        \label{fig:401KResults1}
    \end{minipage}
    \hfill
    \begin{minipage}{0.48\textwidth}
        \centering
        \includegraphics[width=0.95\linewidth]{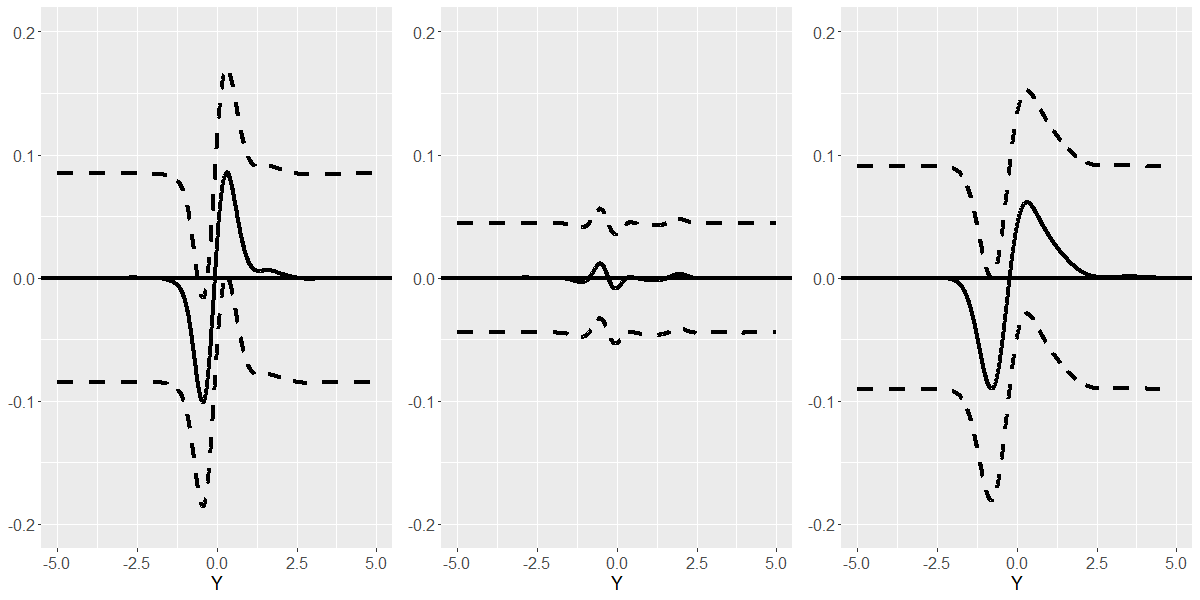}
        \caption{Witness functions with confidence bands for Net Financial Assets (Left), Net Non-401(k) Financial Assets (Middle), and Total Wealth (Right) for a person of age 50, 5'766 income, family size of 1, 14 years of education, unmarried, single earner, no defined benefit pension status, and IRA participation.}
        \label{fig:401KResults2}
    \end{minipage}
\end{figure}




\section{Conclusion}

In this paper, we introduced Causal-DRF, a forest-based method to obtain the conditional kernel treatment effect under the assumption of no confounding and overlap. We discussed how to estimate the CKTE using a single estimated forest fit. On the theoretical side, we derive the consistency and asymptotic normality of the Hilbert space-valued estimate for fixed $\xbf$. A computationally efficient method to approximate the asymptotic sampling distribution was also proposed. The latter was used to build tests of $H_0\colon\PYgXz=\PYgXo$ and confidence bands around $\tau(\xbf)(\ybf)$. We also discussed how the shared-partition, one-fit design differs from using two separate kernel mean embedding estimates, and why it can be beneficial when the two groups share learnable structure.

Our work also motivates several interesting research directions. First, while we obtain elegant asymptotically valid uncertainty quantification for fixed $\xbf$, uniform in $\ybf$, we lack uniform guarantees over $\xbf$. The same limitation holds for causal random forests. As such, when testing $H_0\colon\PYgXz=\PYgXo$ for several test points $\xbf$ one quickly encounters a multiple testing issue. A uniform bound in $\xbf$ would allow for a test of $H_0\colon \Prob_{\Ybf \mid \Xbf}^{0}=\Prob_{\Ybf \mid \Xbf}^{1}$ with asymptotic guarantees. This is already possible, in the sense that rejecting $H_0\colon\PYgXz=\PYgXo$ for a test point $\xbf$ implies that $\Prob_{\Ybf \mid \Xbf}^{0}=\Prob_{\Ybf \mid \Xbf}^{1}$ does not hold. On the other hand, even if $\PYgXz=\PYgXo$ holds, it could be because they are the same or because we chose a test point $\xbf$ where they coincide. Second, since we were mainly interested in $\tau_k(\xbf)$, we did not discuss potential estimators that could be derived from the estimate $\htauk$, such as the CATE, and what guarantees could be derived for this estimator. Finally, we note that the methodology might also be used to study the difference in two conditional distributions more generally, not necessarily with a causal interpretation.

\subsubsection*{Acknowledgements}
This work is part of the DIGPHAT project which was supported by a grant from the French government, managed by the National Research Agency (ANR), under the France 2030 program, with reference ANR-22-PESN-0017. The computational requirements for this work were supported in part by the NYU Langone High Performance Computing (HPC) Core's resources and personnel. JP was supported by SNSF Grant 218343.

\bibliography{bibfile}
\bibliographystyle{abbrvnat}

\section*{Checklist}

\begin{enumerate}

  \item For all models and algorithms presented, check if you include:
  \begin{enumerate}
    \item A clear description of the mathematical setting, assumptions, algorithm, and/or model. [Yes] See Section 3 and Appedix C.
    \item An analysis of the properties and complexity (time, space, sample size) of any algorithm. [Yes] Appendix C.
    \item (Optional) Anonymized source code, with specification of all dependencies, including external libraries. [Yes/No/Not Applicable]
  \end{enumerate}

  \item For any theoretical claim, check if you include:
  \begin{enumerate}
    \item Statements of the full set of assumptions of all theoretical results. [Yes] See Section 4. 
    \item Complete proofs of all theoretical results. [Yes] See Appendices D and E. 
    \item Clear explanations of any assumptions. [Yes] See Section 4 and Appendix D.      
  \end{enumerate}

  \item For all figures and tables that present empirical results, check if you include:
  \begin{enumerate}
    \item The code, data, and instructions needed to reproduce the main experimental results (either in the supplemental material or as a URL). [Yes/No/Not Applicable]
    \item All the training details (e.g., data splits, hyperparameters, how they were chosen). [Yes] See Section 5 and Appendices A and B. 
    \item A clear definition of the specific measure or statistics and error bars (e.g., with respect to the random seed after running experiments multiple times). [Yes] See Section 5 and Appendices A and B. 
    \item A description of the computing infrastructure used. (e.g., type of GPUs, internal cluster, or cloud provider). [Not Applicable]
  \end{enumerate}

  \item If you are using existing assets (e.g., code, data, models) or curating/releasing new assets, check if you include:
  \begin{enumerate}
    \item Citations of the creator If your work uses existing assets. [Not Applicable]
    \item The license information of the assets, if applicable. [Not Applicable]
    \item New assets either in the supplemental material or as a URL, if applicable. [Not Applicable]
    \item Information about consent from data providers/curators. [Not Applicable]
    \item Discussion of sensible content if applicable, e.g., personally identifiable information or offensive content. [Not Applicable]
  \end{enumerate}

  \item If you used crowdsourcing or conducted research with human subjects, check if you include:
  \begin{enumerate}
    \item The full text of instructions given to participants and screenshots. [Not Applicable]
    \item Descriptions of potential participant risks, with links to Institutional Review Board (IRB) approvals if applicable. [Not Applicable]
    \item The estimated hourly wage paid to participants and the total amount spent on participant compensation. [Not Applicable]
  \end{enumerate}

\end{enumerate}

\clearpage
\appendix
\thispagestyle{empty}

\onecolumn
\aistatstitle{Supplementary Materials}
\section{Variables used in the 401(k) eligibility study}\label{sec:401(k)table}
In table~\ref{table:variables}, we summarize the covariates used in the 401(k) eligibility study, investigated in Section~\ref{sec:Experimental}. 
\begin{table}[h]
\centering
\begin{tabular}{lll}
\hline
\textbf{Name} & \textbf{Description} & \textbf{Type} \\
\hline
inc & income & continuous \\
age & age & continuous \\
hown & home ownership & binary \\
db & defined benefit pension status & binary \\
pira & IRA participation & binary \\
educ & years of completed education & continuous \\
fsize & family size & continuous \\
two\_earn & whether dual-earning household & binary \\
marr & marital status & binary \\
\hline
\end{tabular}
\caption{Description of Variables in $\Xbf$. 
}
\label{table:variables}
\end{table}

\section{Simulation Study}\label{Sec_Simstudy}
Inspired by \citep{wager2018estimation,DRF-paper, näf2023confidence} we define a general simulation framework which we use to define data generating processes that combine conditional treatment effects and confounding.
Let $p = 5$ be the dimension of the baseline covariates. The generic variable $O = (\Xbf, W, Y)$ is drawn according to the following laws:
\begin{align*}
    \mathbf{X} &\sim \mathrm{Unif}(0, 1)^K, \\
    W | \mathbf{X} &\sim \mathrm{Bernoulli}(e(\mathbf{X})), \\
    Y | W, \mathbf{X} &\sim \mathrm{Normal}(2X_3 - 1 + (W - 0.5) t(\Xbf), 1),
\end{align*}
where $e$ is a function defining the probability of treatment and $t$ defines the conditional treatment effect. Next we define four regimes, with the choices of $e$ and $t$ reflecting \cite{näf2023confidence}:
\begin{enumerate}
    \item \label{simulation:nothing} No confounding and no conditional treatment effect: $e(\mathbf{X}) = 0.5$ and $t(\mathbf{X}) = 0$.
    \item \label{simulation:confounding} Confounding: $e(\mathbf{X}) = 0.25(1 + \beta_{2,4}(X_3))$, where $\beta_{a,b}$ is the density of the Beta distribution with parameters $a$ and $b$, and $t(\mathbf{X}) = 0$.
    \item \label{simulation:effect} Effect: $e(\mathbf{X}) = 0.5$ and $t(\mathbf{X}) = \eta(X_1) \cdot \eta(X_2)$, where $\eta(x) = 1 + (1 + \exp(-20(x - 1/3)))^{-1}$.
    \item \label{simulation:both} Confounding and effect: $e(\mathbf{X})$ is defined as in the confounding regime and $t(\mathbf{X})$ is defined as in the effect regime.
\end{enumerate}
In addition, we investigated an alternative outcome model in which there was no shared structure between treatment arms, and the errors were heavy-tailed:
\begin{align}
    \label{eq:heavy-tailed-response}
    Y \mid W,  \mathbf{X} &= 2X_3 - 1 - 0.5 \eta(X_1) \cdot \eta(X_2) + \epsilon, \text{ when } W = 0, \nonumber \\
    Y \mid W, \mathbf{X} &= X_3^2 - 2X_1 \cdot X_2 + 2\epsilon, \text{ when } W = 1, \nonumber\\
    \text{and } \epsilon &\sim \mathrm{StudentT}(3),
\end{align}
where $\mathrm{StudentT}(3)$ denotes Student's $t$ distribution with $3$ degrees of freedom. The alternative outcome model was used to define alternative versions of simulation regimes \ref{simulation:effect} and \ref{simulation:both}.

The simulation datasets were constructed by drawing $n$ iid copies of $O$ from one of the above regimes. We present empirical results based on 500 simulation datasets constructed for each regime and for $n \in \{ 250, 500, 1000, 5000 \}$. Causal DRF was run with $N=2500$ trees with $50$ trees per group ($B=50$). As a comparison, we apply the method described by \citep{näf2023confidence} as a benchmark in which two DRFs are fit separately to treatment and control observations, again each with $2500$ trees and $50$ trees per group. 

Both methods were used to estimate the conditional witness function evaluated at a test point $(0.7, 0.3, 0.5, 0.68, 0.43)^T$. In addition, we draw $8'000$ observations from the conditional distribution $\Ybf \mid \Xbf=\xbf$ to obtain an approximation to the true underlying witness function $\tauk$. The mean absolute error 
\begin{align*}
    \text{MAE}= \frac{1}{n} \sum_{i=1}^{n} |\htauk(\Ybf_i) -  \tauk(\Ybf_i)|
\end{align*}
and the empirical coverage of the estimated 95\% confidence bands were then calculated across all simulations for each sample size. The results are shown in Table~\ref{tab:simulation_results}. For the case of the shared-strcture, light tail outcome model, both Causal-DRF and the benchmark method achieve similar MAE across Regime \ref{simulation:nothing}, although under the confounding regime (Regime \ref{simulation:confounding}), fitting two DRFs appears to be more accurate in terms of MAE. Although this is surprising, the reverse is true as soon as there is an effect, such that in Regimes \ref{simulation:effect} and \ref{simulation:both}, the MAE is up to 30\% lower for Causal-DRF. Crucially, while the empirical coverage of both methods tends to be conservative in scenarios without a treatment effect (Regimes \ref{simulation:nothing} and \ref{simulation:confounding}), in regimes with a treatment effect (Regimes \ref{simulation:effect} and \ref{simulation:both}), Causal-DRF has near-optimal 95\% empirical coverage at all sample sizes considered, while the benchmark method significantly undercovers for sample sizes less than 5000. Thus, while there is a small loss in accuracy for the two cases without effect, Causal-DRF provides accurate coverage for all cases and even for small sample sizes. Figure \ref{fig:simulationIllustration} provides intuition for these results in the case of $n=1000$ and with confounding and treatment effect. It can be seen that the witness function estimated by fitting two DRFs, though also relatively accurate, is further away from the true line in gray overall. In addition, the confidence bands are narrower. This combination of less accurate estimation and narrower bands leads to a higher likelihood of the confidence bands not capturing the true witness function.

For the heavy-tailed case with no shared structure in \eqref{eq:heavy-tailed-response}, DRF somewhat outperforms Causal-DRF, as one might expect. However, the difference is relatively small in terms of MAE and mostly leads to an overcoverage of Causal-DRF, while the coverage for DRF is very close to the nominal level.

The performance of Causal-DRF with regard to the CATE depended on the outcome model. For the settings with a null outcome model, Causal-DRF exhibited performance comparable to Causal-RF. Interestingly, Causal-DRF struggled in the first (Gaussian) outcome model setting for CATE estimation, exhibiting poor coverage and higher MAE than Causal-RF; however, it performed better than Causal-RF for the second (heavy-tailed) outcome model.

\begin{landscape}
\begin{table}[t]
    \centering
    \begin{tabular}{|lrrrr|rrrrrr|}
    \hline
    & \multicolumn{4}{c|}{Witness Function} & \multicolumn{6}{c|}{CATE}  \\
    & \multicolumn{2}{c}{MAE} & \multicolumn{2}{c|}{95\% Coverage} & \multicolumn{3}{c}{MAE} & \multicolumn{3}{c|}{95\% Coverage} \\
    $n$ & DRF & Causal-DRF & DRF & Causal-DRF & DRF & Causal-DRF & Causal-RF & DRF & Causal-DRF & Causal-RF \\
    \hline
    \multicolumn{5}{|l|}{} & \multicolumn{6}{l|}{} \\
    \multicolumn{5}{|l|}{no confounding, no effect (\ref{simulation:nothing})} & \multicolumn{6}{l|}{} \\
    250 & 0.035 & 0.041 & 100.0\% & 100.0\% & 0.134 & 0.147 & 0.125 & 97.1\% & 97.0\% & 94.9\%\\
    500 & 0.030 & 0.033 & 99.4\% & 100.0\% & 0.120 & 0.125 & 0.103 & 97.5\% & 97.5\% & 94.8\%\\
    1000 & 0.027 & 0.029 & 100.0\% & 100.0\% & 0.103 & 0.108 & 0.089 & 97.8\% & 97.8\% & 94.8\%\\
    5000 & 0.019 & 0.022 & 100.0\% & 100.0\% & 0.072 & 0.083 & 0.065 & 98.9\% & 98.9\% & 97.1\%\\
    \multicolumn{5}{|l|}{} & \multicolumn{6}{l|}{} \\
    \multicolumn{5}{|l|}{confounding (\ref{simulation:confounding})} & \multicolumn{6}{l|}{}  \\
    250 & 0.058 & 0.060 & 94.8\% & 99.6\% & 0.298 & 0.224 & 0.136 & 71.3\% & 87.5\% & 94.7\%\\
      500 & 0.035 & 0.048 & 99.6\% & 99.2\% & 0.194 & 0.172 & 0.109 & 85.3\% & 92.3\% & 94.6\%\\
      1000 & 0.027 & 0.039 & 99.8\% & 100.0\% & 0.144 & 0.143 & 0.091 & 91.0\% & 94.3\% & 95.3\%\\
      5000 & 0.020 & 0.027 & 100.0\% & 100.0\% & 0.082 & 0.101 & 0.067 & 97.5\% & 97.7\% & 97.0\%\\
    \multicolumn{5}{|l|}{} & \multicolumn{6}{l|}{} \\
    \multicolumn{5}{|l|}{effect (\ref{simulation:effect})} & \multicolumn{6}{l|}{}  \\
    250 & 0.066 & 0.065 & 78.2\% & 97.0\% & 0.858 & 0.705 & 0.483 & 14.8\% & 27.2\% & 55.3\%\\
    500 & 0.060 & 0.059 & 80.2\% & 96.4\% & 0.726 & 0.643 & 0.307 & 16.9\% & 26.4\% & 74.6\%\\
    1000 & 0.054 & 0.053 & 84.4\% & 97.4\% & 0.594 & 0.590 & 0.214 & 23.4\% & 27.6\% & 82.0\%\\
    5000 & 0.040 & 0.041 & 95.8\% & 96.0\% & 0.270 & 0.484 & 0.118 & 59.9\% & 33.1\% & 90.7\%\\
    \multicolumn{5}{|l|}{} & \multicolumn{6}{l|}{} \\
    \multicolumn{5}{|l|}{confounding and effect (\ref{simulation:both})} & \multicolumn{6}{l|}{} \\
    250 & 0.072 & 0.070 & 77.6\% & 97.4\% & 0.851 & 0.748 & 0.485 & 27.3\% & 31.9\% & 57.6\%\\
     500 & 0.062 & 0.061 & 86.0\% & 96.8\% & 0.709 & 0.688 & 0.315 & 28.7\% & 30.0\% & 74.7\%\\
     1000 & 0.055 & 0.053 & 91.8\% & 96.8\% & 0.583 & 0.641 & 0.220 & 32.6\% & 29.1\% & 82.7\%\\
     5000 & 0.039 & 0.037 & 97.6\% & 99.2\% & 0.279 & 0.529 & 0.124 & 56.8\% & 32.8\% & 90.4\%\\
     \multicolumn{5}{|l|}{} & \multicolumn{6}{l|}{} \\
    \multicolumn{5}{|l|}{effect (\ref{simulation:effect}), heavy-tailed (\ref{eq:heavy-tailed-response})} & \multicolumn{6}{l|}{}  \\
    250 & 0.043 & 0.048 & 97.6\% & 99.2\% & 0.410 & 0.437 & 0.435 & 86.9\% & 92.7\% & 86.9\%\\
      500 & 0.039 & 0.042 & 98.8\% & 99.6\% & 0.325 & 0.376 & 0.370 & 91.1\% & 93.9\% & 86.8\%\\
      1000 & 0.033 & 0.037 & 99.0\% & 100.0\% & 0.270 & 0.331 & 0.321 & 93.4\% & 94.7\% & 87.5\%\\
      5000 & 0.026 & 0.031 & 99.8\% & 99.6\% & 0.203 & 0.263 & 0.246 & 96.5\% & 96.8\% & 91.1\%\\
    \multicolumn{5}{|l|}{} & \multicolumn{6}{l|}{} \\
    \multicolumn{5}{|l|}{confounding and effect (\ref{simulation:both}), heavy-tailed (\ref{eq:heavy-tailed-response})} & \multicolumn{6}{l|}{} \\
    250 & 0.050 & 0.054 & 95.0\% & 99.2\% & 0.440 & 0.452 & 0.438 & 84.3\% & 93.8\% & 88.2\%\\
      500 & 0.041 & 0.044 & 97.2\% & 100.0\% & 0.346 & 0.396 & 0.378 & 88.6\% & 94.8\% & 87.6\%\\
      1000 & 0.034 & 0.038 & 99.0\% & 100.0\% & 0.296 & 0.355 & 0.331 & 91.2\% & 95.3\% & 87.6\%\\
      5000 & 0.027 & 0.032 & 99.2\% & 100.0\% & 0.217 & 0.290 & 0.251 & 95.8\% & 96.7\% & 90.7\%\\
     \hline
    \end{tabular}
    \caption{Simulation study results showing the Mean Absolute Error (MAE) and 95\% empirical coverage for the conditional witness function evaluated at the test point $(0.7, 0.3, 0.5, 0.68, 0.43)^{\top}$.}
    \label{tab:simulation_results}
\end{table}
\end{landscape}

\begin{table}[t]
    \centering
    \begin{tabular}{|ll|rrrrr|}
    \hline
    & & \multicolumn{5}{c|}{Runtime (seconds)} \\
    Simulation Regime & $n$ & \multicolumn{2}{c}{DRF} & \multicolumn{2}{c}{Causal-DRF} & Causal-RF \\
    \hline
    no confounding, no effect (\ref{simulation:nothing}) & 250 & 1.8 & (4.6) & 1.1 & (2.7) & 0.4\\
     & 500 & 3.1 & (4.5) & 2.6 & (3.8) & 0.7\\
     & 1000 & 10.2 & (7.0) & 6.3 & (4.4) & 1.4\\
     & 5000 & 219.4 & (21.7) & 60.4 & (6.0) & 10.1\\
    confounding (\ref{simulation:confounding}) & 250 & 1.9 & (4.5) & 1.1 & (2.5) & 0.4\\
     & 500 & 3.2 & (4.7) & 2.5 & (3.7) & 0.7\\
     & 1000 & 10.4 & (7.4) & 6.2 & (4.4) & 1.4\\
     & 5000 & 217.8 & (23.6) & 57.0 & (6.2) & 9.2\\
    effect (\ref{simulation:effect}) & 1.7 & (4.1) & 1.1 & (2.5) & 0.4\\
     & 500 & 3.3 & (4.9) & 2.6 & (3.9) & 0.7\\
     & 1000 & 10.2 & (7.5) & 6.6 & (4.8) & 1.4\\
     & 5000 & 214.8 & (23.2) & 57.5 & (6.2) & 9.2\\
    confounding and effect (\ref{simulation:both}) & 250 & 1.7 & (4.0) & 1.1 & (2.5) & 0.4\\
     & 500 & 3.4 & (4.7) & 2.5 & (3.6) & 0.7\\
     & 1000 & 10.3 & (7.3) & 6.5 & (4.6) & 1.4\\
     & 5000 & 268.9 & (26.5) & 66.7 & (6.6) & 10.1\\
    effect (\ref{simulation:effect}), heavy tailed \eqref{eq:heavy-tailed-response} & 
    250 & 1.8 & (4.0) & 1.1 & (2.5) & 0.4\\
     & 500 & 3.2 & (4.3) & 2.6 & (3.5) & 0.7\\
     & 1000 & 10.5 & (6.9) & 6.6 & (4.3) & 1.5\\
     & 5000 & 274.4 & (25.6) & 65.9 & (6.2) & 10.7\\
    confounding and effect (\ref{simulation:both}), heavy tailed \eqref{eq:heavy-tailed-response} & 250 & 1.7 & (3.9) & 1.1 & (2.6) & 0.4\\
 & 500 & 3.4 & (4.5) & 2.6 & (3.5) & 0.7\\
 & 1000 & 10.8 & (7.5) & 6.5 & (4.6) & 1.4\\
 & 5000 & 316.1 & (29.7) & 71.1 & (6.7) & 10.7\\

     \hline
    \end{tabular}
    \caption{Simulation study results comparing the average runtime in seconds of each algorithm. The ratios of the runtimes of the DRF and Causal-DRF algorithms compared to that of the Causal-RF algorithm are shown in parentheses.}
    \label{tab:simulation_results2}
\end{table}

\begin{figure}[t]
    \centering
    \includegraphics[width=0.6\linewidth]{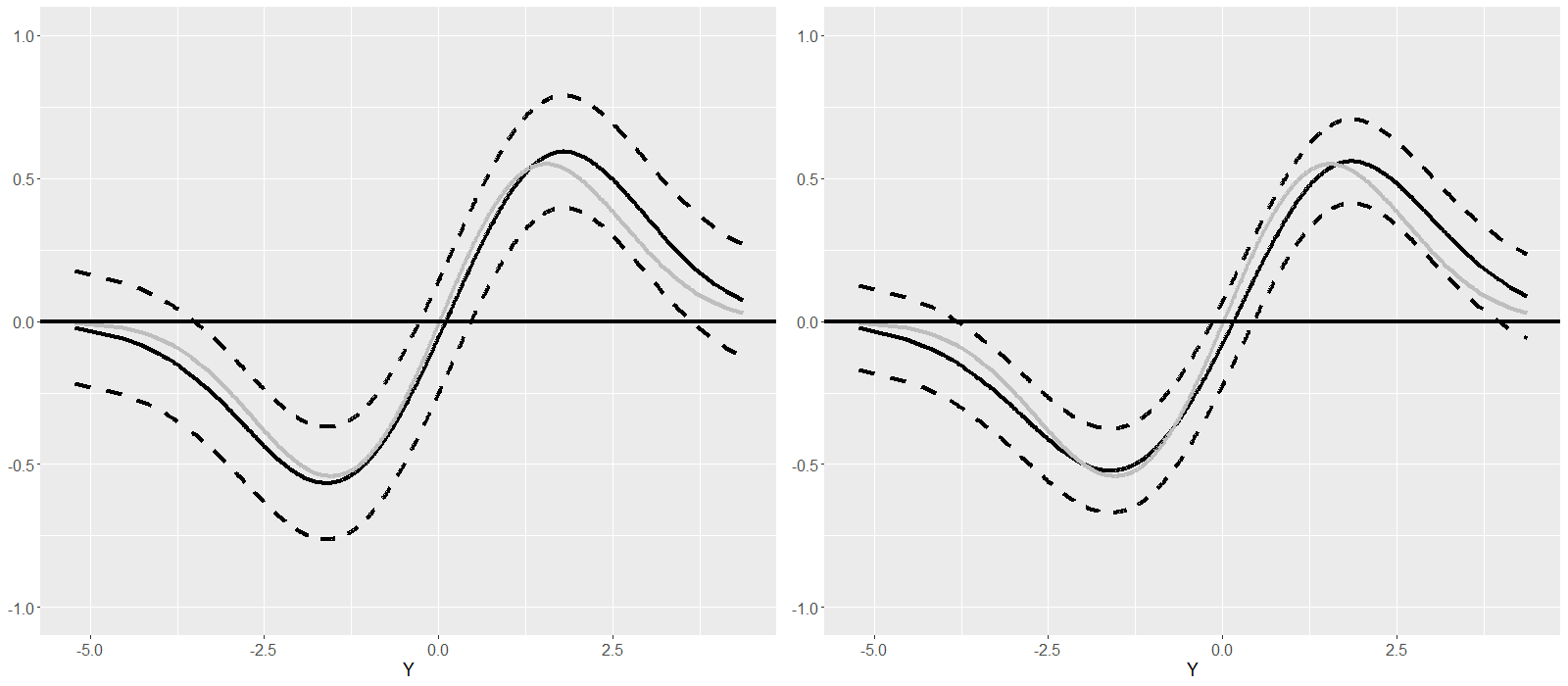}
    \caption{Simulation setting of both confounding and effect for $n=1000$. The black solid lines are the estimated witness functions, while the dashed lines are the estimated 95$\%$ confidence bands. The true witness function is given in gray. Left: Causal-DRF, Right: DRF.}
    \label{fig:simulationIllustration}
\end{figure}

\section{Algorithm and Details on the Splitting Criterion}\label{appendix:splitting}

Following \citep{DRF-paper}, we adopt an approximate kernel $\tilde{k}$ that is a fast random approximation of the MMD statistic \citep{zhao2015fastmmd}. Such an approximation allows for the computational efficiency necessary to allow the number of kernel computations that are required when constructing a large forest. 

By Bochner's theorem (see e.g.\ \citep[Theorem 6.6]{wendland_2004}), any bounded shift-invariant kernel can be written as 
\begin{equation} \label{eq: bochner}
k(\mathbf{u}, \mathbf{v}) = \int_{\R^d} e^{i\boldsymbol{\omega}^T(\mathbf{u}-\mathbf{v})}d\nu(\boldsymbol{\omega}).
\end{equation}
This can be interpreted as a Fourier transform of some measure $\nu$. By randomly sampling the frequency vectors $\boldsymbol{\omega}_1, \ldots, \boldsymbol{\omega}_B$ from normalized $\nu$, we can approximate our kernel $k$ up to a scaling factor by the approximate kernel $\tilde{k}$ via:
$$k(\mathbf{u}, \mathbf{v}) = \int_{\R^d} e^{i\boldsymbol{\omega}^T(\mathbf{u}-\mathbf{v})}d\nu(\boldsymbol{\omega}) \approx \frac{1}{B} \sum_{b=1}^B e^{i\boldsymbol{\omega}_b^T(\mathbf{u}-\mathbf{v})} = \tilde{k}(\mathbf{u}, \mathbf{v}),$$
where $\tilde{k}(\mathbf{u}, \mathbf{v}) = \langle \boldsymbol{\widetilde{\varphi}}(\mathbf{u}), \boldsymbol{\widetilde{\varphi}}(\mathbf{v})\rangle_{\mathbb{C}^B}$ is the kernel function with the feature map given by $$\boldsymbol{\widetilde{\varphi}}(\mathbf{u}) = \frac{1}{\sqrt{B}}\left(\tilde{\varphi}_{\boldsymbol{\omega}_1}(\mathbf{u}), 
\ldots, \tilde{\varphi}_{\boldsymbol{\omega}_B}(\mathbf{u})\right)^T = \frac{1}{\sqrt{B}}\left(e^{i\boldsymbol{\omega}_1^T\mathbf{u}}, 
\ldots, e^{i\boldsymbol{\omega}_B^T\mathbf{u}}\right)^T,$$
a random vector of the Fourier features $\widetilde{\varphi}_{\boldsymbol{\omega}}(\mathbf{u}) = e^{i\boldsymbol{\omega}^T\mathbf{u}} \in \mathbb{C}$ \citep{rahimi2008random}.
Our default choice of kernel $k$ is the Gaussian kernel with bandwidth $\sigma$, as there is a straightforward expression for $\nu$. We can then sample $\boldsymbol{\omega}_1, \ldots, \boldsymbol{\omega}_B \sim N_{d}(\mathbf{0}, \sigma^{-2}I_{d})$. The bandwidth $\sigma$ is chosen as the median pairwise distance between all training responses $\Ybf_1, \ldots, \Ybf_n$(the 'median heuristic'; \citealt{gretton2012optimal}). 

Our splitting criterion seeks to maximize 
\begin{align}\label{splittingcriterionweighted}
    &\frac{|\mathcal{I}_L| |\mathcal{I}_R|}{(|\mathcal{I}_R| + |\mathcal{I}_L|)^2}\Big\lVert
    \sum_{i\in\mathcal{I}_L}   \nu_{i,R} \kYi  
    - \sum_{i\in\mathcal{I}_R}   \nu_{i,R} \kYi    \Big\rVert_{\H}^2= \nonumber \\
    &\frac{|\mathcal{I}_L| |\mathcal{I}_R|}{(|\mathcal{I}_R| + |\mathcal{I}_L|)^2}\Big( \sum_{i \in \mathcal{I}_L, j \in \mathcal{I}_L } v_{i,L} v_{j,L} k(\Ybf_i, \Ybf_j) +     \sum_{i \in \mathcal{I}_R, j \in \mathcal{I}_R } v_{i,R} v_{j,R} k(\Ybf_i, \Ybf_j) \nonumber \\
    &- 2 \sum_{i \in \mathcal{I}_R, j \in \mathcal{I}_L } v_{i,R} v_{j,L} k(\Ybf_i, \Ybf_j)\Big),
\end{align}
which is a weighted version of the MMD statistic used in \citep{DRF-paper}, with weights
\[
\nu_{i,L}= \frac{W_i }{|\mathcal{I}_L, W_i=1|}  - \frac{  1- W_i }{|\mathcal{I}_L, W_i=0|}
\]
and similarly $\nu_{i,R}$. Using the approximate $\tilde{k}$ in \eqref{splittingcriterionweighted} results in:
\begin{align} \label{eq: MMD splitcrit}
\frac{1}{B}\sum_{b=1}^B \frac{|\mathcal{I}_L| |\mathcal{I}_R|}{(|\mathcal{I}_R| + |\mathcal{I}_L|)^2} \left\lvert \sum_{i\in\mathcal{I}_L} \nu_{i,L}\tilde{\varphi}_{\boldsymbol{\omega}_b}(\Ybf_i) -  \sum_{i\in \mathcal{I}_R} \nu_{i,R} \tilde{\varphi}_{\boldsymbol{\omega}_b}(\Ybf_i) \right\rvert^2,
\end{align}
which is the splitting criterion used in the forest.

The overall algorithm is summarized in Algorithm \ref{pseudocodeU}.

\begin{algorithm}[htp]
\caption{Pseudocode for Causal-DRF. The functions \textsc{BuildSubForest} and \textsc{GetWeights} are defined in Algorithm~\ref{pseudocode}.} \label{pseudocodeU}
\begin{algorithmic}[1]

\Procedure{BuildForest2}{set of samples $\mathcal{D} = \{(\mathbf{x}_i,w_i, \mathbf{y}_i)\}_{i=1}^n$, number of trees $N$, number of groups $B$}
 \State  $L \gets$ \textsc{round}(N/B)
    \For{$b=1,\ldots, B$}
        \State $\HS \gets $  Random Subsample from $\mathcal{D}$
        \State $\mathcal{F}_b \gets$ \textsc{BuildSubForest}($\HS$, $L$) \Comment{Build $b$th forest}
    \EndFor
    \State \textbf{return} $\mathcal{F} = \{\mathcal{F}_1, \ldots, \mathcal{F}_B\}$
\EndProcedure

\item[]

\Procedure{GetWeights2}{forests $\mathcal{F}$, test point $\mathbf{x}$} \Comment{Computes the weighting function with uncertainty}
    \For{$b = 1,\ldots,|\mathcal{F}|$}
        \State $w_b \gets $ \textsc{GetWeights}($\mathcal{F}_b$, $\xbf$)
    \EndFor
    \State $w = \frac{1}{B} \sum_{b=1}^B w_b $
    \State \textbf{return} $w, w_1, \ldots, w_B$
\EndProcedure

\end{algorithmic}
\end{algorithm}

\begin{algorithm}[htp]
\caption{Pseudocode for building Subforests} \label{pseudocode}
\begin{algorithmic}[1]
\Procedure{BuildSubForest}{set of samples $\mathcal{S} = \{(\mathbf{x}_i, w_i, \mathbf{y}_i)\}_{i=1}^n$, number of trees $N$}
    \For{$i=1,\ldots, N$}
        \State $\mathcal{S}_\text{subsample} \gets$  Subsample $\mathcal{S}$ as in \forestass{5}
        \State $\mathcal{S}_\text{build}, \mathcal{S}_\text{populate}\gets$ Split $\mathcal{S}_\text{subsample}$ as in \ref{forestass1} \Comment{$\mathcal{S}_\text{build}$ to determine tree splits, $\mathcal{S}_\text{populate}$ to populate the leaves}
        \State $\mathcal{N}_i \gets$  initialize root node using $\mathcal{S}_\text{build}$ 
        \State $\mathcal{T}_i \gets$ \textsc{BuildTree}($\mathcal{N}_i$) \Comment{Start recursion from the root node}
                \State Populate leaves with $\mathcal{S}_\text{populate}$
    \EndFor
    \State \textbf{return} $\mathcal{F} = \{\mathcal{T}_1, \ldots, \mathcal{T}_N\}$
\EndProcedure

\item[]

\Procedure{BuildTree}{current node $\mathcal{N}$} \Comment{Recursively constructs the trees} 
    \State $\mathcal{S} \gets$ Extract the samples in the node
    \State $\mathcal{I} \gets$ Random set of candidate variables to perform a split on
    \State $\mathcal{C}$ $\gets$  Initialize list
    \Comment{Here we store info 
    about candidate splits}
    \For{idx $\in \mathcal{I}$, level $l$} \Comment{$l$ iterates over all values of variable $X_{\text{idx}}$}
        \State $\mathcal{S}_L, \mathcal{S}_R \gets$ Split samples based on $(\mathbf{x}_i)_{i \in \text{idx}} \leq l$ 
        \State $v \gets $ Calculate splitting criterion in~\eqref{eq: splitting} using \textsc{($\mathcal{S}_L, \mathcal{S}_R$)} 
        \State Add ($v$, $\mathcal{S}_L, \mathcal{S}_R, \text{idx}, l$) to $\mathcal{C}$ 
    \EndFor
    \State $\mathcal{S}_L, \mathcal{S}_R, \text{idx}, l \gets$ find the best split in $\mathcal{C}$ 
    \State $\mathcal{N}_L \gets$  Create new node with set of samples $\mathcal{S}_L$
    \State $\mathcal{N}_R \gets$ Create new node with set of samples $\mathcal{S}_R$
    \State \textsc{BuildTree}($\mathcal{N}_L$), \textsc{BuildTree}($\mathcal{N}_R$) \Comment{Proceed building recursively}
    \State \textsc{Children}($\mathcal{N}) \gets \mathcal{N}_L, \mathcal{N}_R$ 
    \State \textsc{Split}($\mathcal{N}) \gets \text{idx}, l$ \Comment{Store the split} 
    \State \textbf{return} \textsc{Children}($\mathcal{N}$), \textsc{Split}($\mathcal{N}$) 
\EndProcedure

\item[]

\Procedure{GetWeights}{forest $\mathcal{F}$, test point $\mathbf{x}$} \Comment{Computes the weighting function}
    \State vector of weights $w$ = \textsc{Zeros}($n$) \Comment{$n$ is the training set size}
    \For{$i = 1,\ldots,|\mathcal{F}|$}
        \State $\mathcal{L} \gets $ Get indices of training samples in same leaf as $\mathbf{x}$
        \For{$\text{idx} \in \mathcal{L}$}
            \State $w[\text{idx}] = w[\text{idx}]$ + $1/(|\mathcal{L}|\cdot|\mathcal{F}|)$
        \EndFor
    \EndFor
    \State \textbf{return} $w$
\EndProcedure
\end{algorithmic}
\end{algorithm}

\section{Preliminaries}\label{App_Preliminaries}

The following assumptions are placed on the choice of kernel.
\begin{enumerate}[label=(\textbf{K\arabic*})]
    \item\label{kernelass1} $k$ is bounded, i.e., $\sup_{\ybf_1, \ybf_2} k(\ybf_1, \ybf_2) < \infty$.
   \item\label{kernelass2} $(\ybf_1,\ybf_2) \mapsto k(\ybf_1,\ybf_2)$ is (jointly) continuous.
    \item\label{kernelass3} The kernel is translation-invariant, i.e. $k(\ybf_1,\ybf_2)=k_0(\ybf_1-\ybf_2)$ for some function $k_0$.
   \item\label{kernelass4} $k$ is integrally strictly positive definite (denoted by $\int$spd), that is
   \begin{align*}
      \| \Phi(Q_1) - \Phi(Q_2) \|_{\H} = 0 \implies Q_1=Q_2  ,\text{ for all } Q_1, Q_2 \in \mathcal{M}_{b}(\R^d);
   \end{align*}
   see for instance~\citep{simon2020metrizing,optimalestimationofprobabilitymeasures}.
\end{enumerate}
All of these assumptions are met by the Gaussian kernel \begin{align}\label{Gaussiankernel}
    k(\ybf, \cdot) = \exp \left( - \frac{\norm{\ybf - \cdot}}{2\sigma^2} \right).
\end{align}

The forest assumptions with a refinement are discussed below in Section \ref{Sec_Forestass}.

First, we reproduce notation, definitions, and elementary results from \citep{näf2023confidence}.  
Let $\left(\Omega, \mathcal{A}, \mathbb{P}\right)$ denote the underlying probability space.
Throughout, let $(\H, \langle, \cdot, \rangle)$ denote the RKHS associated with the kernel $k$. 
We assume that $k$ is \emph{bounded} and \emph{continuous} in its two arguments. Boundedness of $k$ ensures that $\mu$ is indeed defined on all of $\mathcal{M}_{b}(\R^d)$, and continuity of $k\colon \R^d \times \R^d \to \R$ ensures that $\H$ is \emph{separable}. Thus, we assume throughout that \kernelass{2} holds, such that measurability issues can be avoided. Let us denote by $\xi\colon (\Omega, \mathcal{A}) \to (\H,\mathcal{B}(\H))$ a map from $\Omega$ to $\H$. Separability implies that such a map $\xi$
is measurable if and only if $\langle \xi,f\rangle$ is measurable for all $f \in \H$.  
Moreover, it can easily be checked that $\Phi(P)$ is linear on $\mathcal{M}_{b}(\R^d)$. Separability of $\H$ and $\E[\| \xi \|_{\H}] < \infty$ mean that the integral
\[
\E[\xi] = \int_{\Omega} \xi d \P,
\]
is well defined and that 
\[
F(\E[\xi])=\E[F(\xi)],
\]
for any continuous linear function $F\colon\H \to \R$.\footnote{Here and below, $F(\xi)$ is meant to denote $F(\xi(\omega))$ for all $\omega \in \Omega$. } In particular, $\E[\langle \xi, f  \rangle]= \langle \E[\xi], f \rangle$ for all $f \in \H$. Moreover, for $q \geq 1$, denote by
\begin{align*}
    \mathcal{L}^q(\Omega, \mathcal{A}, \H) &= \{\xi\colon (\Omega, \mathcal{F}) \to (\H,\mathcal{B}(\H)) \text{ measurable, with } \E[\|\xi\|^q] < \infty]  \}\\
    \mathbb{L}^q(\Omega, \mathcal{A}, \H) &= \text{Set of equivalence classes in $\mathcal{L}^q(\Omega, \mathcal{A}, \H)$}\\
    \Var(\xi)&=\E[\|\xi - \E[\xi]\|_{\H}^2]=\E[\|\xi\|_{\H}^2] - \|\E[\xi]\|_{\H}^2, \ \ \xi \in \mathcal{L}^2(\Omega, \mathcal{A}, \H)\\
    \Cov(\xi_1,\xi_2)&= \E[ \langle \xi_1 - \E[\xi_1],\xi_2 - \E[\xi_2] \rangle].
\end{align*}
Furthermore, it is well-known that $(\mathbb{L}^q, \| \cdot \|_{\mathbb{L}^q(\H)})$ is a Banach space with 
\[
\| \xi \|_{\mathbb{L}^q(\H)}= \E[\| \xi \|_{\H}^q]^{1/q}.
\]
This allows us to
also define \emph{conditional} expectations. For a sub $\sigma$-algebra $\mathcal{F} \subset \mathcal{A}$ and an element $\xi \in \mathcal{L}^1(\Omega, \mathcal{A}, \H) $, the conditional expectation $\E[\xi\mid \mathcal{F}]$ is the (a.s.) unique element  such that
\begin{itemize}
    \item[(CE1)]  $\E[\xi\mid \mathcal{F}] \colon (\Omega, \mathcal{F}) \to (\H,\mathcal{B}(\H))$ is measurable and $\E[\xi\mid \mathcal{F}]\in   \mathbb{L}^1(\Omega, \mathcal{F}, \H)$,
    \item[(CE2)] $\E[ \xi \1_{F} ] = \E[ \E[\xi\mid \mathcal{F}]\1_{F} ]$ for all $F \in \mathcal{F}$;
\end{itemize} 
see for instance~\citep{UMEGAKI197049} or~\citep[Chapter 1]{pisier_2016}.
Particularly, condition (CE2) implies that $\E[ \E[\xi\mid \mathcal{F}] ] = \E[ \E[\xi\mid \mathcal{F}]\1_{\Omega} ] = \E[\xi] $ due to $\Omega \in \F$ for any $\sigma$-algebra. It can also be shown that $F(\E[\xi\mid \mathcal{F}])=\E[F(\xi)\mid\mathcal{F}]$ for all linear and continuous $F\colon \H \to \R$ and that $\|\E[\xi\mid \mathcal{F}]  \|_{\H}\leq \E[ \|\xi \|_{\H} \mid \mathcal{F}]$~\citep[Chapter 1]{pisier_2016}. Moreover, it can be shown that
\begin{itemize}
    \item[(CE3)] For $\xi \in \mathbb{L}^2(\Omega, \mathcal{A}, \H)$, $\E[\xi \mid \mathcal{F}]$ is the orthogonal projection onto $\mathbb{L}^2(\Omega, \mathcal{F}, \H)$;
\end{itemize}
see~\citep{UMEGAKI197049}. 
Although the conditional expectation $\E[\xi\mid \mathcal{F}]$, similarly to real-valued conditional expectations, is only defined a.s., we do not explicitly state this in our developments below. Denote
$\E[\xi\mid\Xbf]=\E[\xi\mid \sigma(\Xbf)]$. Proposition 6 and 7 in \citep{DRF-paper} show that this notion is well-defined and establishes further properties of Hilbert space-valued conditional expectations. 

For two functions $f$ and $g$ with $\liminf_{s \to \infty} g(s) > 0$, we denote 
$f(s)= \mathcal{O}(g(s))$ if 
\[
\limsup_{s \to \infty} \frac{|f(s)|}{g(s)} \leq C
\]
for some $C > 0$. If $C=1$, we write $f(s) \precsim g(s)$. For a sequence of random variables $X_n\colon \Omega \to \R$ and a sequence of real numbers $a_n \in (0,+\infty)$, $n \in \N$, we write  $X_n=\O_p(a_n)$ if 
\[
\lim_{M \to \infty} \sup_{n} \P(a_n^{-1} |X_n| > M) = 0,
\]
that is, $X_n$ is bounded in probability. We write $X_n=\o_p(a_n)$ if $a_n^{-1} X_n$ converges to zero in probability. Similarly, for $(S,d)$ a separable metric space, $\Xbf_n\colon (\Omega, \mathcal{A}) \to (S, \mathcal{B}(S))$, $n \in \N$ and $\Xbf\colon (\Omega, \mathcal{A}) \to (S, \mathcal{B}(S))$ measurable, we write $\Xbf_n \stackrel{p}{\to} \Xbf$, if $d(\Xbf_n, \Xbf)=o_{p}(1)$. By separability, every random element $\xi$ with values in $\H$ is tight~\citep[Chapter 7.1]{dudley}. That is, for all $\epsilon > 0$, there is a compact $K_{\epsilon} \subset \H$ such that $\P(\xi \in K_{\epsilon}) \geq 1-\epsilon$. More generally, uniform tightness of a sequence $\xi_n$, $n \in \mathbb{N}$ means that for all $\epsilon > 0$, there is a compact $K_{\epsilon} \subset \H$ such that
\begin{align*}
     \inf_n \P(\xi_{n} \in K_{\epsilon}) \geq 1-\epsilon.
\end{align*}

Finally, let $\Xbf \in \mathcal{L}^2(\Omega, \mathcal{A}, \H_1)$ and $\xi \in \mathcal{L}^2(\Omega, \mathcal{A}, \H_2)$, and assume that $A \subset \Omega$ depends on $\Xbf$, $A=A(\Xbf)$. Thus, for $\Xbf$ fixed to a certain value, $A$ is a fixed set. If $ \P(A \mid \Xbf) > 0$ almost everywhere, we define
\[
\E[\xi \mid  \Xbf, A ] = \frac{\E[\xi \1_A \mid \Xbf ]}{\P( A \mid \Xbf)} \in \mathcal{L}^2(\Omega, \sigma(\Xbf), \H_2).
\]
Then, we have by construction that
\begin{align}
    \E[\xi \1_A \mid \Xbf ] = \E[\xi \mid  \Xbf, A ] \cdot \P( A \mid \Xbf).
\end{align}
For a more compact notation in the following Lemma, let $N=\{1,\ldots,n\}$, and let for $A \subset N$ and $k \leq |A|$, let $C_{k}(A)$ be the set of all subsets of size $k$ drawn from $A$ without replacement, with $C_0=\emptyset$. The following lemma presents a U-statistic expansion that we afterwards apply to an individual tree of our DRF forest.

\begin{lemma}[Lemma 9 in~\citep{DRF-paper}]\label{Hdecomposition}

Let $\left(\H_1, \langle\cdot,\cdot\rangle_1\right)$ and $\left(\H_2, \langle\cdot,\cdot\rangle_2\right)$ be two separable Hilbert spaces, and let $\Zbf_1, \ldots, \Zbf_n$ be i.i.d.\ copies of a random element $\Zbf\colon (\Omega, \A) \to (\H_1, \mathcal{B}(\H_1))$. Write $\mathcal{Z}_n=(\Zbf_1, \ldots, \Zbf_n)$, and let $T\colon(\H_1^n, \mathcal{B}(\H_1^n)) \to  (\H_2, \mathcal{B}(\H_2))$ measurable with 
$\E[\| T(\mathcal{Z}_n) \|^2_{\H_2} ] < \infty$. If $T$ is symmetric, there exist functions $T_j$, $j=1,\ldots,n$, such that
\begin{equation}\label{ANOVA}
    T(\mathcal{Z}_n) = \E[T(\mathcal{Z}_n)] + \sum_{i=1}^n T_1(\Zbf_i)  + \sum_{i_1 < i_2} T_2(\Zbf_{i_1}, \Zbf_{i_2}) + \cdots + T_n(\mathcal{Z}_n),
\end{equation}
and it holds that
\begin{align}\label{vardecomp}
    \Var(T(\mathcal{Z}_n))=\sum_{i=1}^n \binom{n}{i} \Var(T_i(\Zbf_1, \ldots, \Zbf_i) )
\end{align}
and
\[
T_{1}(\Zbf_i)=\E[T(\mathcal{Z}_n)\mid\Zbf_i] - \E[T(\mathcal{Z}_n)].
\]

\end{lemma}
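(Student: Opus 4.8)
The plan is to construct the summands $T_j$ via the Hoeffding (ANOVA) projections and then read off both the decomposition and the variance formula from the orthogonality of these projections in $\mathbb{L}^2(\Omega,\mathcal{A},\H_2)$. Every manipulation will rely only on the Hilbert-space conditional-expectation machinery recorded in (CE1)--(CE3) --- in particular that $\E[\,\cdot\mid\mathcal{F}]$ is the orthogonal projection onto $\mathbb{L}^2(\Omega,\mathcal{F},\H_2)$ and obeys the tower property --- together with the i.i.d.\ assumption on $\Zbf_1,\ldots,\Zbf_n$.

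First, for every subset $S\subseteq\{1,\ldots,n\}$ I would write $\mathcal{F}_S=\sigma(\Zbf_i:i\in S)$ and set $P_S:=\E[\,\cdot\mid\mathcal{F}_S]$, so that $P_{\{1,\ldots,n\}}T(\mathcal{Z}_n)=T(\mathcal{Z}_n)$ by measurability and $P_\emptyset T=\E[T(\mathcal{Z}_n)]$. I then define the pure-interaction terms by Möbius inversion,
\begin{align*}
 H_S := \sum_{S'\subseteq S}(-1)^{|S|-|S'|}\,P_{S'}T(\mathcal{Z}_n),
\end{align*}
so that inclusion--exclusion gives $P_S T=\sum_{S'\subseteq S}H_{S'}$ and, for $S=\{1,\ldots,n\}$, the identity $T(\mathcal{Z}_n)=\sum_{S}H_S$. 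Because $T$ is symmetric and the $\Zbf_i$ are i.i.d., $H_S$ is a symmetric function of $(\Zbf_i)_{i\in S}$ whose law depends on $S$ only through $|S|$; calling this common function $T_{|S|}$ and grouping the sum by cardinality yields exactly \eqref{ANOVA}, with $H_\emptyset=\E[T(\mathcal{Z}_n)]$ and $H_{\{i\}}=\E[T(\mathcal{Z}_n)\mid\Zbf_i]-\E[T(\mathcal{Z}_n)]$, which is the stated formula for $T_1$.

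The crux is a degeneracy property: $\E[H_S\mid\mathcal{F}_B]=0$ whenever $S\not\subseteq B$. Here I would use the reduction $\E[P_{S'}T\mid\mathcal{F}_B]=P_{S'\cap B}T$, which follows from independence of the coordinates (conditioning a function of $(\Zbf_i)_{i\in S'}$ on $\mathcal{F}_B$ drops the indices outside $B$) combined with the tower property. Taking $B=S\setminus\{i\}$ for some $i\in S$ and pairing each $S'\ni i$ with $S'\setminus\{i\}$ makes the alternating sum cancel term by term, giving $\E[H_S\mid\mathcal{F}_{S\setminus\{i\}}]=0$, hence vanishing under any $\mathcal{F}_B$ with $i\notin B$. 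Orthogonality then follows: for $S\neq S'$ pick an index $i$ in one of the symmetric differences, say $i\in S\setminus S'$; since $H_{S'}$ is $\mathcal{F}_{\{1,\ldots,n\}\setminus\{i\}}$-measurable, the "take out what is known" identity $\E[\langle H_S,H_{S'}\rangle\mid\mathcal{F}_{\{1,\ldots,n\}\setminus\{i\}}]=\langle\E[H_S\mid\mathcal{F}_{\{1,\ldots,n\}\setminus\{i\}}],H_{S'}\rangle$ together with the previous step gives $\E[\langle H_S,H_{S'}\rangle]=0$.

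Finally, I would assemble the variance formula. Since $\E[H_S]=0$ for $S\neq\emptyset$, Pythagoras in $\mathbb{L}^2(\Omega,\mathcal{A},\H_2)$ combined with the orthogonality of the $H_S$ gives
\begin{align*}
 \Var(T(\mathcal{Z}_n))=\E\Big[\big\|\textstyle\sum_{S\neq\emptyset}H_S\big\|_{\H_2}^2\Big]=\sum_{S\neq\emptyset}\E[\|H_S\|_{\H_2}^2]=\sum_{S\neq\emptyset}\Var(H_S);
\end{align*}
grouping by $|S|=i$ and using that all $\binom{n}{i}$ terms of a given size share the variance $\Var(T_i(\Zbf_1,\ldots,\Zbf_i))$ by symmetry yields \eqref{vardecomp}. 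The main obstacle I anticipate is purely the Hilbert-space bookkeeping of the degeneracy and orthogonality steps: one must check that the tower property, the independence-based reduction $\E[P_{S'}T\mid\mathcal{F}_B]=P_{S'\cap B}T$, and the pulling-out of $\mathcal{F}_B$-measurable factors inside $\langle\cdot,\cdot\rangle$ all hold for Bochner conditional expectations. Each is a consequence of (CE1)--(CE3) and separability of $\H_2$, but they replace the elementary scalar computations of the classical Hoeffding decomposition and therefore warrant care.
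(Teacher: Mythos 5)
This lemma is not proved in the paper at all: it is imported verbatim as Lemma 9 of \citet{DRF-paper}, so there is no in-paper argument to compare against. Your proof is the canonical Hoeffding/ANOVA-projection argument (M\"obius inversion of the conditional-expectation operators $P_S$, degeneracy of the pure-interaction terms $H_S$, orthogonality, Pythagoras), carried out correctly in the Bochner setting using exactly the properties (CE1)--(CE3) the paper records; this is also the route taken in the cited source. The only nitpick is that for the decomposition \eqref{ANOVA} and the collapse of the variance sum to $\binom{n}{i}\Var(T_i(\Zbf_1,\ldots,\Zbf_i))$ you need the \emph{function} $H_S=T_{|S|}((\Zbf_i)_{i\in S})$ to be identical across all $S$ of the same cardinality, not merely that its law depends on $S$ only through $|S|$; this does follow immediately from the symmetry of $T$ together with the i.i.d.\ structure (each $P_{S'}T$ is one fixed symmetric function of $(\Zbf_i)_{i\in S'}$ determined by $|S'|$), but it is worth stating in that stronger form.
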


\noindent

Recall from the main text that we use $\Zbf_i=\left(Y_i, W_i, \Xbf_i \right)$, for $i=1,\ldots,n$ and $\Zcal_n=\{\Zbf_1, \ldots, \Zbf_n\}$ and similarly $\Zcal_{s_n}=\{\Zbf_1, \ldots, \Zbf_{s_n}\}$.

\subsection{Forest Assumptions and a Refinment}\label{Sec_Forestass}

We first list the forest assumptions taken in \citep{wager2018estimation}:

\begin{enumerate}[label=(\textbf{F\arabic*})]
    \item\label{forestass1} (\textit{Honesty}) The data used for constructing each tree is split into two halves; the first is used for determining the splits and the second for populating the leaves and thus for estimating the response. In addition, $(W_i, \Xbf_i)$ from the second sample may be used to define the splits.  
    \item\label{forestass2} (\textit{Random-split}) At every split point and for all feature dimensions $j=1,\ldots,p$, the probability that the split occurs along the feature $X_j$ is bounded from below by $\pi/p$ for some $\pi > 0$.
    \item\label{forestass3} (\textit{Symmetry}) The (randomized) output of a tree does not depend on the ordering of the training samples.
    \item\label{forestass4} (\textit{$\alpha$-regularity}) 
   For the second sample in \ref{forestass1}, after splitting a parent node, each child node contains at least a fraction $\alpha \leq 0.2$ of the parent's training samples. Moreover, the trees are grown until every leaf contains between $\kappa$ and $2\kappa - 1$ many observations \emph{from each treatment group}, for some fixed tuning parameter $\kappa \in \N$. 
    \item\label{forestass5}(\textit{Data sampling}) 
    To grow a tree, a subsample of size $s_n$ out of the $n$ training data points is sampled. 
    We consider $s_n=n^{\beta}$ with 
    \[
    1 > \beta >  \left( 1 +  \frac{\log( (1-\alpha)^{-1})}{\log(\alpha^{-1})} \frac{\pi}{p} \right)^{-1},
    \]
   where $\alpha$ is chosen in~\ref{forestass4}.
\end{enumerate}
These conditions can be ensured by forest construction, although, as mentioned in \citep{wager2018estimation}, care must be taken for different test points $\xbf$, as \ref{forestass4} must be ensured for each test point.

In the following, denote the indices of the first sample used to build the trees in \ref{forestass1} as $\I_1$, and the second sample used to populate the leaves as $\I_2$. \citep{näf2023confidence} assumed an equivalent of \ref{forestass1}--\ref{forestass5} to derive analogous results for DRF. Unfortunately, a crucial proof in \citep{näf2023confidence}, corresponding to Theorem \ref{thm:varianceassumption} below, only goes through under the assumption that for each $i \in \I_2$, $\Xbf_i$ is independent of $\Lcal(\xbf)$, the leaf $\xbf$ falls into. However, while this is approximately true, since an independent sample is used for each tree to build $\Lcal(\xbf)$, there can be a certain dependence, as $\Lcal(\xbf)$ might be adjusted to ensure \forestass{4}. That is, while a first $\Lcal^0 (\xbf)$ is built completely independently of $(\Xbf_i,W_i)$ from the second sample, $\Lcal(\xbf)$ is then adjusted based on $(\Xbf_i, W_i)$ from the second sample to meet $\forestass{4}$. Although the nature of this dependence is hard to control, we now put some mild assumptions to limit the dependence between $(\Xbf_i, W_i)$ and $\Lcal(\xbf)$ in any forest construction meeting \ref{forestass1}--\ref{forestass5}. In particular, we assume: 
\begin{enumerate}[label=(\textbf{F1}*)]
    \item\label{forestass1starCausalDRF} (\textit{Honesty*}) The second sample is used such that: for all $i\in \I_2$, for all $f$ bounded,
    \begin{align}\label{forestass1starCausalDRF_1}
        \Var( \E[f(\Lcal(\xbf)) \mid \Xbf_i, W_i] ) = \o(s_n^{-(1+\epsilon)}),
    \end{align}  
    and 
    \begin{align}\label{forestass1starCausalDRF_2}
    \E &[ \|\E[ f( (\Xbf_j, W_j, \1\{\Xbf_j \in \Lcal(\xbf)\})_{j \in \I_2, j \neq i} ) \mid \1\{\Xbf_i \notin \Lcal(\xbf)\}, \Lcal(\xbf), \Xbf_i, W_i] - \nonumber \\
    &\E[ f( (\Xbf_j, W_j, \1\{\Xbf_j \in \Lcal(\xbf)\})_{j \in \I_2, j \neq i} )  \mid \1\{\Xbf_i \notin \Lcal(\xbf)\}, \Lcal(\xbf)]\|_{\H}^2]= \o(s_n^{-(1+\epsilon)}),
\end{align}
for some $\epsilon > 0$.
\end{enumerate}

Finally, for Causal-DRF we in addition need an assumption of the behavior of the weight in each class: 
\begin{enumerate}[label=(\textbf{F6})]
    \item\label{forestass6CausalDRF} (\textit{Group Variance}) 
     There exists some $c \in [0, \infty]$, such that for all $i \in \I_2$,
     \begin{align}
         \frac{\Var(\E[S_i \mid \Xbf_i, W_i=1])}{\Var(\E[S_i \mid \Xbf_i, W_i=0])} \to c.
     \end{align}
\end{enumerate}

\begin{remark}
    Both Assumption \ref{forestass1starCausalDRF} and \ref{forestass6CausalDRF} are needed to prove Theorem \ref{thm:varianceassumption}, which considers the asymptotic ratio of variances of the projected trees:
    \begin{align*}
    \lim_{n \to \infty }\frac{\Var(\langle \Gamma_{n}(\Zbf_1), f \rangle)}{\Var(\Gamma_{n}(\Zbf_1))}
\end{align*}
With \eqref{forestass1starCausalDRF_1}, we assume that the adaptation of $\Lcal(\xbf)$ by $(\Xbf_i, W_i)$, $i \in \I_2$ is done in such a way that the dependence is negligible for each individual $(\Xbf_i, W_i)$. Intuitively, one might expect this variance to be $\o(s_n^{-1})$, even if the second sample was freely used to build $\Lcal(\xbf)$, as $(\Xbf_i, W_i)$ is one of $s_n$ points used to obtain $\Lcal(\xbf)$. The additional $\epsilon$ encodes that $\Lcal(\xbf)$ is originally build from an independent set of observations and $(\Xbf_i, W_i)$, $i \in \I_2$ is only used to adapt $\Lcal(\xbf)$ such that \forestass{4} holds. Similarly, \eqref{forestass1starCausalDRF_2} demands that, given $\Lcal(\xbf)$ and $\1\{\Xbf_i \in \Lcal(\xbf)\}$, the dependence of $(\Xbf_i, W_i)$ on all other observations be negligible. Although $(\Xbf_i, W_i)_{i \in \I_2}$ are independent, conditioning on $\Lcal(\xbf)$ and $\1\{\Xbf_i \in \Lcal(\xbf)\}$ will induce (conditional) dependence in general. With \eqref{forestass1starCausalDRF_2} we assume that this dependence decreases sufficiently fast as $n \to \infty$.

As we are only interested in the ratio of variances with identical trees, the denominator and numerator deal with the same dependence of $(\Xbf_i, W_i)$ on $\Lcal(\xbf)$. Thus, it might be possible to further weaken assumption \ref{forestass1starCausalDRF}. However, \ref{forestass1starCausalDRF} is much more plausible for a forest construction satisfying \ref{forestass1}--\ref{forestass5} than the full independence of $\Lcal(\xbf)$ and the second sample implicitly assumed by \citep{näf2023confidence}. Both assumptions (without $W_i$) are also needed for the original DRF, as we discuss in a correction of a key proof argument of \citep{näf2023confidence} in Section \ref{DRFproofcorrection}.

Assumption \ref{forestass6CausalDRF} is the same as Assumption (30) in \citep{näf2023confidence}, ensuring the convergence of the ratio by restricting how the variances can behave in each group. Without it can be shown that there exists constants $0 \leq c_1 \leq c_2 < \infty$, such that
\[
c_1 \leq \lim_{n \to \infty }\frac{\Var(\langle \Gamma_{n}(\Zbf_1), f \rangle)}{\Var(\Gamma_{n}(\Zbf_1))} \leq c_2.
\]
\ref{forestass6CausalDRF} is a sufficient condition that ensures that the ratio does not indefinitely fluctuate between these values as $n \to \infty$. 
\end{remark}

\begin{remark}
    Owing to Assumption \ref{forestass1}, one should write:
    \begin{align*}
        \sum_{i \in \mathcal{I}_2} S_i k(\Ybf_{i}, \cdot)
    \end{align*}
    instead of 
      \begin{align*}
        \sum_{i=1}^{s_n} S_i k(\Ybf_{i}, \cdot).
    \end{align*}
    However, following \citep{wager2018estimation} and \citep{näf2023confidence} we only acknowledge the sample splitting in the tree when it is necessary, such as in the proof of Theorem \ref{thm:varianceassumption}.
\end{remark}

Before we continue with the main proofs, it is instructive to present the corrected argument of a crucial part of the proof of Theorem 5 in \citep{näf2023confidence}. These arguments can then directly be reused to obtain important claims to ultimately proof Theorem \ref{thm:varianceassumption} below.

\subsection{Improving upon the proof of Theorem 5 in [N\"af et al., 2023]}\label{DRFproofcorrection}

We restate here the assumptions of \citep{näf2023confidence}: We refer here to the assumptions \ref{forestass1}--\ref{forestass5} from \citep{näf2023confidence} as (\textbf{DRF-F1})--(\textbf{DRF-F5}). These are essentially the assumptions \ref{forestass1}--\ref{forestass5} shown here when $W_i$ is seen as part of $\Xbf_i$. We repeat here the assumptions on the data taken in \citep{näf2023confidence}:

\begin{enumerate}[label=(\textbf{DRF-D\arabic*})]
    \item\label{dataass1DRF} The covariates $\mathbf{X}_1,\ldots, \mathbf{X}_n$ are independent and identically distributed on $[0,1]^p$ with a density bounded away from $0$ and infinity.
    \item\label{dataass2DRF} The mapping $\xbf \mapsto \mu(\xbf)=\E[ k(\Ybf,\cdot)  \mid \Xbf \myeq \xbf] \in \H $ is Lipschitz.
    \item\label{dataass3DRF} The mapping $\xbf \mapsto \E[ \|k(\Ybf,\cdot) \|_{\H}^2 \mid \Xbf \myeq \xbf] $ is Lipschitz.
    \item\label{dataass4DRF} $\Var(k(\Ybf,\cdot)\mid \Xbf=\xbf) = \E[\| k(\Ybf,\cdot) \|_{\H}^2 \mid \Xbf=\xbf] - \|\E[ k(\Ybf,\cdot)  \mid \Xbf=\xbf]\|_{\H}^2 > 0$.
    \item\label{dataass5DRF} $\E[\left\| k(\Ybf, \cdot) - \mu(\xbf)\right \|_{\H}^{2+\delta}\mid \Xbf=\xbf] \leq M$, for some constants $\delta, M$ uniformly over $\xbf\in [0,1]^d$. 
    \item\label{dataass6DRF} For all $f \in \H \setminus \{0\}$, $\Var( \langle k(\Ybf,\cdot), f \rangle\mid \Xbf=\xbf)= \Var( f(\Ybf)\mid \Xbf=\xbf) > 0$.
    \item\label{dataass7DRF} For all $f \in \H\setminus \{0\}$, $\xbf \mapsto \E[\left| f(\Ybf)\right|^{2}\mid \Xbf=\xbf]$ is Lipschitz. 
\end{enumerate}
Instead of demanding (finite-sample) independence, we ask that the dependence between $\Xbf_i$, $i \in \I_2$ and the leaf $\Lcal(\xbf)$ decays at a certain rate, namely:
\begin{enumerate}[label=(\textbf{DRF-F1}*)]
    \item\label{forestass1starDRF} (\textit{Honesty*}) The second sample is used such that: for all $i\in \I_2$, for all $f$ bounded,
    \begin{align}\label{forestass1starDRF_1}
        \Var( \E[f(\Lcal(\xbf)) \mid \Xbf_i] ) = \o(s_n^{-(1+\epsilon)}).
    \end{align}  
 and
\begin{align}\label{forestass1starDRF_2}
    \E &[ \|\E[ f( (\Xbf_j, \1\{\Xbf_j \in \Lcal(\xbf)\})_{j \in \I_2, j \neq i} ) \mid \1\{\Xbf_i \notin \Lcal(\xbf)\}, \Lcal(\xbf), \Xbf_i] - \nonumber \\
    &\E[ f( (\Xbf_j, \1\{\Xbf_j \in \Lcal(\xbf)\})_{j \in \I_2, j \neq i} )  \mid \1\{\Xbf_i \notin \Lcal(\xbf)\}, \Lcal(\xbf)]\|_{\H}^2]= \o(s_n^{-(1+\epsilon)})
\end{align}
\end{enumerate}
With \eqref{forestass1starDRF_1}, we assume that the adaptation of $\Lcal(\xbf)$ by $\Xbf_i$, $i \in \I_2$ is done in such a way that the dependence is negligible for each individual $\Xbf_i$. Intuitively we might expect this variance to be $\o(s_n^{-1})$, even if the second sample was freely used to build $\Lcal(\xbf)$, as $\Xbf_i$ is one of $s_n$ points used to obtain $\Lcal(\xbf)$. The additional $\epsilon$ encodes that $\Lcal(\xbf)$ is originally build from an independent set of observations and $\Xbf_i$, $i \in \I_2$ is only used to adapt $\Lcal(\xbf)$ such that \forestass{4} holds. This is a convenient and mild assumption that allows for an elegant correction of the proof in \citep{näf2023confidence}, although it might be possible to weaken it.
This assumption holds in particular, if $\Lcal(\xbf)$ is independent of $\Xbf_i$, $i \in \I_2$.

Similarly \eqref{forestass1starDRF_2} is a mild assumption that holds in particular in the following: Assume the sample in $\I_2$ is used to adapt an original $\Lcal^{0}(\xbf)$ build from independent data in $\I_1$, based solely on whether $\Xbf_i \in \Lcal(\xbf)$, as for instance shown in Figure \ref{fig:possibletreebuildingprocess}. Then it holds that,
\begin{align}
        (\Xbf_j, \1\{\Xbf_j \in \Lcal(\xbf)\})_{j \in \I_2, j \neq i} \indep \Xbf_i \mid \Lcal(\xbf), \1\{\Xbf_i \in \Lcal(\xbf)\}.
\end{align}
Thus in this plausible scenario, a condition much stronger than \eqref{forestass1starDRF_2} holds.

\begin{figure}[t]
    \centering
 \begin{tikzpicture}[node distance=2cm, auto]
    \tikzstyle{block} = [rectangle, draw, text width=5em, text centered, rounded corners, minimum height=2em]
    \tikzstyle{line} = [draw, -latex']

    \node [block] (sample1) {Sample $\I_1$};
    \node [block, below of=sample1] (L0) {$\Lcal^{0}(\xbf)$};
    \node [block, left of=L0, node distance=4cm] (X1) {$\Xbf_1$};
    \node [block, right of=L0, node distance=4cm] (X2) {$\Xbf_2$};
    \node [block, below left of=L0, node distance=3cm] (L0X1) {$1\{\Xbf_1 \in \Lcal^{0}(\xbf)\}$};
    \node [block, below right of=L0, node distance=3cm] (L0X2) {$1\{\Xbf_2 \in \Lcal^{0}(\xbf)\}$};
    \node [block, below of=L0, node distance=4.5cm] (L) {$\Lcal(\xbf)$};
    \node [block, below of=L, node distance=2.5cm] (diamL) {$\text{diam}(\Lcal(\xbf))$};
    \node [block, below left of=L, node distance=4.5cm] (LX1) {$1\{\Xbf_1 \in \Lcal(\xbf)\}$};
    \node [block, below right of=L, node distance=4.5cm] (LX2) {$1\{\Xbf_2 \in \Lcal(\xbf)\}$};

    \path [line] (sample1) -- (L0);
    \path [line] (X1) -- (L0X1);
    \path [line] (X2) -- (L0X2);
    \path [line] (L0) -- (L0X1);
    \path [line] (L0) -- (L0X2);
    \path [line] (L) -- (LX1);
    \path [line] (L) -- (LX2);
    \path [line] (L) -- (diamL);
    \path [line] (L0X1) -- (L);
    \path [line] (L0X2) -- (L);
    \path [line] (X1) -- (LX1);
    \path [line] (X2) -- (LX2);
\end{tikzpicture}

    \caption{Potential Tree building approach for $n=2$ that satisfies \eqref{forestass1starDRF_2}. Crucially, $\Lcal(\xbf)$ is an adapted version of $\Lcal^{0}(\xbf)$ based only on whether $\Xbf_i$, $i \in \I_2$ is included in $\Lcal^{0}(\xbf)$.}
    \label{fig:possibletreebuildingprocess}
\end{figure}
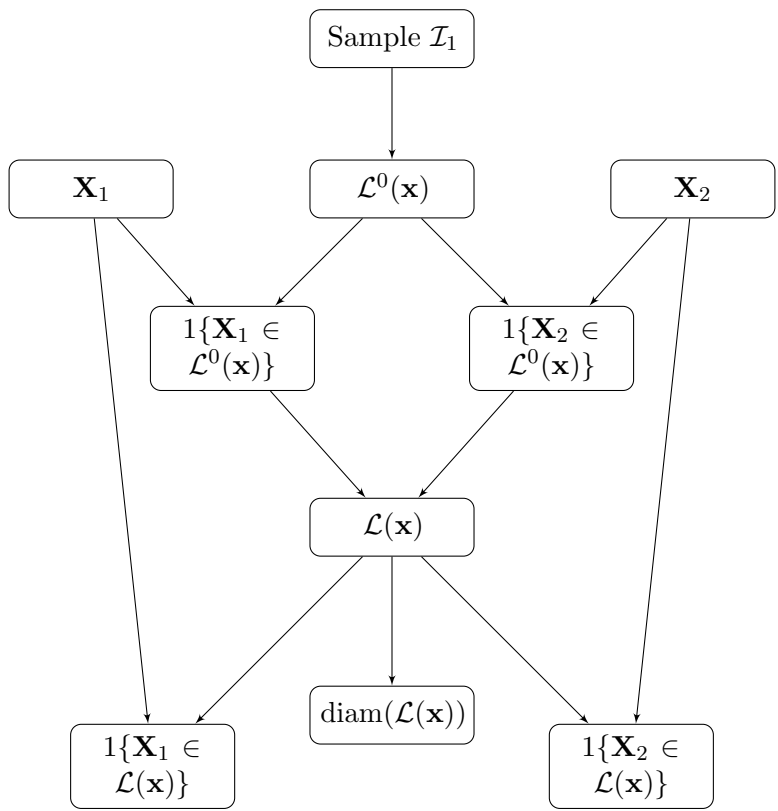

To establish the setting of \citep{näf2023confidence}, define in the following the number of data points belonging to the same leaf as $\xbf$ as $N_{\xbf}=|\{j\colon \Xbf_j \in \Lcal(\xbf)  \}|$ and let 
\begin{align}\label{Sdef}
    S_i=\frac{\1\{ \Xbf_i \in \Lcal(\xbf) \}}{N_{\xbf}},
\end{align}
be the weight associated with each observation $i$ in a tree $T(\mathcal{Z}_{s_n})$ with $\mathcal{E}$ integrated out, i.e.  $T(\mathcal{Z}_{s_n})=\E[T(\xbf, \mathcal{E}; \Zcal_{s_n})$, such that
\[
 T(\mathcal{Z}_{s_n})=\sum_{i=1}^{s_n} S_i k(\mathbf{Y}_i, \cdot).
\]
We will make use the following property of the $S_i$:
\begin{align}\label{eq:expectS1}
    1=\E\left[\sum_{i=1}^{s_{n}} S_i \right] = \sum_{i=1}^{s_{n}}\E[ S_i] = s_n \E[S_1].
\end{align}
In particular,
\begin{align}\label{eq:varDecompSprime2}
\Var(\E[S_1|\Xbf_1])\leq \E[\E[S_1|\Xbf_1]^2] \leq \E[\E[S_1|\Xbf_1]]=\E[S_1]=\O(s_n^{-1})
\end{align}
We define
\begin{align}
    T'(\Zcal_{s_n})&=T(\Zcal_{s_n})\1\{\text{diam}(\Lcal(\xbf))\leq s_n^{-w}\}, \label{Tdashdef}\\
    S_i'&=S_i\1\{\text{diam}(\Lcal(\xbf))\leq s_n^{-w}\}, \text{ where } w=\frac{1}{2} \frac{\pi}{p} \frac{\log \left((1-\alpha)^{-1} \right)}{\log(\alpha^{-1})} \label{wdef},
\end{align}
so that
$T'(\Zcal_{s_n})=\sum_{i=1}^{s_n} S_i' k(\Ybf_i, \cdot)$. We now formulate the corrected version of Theorem 5 in \citep{näf2023confidence} with \ref{forestass1starDRF} added:

\begin{theorem}
    Assume conditions~(\textbf{DRF-F1})--(\textbf{DRF-F5}),~\ref{forestass1starDRF},~\ref{dataass1DRF}--\ref{dataass7DRF},~\ref{kernelass1}, and~\ref{kernelass2} hold. Then, for all $f \in \H\setminus\{0\}$, we have
\begin{align}\label{varianceassumption_DRF}
    \lim_{n \to \infty }\frac{\Var(\langle T_{n}(\Zbf_1), f \rangle)}{\Var(T_{n}(\Zbf_1))} = \frac{\Var(\langle k(\Ybf, \cdot) ,f  \rangle|\Xbf=\xbf)}{\Var(k(\Ybf, \cdot)|\Xbf=\xbf)}  =\sigma^2(f) > 0.
\end{align}
\end{theorem}

\begin{proof}

We only consider and correct the proof of the crucial Claim (90) in \citep{näf2023confidence}, that is we prove:
\begin{claim}
\begin{align}\label{A_Varboundnew}
    \Var(\E[ \langle T'(\Zcal_{s_n}), f \rangle \mid \Xbf_1 ]) = \o(s_n^{-(1+\epsilon)}) \nonumber\\
\Var( \E[T'(\Zcal_{s_n})| \Xbf_1 ])= \o(s_n^{-(1+\epsilon)}).
\end{align}

\end{claim}

\begin{claimproof}
First, due to honesty (\textbf{DRF-F1}), we have
\begin{align*}
   \E[ \langle T'(\Zcal_{s_n}),f \rangle \mid \Xbf_1 ] &=\E[S_1'\mid \Xbf_1]  \E[ \langle k(\Ybf_1, \cdot),f \rangle\mid  \Xbf_1] + \sum_{i=2}^{s_n} \E[S_i' \langle k(\Ybf_i, \cdot),f \rangle\mid \Xbf_1]
\end{align*}
and
\begin{align*}
  \E[ T'(\Zcal_{s_n}) \mid \Xbf_1 ] &=\E[S_1'\mid \Xbf_1]  \E[ k(\Ybf_1, \cdot)\mid \Xbf_1] + \sum_{i=2}^{s_n} \E[S_i' k(\Ybf_i, \cdot)\mid \Xbf_1].
\end{align*}
Subsequently, we consider the variance of the two terms and their covariance individually. First, we study the variance of the first terms.
\begin{claim}
For some $\epsilon > 0$,
\begin{align}\label{A_variance1term_1}
    \Var(\E[S_1'|\Xbf_1]  \E[ k(\Ybf_1, \cdot)|\Xbf_1])&= \Var(\E[S_1'|\Xbf_1]) \|\E[k(\Ybf_1, \cdot)|\Xbf=\xbf]\|_{\H}^2 + \o(s_n^{-(1+\epsilon)})
\end{align}
and
\begin{align}\label{A_variance1term_2}
        \Var(\E[S_1'|\Xbf_1]  \E[ \langle k(\Ybf_1, \cdot),f \rangle|\Xbf_1]) &= \Var(\E[S_1'|\Xbf_1]) \E[\langle k(\Ybf_1, \cdot), f \rangle|\Xbf=\xbf]^2 + \o(s_n^{-(1+\epsilon)}).
\end{align}
\end{claim}

\begin{claimproof}

We only show~\eqref{A_variance1term_1} because~\eqref{A_variance1term_2} follows analogously. We have
\begin{align}
    \Var(\E[S_1' |\Xbf_1]   \E[ k(\Ybf_1, \cdot)&|\Xbf_1]) =\Var(\E[S_1' |\Xbf_1] (  \E[ k(\Ybf_1, \cdot)|\Xbf_1] - \mu(\xbf)+ \mu(\xbf) )) \nonumber \\
    &=  \Var(\E[S_1' |\Xbf_1] \mu(\xbf)     ) + \Var(\E[S_1' |\Xbf_1]( \E[ k(\Ybf_1, \cdot)|\Xbf_1] - \mu(\xbf) )) \nonumber \\
    &\quad+ \Cov\left(\E[S_1' |\Xbf_1] \mu(\xbf), \E[S_1' |\Xbf_1]( \E[ k(\Ybf_1, \cdot)|\Xbf_1] - \mu(\xbf) ) \right) . 
\end{align}
Because 
\begin{align*}
    \Var(\E[S_1' |\Xbf_1] \mu(\xbf)     )
    =&\E[ \|(\E[S_1' |\Xbf_1] - \E[S_1']) \mu(\xbf)\|_{\H}^2 ]\\
    =&\E[ (\E[S_1' |\Xbf_1] - \E[S_1'])^2] \|\mu(\xbf)\|_{\H}^2\\
    =&\Var(\E[S_1' |\Xbf_1])\|\mu(\xbf)\|_{\H}^2,
\end{align*}
it follows that 
\begin{align*}
    \Var(\E[S_1' |\Xbf_1]   \E[ k(\Ybf_1, \cdot)|\Xbf_1]) &= \Var(\E[S_1' |\Xbf_1])\|\mu(\xbf)\|_{\H}^2\\
    &\quad+ \Var(\E[S_1' |\Xbf_1]( \E[ k(\Ybf_1, \cdot)|\Xbf_1] - \mu(\xbf) )) \nonumber \\
    &\quad+ \Cov\left(\E[S_1' |\Xbf_1] \mu(\xbf), \E[S_1' |\Xbf_1]( \E[ k(\Ybf_1, \cdot)|\Xbf_1] - \mu(\xbf) ) \right).
\end{align*}
Because $\E[S_1' |\Xbf_1]$ maps into $\R_{\geq 0}$, we have
\begin{align*}
    \Var(\E[S_1' |\Xbf_1]( \E[ k(\Ybf_1, \cdot)|\Xbf_1]- \mu(\xbf))) & \leq \E[ \|\E[S_1' |\Xbf_1]( \E[ k(\Ybf_1, \cdot)|\Xbf_1]) - \mu(\xbf))   \|_{\H}^2 ] \\
    &= \E[\E[S_1' |\Xbf_1]^2 \|( \E[ k(\Ybf_1, \cdot)|\Xbf_1]- \mu(\xbf)    ) \|_{\H}^2 ]\\
    & \leq \E[\E[S_1'^2 |\Xbf_1] C^2\|\Xbf_1 -\xbf \|_{\R^p}^2 ]\\
    & \leq \E[S_1'^2  ] C^2 s_n^{-2w},
\end{align*}
where the last step followed because $\E[S_1'^2 |\Xbf_1]=0$, for $\|\Xbf_1 -x \|_{\R^p} > s_n^{-w}$ by definition of $S_1'=S_1\1\{ \text{diam}(\Lcal(\xbf))\leq s_n^{-w}\} $. Since $\E[S_1'^2  ]\leq \E[S_1'] \leq \E[S_1]=\O(s_n^{-1})$, we have 
\begin{align*}
      \Var(\E[S_1' |\Xbf_1]( \E[ k(\Ybf_1, \cdot)|\Xbf_1]- \mu(\xbf)))=\O(s_n^{-(1+2w)}).
\end{align*}
Finally, we infer from the Cauchy-Schwarz inequality,
\begin{align*}
    &\left| \Cov\left(\E[S_1' |\Xbf_1] \mu(\xbf), \E[S_1' |\Xbf_1]( \E[ k(\Ybf_1, \cdot)|\Xbf_1] - \mu(\xbf) ) \right)\right| \\
    \leq& \sqrt{\Var(\E[S_1' |\Xbf_1] \mu(\xbf)     )} \sqrt{\Var(\E[S_1' |\Xbf_1]( \E[ k(\Ybf_1, \cdot)|\Xbf_1]- \mu(\xbf) )) }\\
    =& \O(s_n^{-(1+w)}), 
\end{align*}
due to
\begin{align}\label{A_whatweneed3}
    \Var(\E[S_1' |\Xbf_1] \mu(\xbf)     )&= \Var(\E[S_1' |\Xbf_1]) \cdot \O(1)= \O(s_n^{-1}).
\end{align}
Thus, choosing $0< \epsilon < w$, Claim~\eqref{A_variance1term_1} holds.
\end{claimproof}

Before we continue proving the theorem, we note that, due to honesty (\textbf{DRF-F1}), we have
\begin{align}\label{honestchange}
 \sum_{i=2}^{s_n} \E[S_i' k(\Ybf_i, \cdot)\mid \Xbf_1] &= \sum_{i=2}^{s_n} \E[ \E[S_i' k(\Ybf_i, \cdot) \mid  \Xbf_i, \Xbf_1]  \mid \Xbf_1] \nonumber \\
 &=\sum_{i=2}^{s_n} \E[ \E[S_i' \mid  \Xbf_i, \Xbf_1] \E[ k(\Ybf_i, \cdot) \mid  \Xbf_i, \Xbf_1]  \mid \Xbf_1] \nonumber \\
  &=\sum_{i=2}^{s_n} \E[ \E[S_i' \E[ k(\Ybf_i, \cdot) \mid  \Xbf_i]   \mid  \Xbf_i, \Xbf_1] \mid \Xbf_1]\nonumber \\
  &=\sum_{i=2}^{s_n} \E[ S_i' \E[  k(\Ybf_i, \cdot) \mid  \Xbf_i]  \mid \Xbf_1].
\end{align}

Now, we consider the variance of the sum in~\eqref{honestchange}:

\begin{claim}
For some $\epsilon > 0$,
\begin{align}\label{A_variance2term_1}
    \Var\left( \sum_{i=2}^{s_n} \E[S_i' k(\Ybf_i, \cdot)\mid \Xbf_1] \right)&= \Var(\E[S_1'|\Xbf_1]) \|\E[k(\Ybf_1, \cdot)|\Xbf=\xbf]\|_{\H}^2 + \o(s_n^{-(1+\epsilon)})
\end{align}
and
\begin{align}\label{A_variance2term_2}
        \Var\left( \sum_{i=2}^{s_n} \E[S_i' \langle k(\Ybf_i, \cdot),f \rangle\mid \Xbf_1] \right) &= \Var(\E[S_1'|\Xbf_1]) \E[\langle k(\Ybf_1, \cdot), f \rangle|\Xbf=\xbf]^2 + \o(s_n^{-(1+\epsilon)}).
\end{align}

\end{claim}

\begin{claimproof}

We only show~\eqref{A_variance2term_1}, because~\eqref{A_variance2term_2} follows analogously. First we note that, using the definition of $S_i'$, it holds that  
\begin{align*}
\sum_{i=1}^{s_n} \E[S_i'\mid \Xbf_1]&=\E[\sum_{i=1}^{s_n} S_i' \mid \Xbf_1] \nonumber \\
&=\P\left(\text{diam}(\Lcal(\xbf))\leq s_n^{-w} \mid \Xbf_1\right).
\end{align*}
It follows that
\begin{align}\label{A_Sdproperty}
   \sum_{i=2}^{s_n} \E[S_i'\mid \Xbf_1]=\P\left(\text{diam}(\Lcal(\xbf))\leq s_n^{-w} \mid \Xbf_1\right)-\E[S_1' \mid \Xbf_1]
\end{align}

By~\eqref{A_Sdproperty} 
\begin{align*}
    \Var\left(\mu(\xbf) \sum_{i=2}^{s_n} \E[S_i' \mid \Xbf_1]  \right)&= \|\mu(\xbf)\|_{\H}^2\Var\left( \P\left(\text{diam}(\Lcal(\xbf))\leq s_n^{-w} \mid \Xbf_1\right)-\E[S_1' \mid \Xbf_1]\right)\\
    &=\|\mu(\xbf)\|_{\H}^2\Var(\E[S_1' \mid \Xbf_1]) \\
    &\quad+ \|\mu(\xbf)\|_{\H}^2\Var( \P\left(\text{diam}(\Lcal(\xbf))\leq s_n^{-w} \mid \Xbf_1\right)) \\
    &\quad- 2 \|\mu(\xbf)\|_{\H}^2\Cov(\P\left(\text{diam}(\Lcal(\xbf))\leq s_n^{-w} \mid \Xbf_1\right), \E[S_1' \mid \Xbf_1])
\end{align*}
Using Assumption \ref{forestass1starDRF}, it holds that
\begin{align*}
    \Var( \P\left(\text{diam}(\Lcal(\xbf))> s_n^{-w} \mid \Xbf_1\right))=\o(s_n^{-(1+\epsilon)}),
\end{align*}
where $\text{diam}(\Lcal(\xbf))$ is bounded because $[0,1]^p$ is a bounded set. Combining this with the fact that $\Var(\E[S_1' \mid \Xbf_1])=\O(s_n^{-1})$, we get
\begin{align*}
    &\Cov(\P\left(\text{diam}(\Lcal(\xbf))\leq s_n^{-w} \mid \Xbf_1\right), \E[S_1' \mid \Xbf_1])  \\
    &\leq(\Var(\P\left(\text{diam}(\Lcal(\xbf))\leq s_n^{-w} \mid \Xbf_1\right)))^{1/2} (\Var(\E[S_1' \mid \Xbf_1]))^{1/2}\\
    &=\o(s_n^{-(1+\epsilon/2)}).
\end{align*}
Thus,
\begin{align}\label{ConnectionbetweenS1andSi}
    \Var\left(\mu(\xbf) \sum_{i=2}^{s_n} \E[S_i' \mid \Xbf_1]  \right)
    &= \Var(\E[S_1'|\Xbf_1]) \|\mu(\xbf)\|_{\H}^2 + \o(s_n^{-(1+\epsilon)}).
\end{align}
Thus, we need to show that
\begin{claim}
    For some $\epsilon > 0$,
    \begin{align}\label{ConnectionbetweenSkandS}
     \Var\left( \sum_{i=2}^{s_n} \E[S_i' k(\Ybf_i, \cdot)\mid \Xbf_1] \right)=\Var\left(\mu(\xbf) \sum_{i=2}^{s_n} \E[S_i' \mid \Xbf_1]  \right)  + \o(s_n^{-(1+\epsilon)}),
\end{align}
\end{claim}

\begin{claimproof}
Claim \eqref{ConnectionbetweenSkandS} is implied by
\begin{align}\label{A_whatweneed}
    \Var\left( \sum_{i=2}^{s_n} \E[S_i' k(\Ybf_i, \cdot)\mid \Xbf_1] -  \mu(\xbf) \sum_{i=2}^{s_n} \E[S_i' \mid \Xbf_1]  \right) =\O(s_n^{-(1+2w)}),
\end{align}
\begin{align}\label{A_whatweneed2}
    \Var\left( \mu(\xbf) \sum_{i=2}^{s_n} \E[S_i' \mid \Xbf_1]  \right) =\mathcal{O}(s_n^{-1}),
\end{align}
Indeed it holds that, 
\begin{align*}
  &\Var\left( \sum_{i=2}^{s_n} \E[S_i' k(\Ybf_i, \cdot)\mid \Xbf_1] \right)  \\
  &=\Var\left( \sum_{i=2}^{s_n} \E[S_i' k(\Ybf_i, \cdot)\mid \Xbf_1] -  \mu(\xbf) \sum_{i=2}^{s_n} \E[S_i' \mid \Xbf_1] +\mu(\xbf) \sum_{i=2}^{s_n} \E[S_i' \mid \Xbf_1] \right)\\
  &= \Var(\mu(\xbf) \sum_{i=2}^{s_n} \E[S_i' \mid \Xbf_1] )+ \Var(  \sum_{i=2}^{s_n} \E[S_i' k(\Ybf_i, \cdot)\mid \Xbf_1] -  \mu(\xbf) \sum_{i=2}^{s_n} \E[S_i' \mid \Xbf_1]) \\
  &+ 2 \Cov \left( \sum_{i=2}^{s_n} \E[S_i' k(\Ybf_i, \cdot)\mid \Xbf_1] -  \mu(\xbf) \sum_{i=2}^{s_n} \E[S_i' \mid \Xbf_1],\mu(\xbf) \sum_{i=2}^{s_n} \E[S_i' \mid \Xbf_1] \right).
\end{align*}
Thus \eqref{A_whatweneed} and \eqref{A_whatweneed2}, combined with the fact that
\begin{align*}
    &\Cov \left( \sum_{i=2}^{s_n} \E[S_i' k(\Ybf_i, \cdot)\mid \Xbf_1] -  \mu(\xbf) \sum_{i=2}^{s_n} \E[S_i' \mid \Xbf_1],\mu(\xbf) \sum_{i=2}^{s_n} \E[S_i' \mid \Xbf_1] \right)\leq \\
    &\left( \Var(\sum_{i=2}^{s_n} \E[S_i' k(\Ybf_i, \cdot)\mid \Xbf_1] -  \mu(\xbf) \sum_{i=2}^{s_n} \E[S_i' \mid \Xbf_1]) \right)^{1/2} \left(\Var \left( \mu(\xbf) \sum_{i=2}^{s_n} \E[S_i' \mid \Xbf_1] \right)\right)^{1/2},
\end{align*}
gives \eqref{ConnectionbetweenSkandS}. Condition \eqref{A_whatweneed2} is an immediate consequence of \eqref{ConnectionbetweenS1andSi} and the fact that $\Var(\E[S_1'|\Xbf_1])=\O(s_n^{-1})$. Subsequently, we establish~\eqref{A_whatweneed}. 
Now, with~\eqref{honestchange}, we have 
\begin{align}\label{A_whatweneed22}
      \Var&\left( \sum_{i=2}^{s_n} \E[S_i' k(\Ybf_i, \cdot)\mid \Xbf_1] -  \mu(\xbf) \sum_{i=2}^{s_n} \E[S_i' \mid \Xbf_1]  \right)\nonumber\\
      &=\Var\left( \sum_{i=2}^{s_n} \E[ S_i' \E[  k(\Ybf_i, \cdot) \mid  \Xbf_i]  \mid \Xbf_1] -   \E[S_i' \mu(\xbf) \mid \Xbf_1]  \right) \nonumber \\
      &=\Var\left( \sum_{i=2}^{s_n} \E[ S_i'( \E[  k(\Ybf_i, \cdot) \mid  \Xbf_i] - \mu(\xbf) )   \mid \Xbf_1]  \right)\nonumber \\
      &=\Var\left( \sum_{i=2}^{s_n} \E[ S_i'\Delta(\Xbf_i)  \mid \Xbf_1]  \right),
\end{align}
with $\Delta(\Xbf_i)=\E[  k(\Ybf_i, \cdot) \mid  \Xbf_i] - \mu(\xbf) $. 
\begin{claim}
\begin{align}\label{crucialclaimforsumSi}
        \E[\|\E[ \sum_{i=2}^{s_n} S_i'\Delta(\Xbf_i)  \mid \Lcal(\xbf), \Xbf_1] - \E[\sum_{i=2}^{s_n} S_i'\Delta(\Xbf_i)  \mid \Lcal(\xbf)]\|_{\H}^2 ]= \o(s_n^{-(1+\epsilon)}).
\end{align}
\end{claim}
\begin{claimproof}
Before we go on, note that for any conditioning set $\F$,
\begin{align}\label{A_100d2}
  \left \|\sum_{i=2}^{s_n}\E[ S_i'\Delta(\Xbf_i)  \mid \F] \right \|_{\H}   &\leq   \E\left[ \sum_{i=2}^{s_n} S_i' \left \| \Delta(\Xbf_i)\right\|_{\H} \mid \F \right] \nonumber\\
  &\leq  \E\left[\sum_{i=2}^{s_n} S_i' C \left \| \Xbf_i -  \xbf\right\|_{\R^p} \mid \F \right] \nonumber\\
  &\leq C s_n^{-w} \E\left[\sum_{i=2}^{s_n} S_i' \mid \F \right]  \nonumber\\
  &\leq  C s_n^{-w},
\end{align}

We now show that:
\begin{claim}
For some $\epsilon > 0$,
    \begin{align}\label{A_firstpartofthepuzzle}
     \E[ \| \E[ \sum_{i=2}^{s_n} S_i'\Delta(\Xbf_i)  \mid \Lcal(\xbf), \Xbf_1] -  \E[ \sum_{i=2}^{s_n} S_i'\Delta(\Xbf_i)  \mid \{\Xbf_1 \notin \Lcal(\xbf)\}, \Lcal(\xbf)]\|_{\H}^2 ]=\o(s_n^{-(1+\epsilon)}).
\end{align} 
\end{claim}

\begin{claimproof}
We first note that
\begin{align}\label{newextendedconnection}
    \E[ \| \E[\sum_{i=2}^{s_n} S_i'\Delta(\Xbf_i)  \mid \Lcal(\xbf), \Xbf_1] -  \E[ \sum_{i=2}^{s_n}S_i'\Delta(\Xbf_i)  \mid \1\{\Xbf_1 \notin \Lcal(\xbf)\}, \Lcal(\xbf)] \|_{\H}^2 ] = \o(s_n^{-(1+\epsilon)}),
\end{align}
for some $\epsilon > 0$.
Indeed, since $\1\{\Xbf_1 \notin \Lcal(\xbf)\}$ is a deterministic function of $\Xbf_1$, $\Lcal(\xbf)$, we have that 
\begin{align}
    \E[\sum_{i=2}^{s_n} S_i'\Delta(\Xbf_i)  \mid \Lcal(\xbf), \Xbf_1] &= \E[\sum_{i=2}^{s_n} S_i'\Delta(\Xbf_i)  \mid \1\{\Xbf_1 \notin \Lcal(\xbf)\}, \Lcal(\xbf), \Xbf_1],
\end{align}
and since $f( (\Xbf_j, \1\{\Xbf_j \in \Lcal(\xbf)\})_{j \in \I_2, j \neq i} )= \sum_{i=2}^{s_n} S_i' \Delta(\Xbf_i)$ is bounded, we can use \eqref{forestass1starDRF_2} in \ref{forestass1starDRF}:
\begin{align*}
    &\E[ \| \E[ \sum_{i=2}^{s_n} S_i'\Delta(\Xbf_i)  \mid \Lcal(\xbf), \Xbf_1] -  \E[ \sum_{i=2}^{s_n}S_i'\Delta(\Xbf_i)  \mid \1\{\Xbf_1 \notin \Lcal(\xbf)\}, \Lcal(\xbf)] \|_{\H}^2 ]=\\
    &\E[ \| \E[\sum_{i=2}^{s_n} S_i'\Delta(\Xbf_i)  \mid \1\{\Xbf_1 \notin \Lcal(\xbf)\}, \Lcal(\xbf), \Xbf_1]-  \E[\sum_{i=2}^{s_n} S_i'\Delta(\Xbf_i)  \mid \1\{\Xbf_1 \notin \Lcal(\xbf)\}, \Lcal(\xbf)] \|_{\H}^2 ]\\
    &=\o(s_n^{-(1 + \epsilon)}),
\end{align*}
showing \eqref{newextendedconnection}. Starting from \eqref{newextendedconnection}, we note that
 \begin{align*}
     \E[\sum_{i=2}^{s_n} S_i'\Delta(\Xbf_i)  \mid \1\{\Xbf_1 \notin \Lcal(\xbf)\}, \Lcal(\xbf)]
     &=\E[\sum_{i=2}^{s_n} S_i'\Delta(\Xbf_i)  \mid \{\Xbf_1 \notin \Lcal(\xbf)\}, \Lcal(\xbf)]\1\{\Xbf_1 \notin \Lcal(\xbf)\} + \\
        &\E[ \sum_{i=2}^{s_n}S_i'\Delta(\Xbf_i)  \mid \{\Xbf_1 \in \Lcal(\xbf)\}, \Lcal(\xbf)]\1\{\Xbf_1 \in \Lcal(\xbf)\}
 \end{align*}
  and thus,
\begin{align}\label{A_firstpartofthepuzzle_1}
    &\E[ \| \E[\sum_{i=2}^{s_n} S_i'\Delta(\Xbf_i)  \mid \1\{\Xbf_1 \notin \Lcal(\xbf)\}, \Lcal(\xbf)] -  \E[ \sum_{i=2}^{s_n} S_i'\Delta(\Xbf_i)  \mid \{\Xbf_1 \notin \Lcal(\xbf)\}, \Lcal(\xbf)]\1\{\Xbf_1 \notin \Lcal(\xbf)\}\|_{\H}^2 ] \nonumber \\
    &=\E[\|\E[ \sum_{i=2}^{s_n} S_i'\Delta(\Xbf_i)  \mid \{\Xbf_1 \in \Lcal(\xbf)\}, \Lcal(\xbf)]\1\{\Xbf_1 \in \Lcal(\xbf)\}\|_{\H}^2] \nonumber \\
    &\leq \E[\E[ \sum_{i=2}^{s_n} S_i'\|\Delta(\Xbf_i)\|_{\H}  \mid \{\Xbf_1 \in \Lcal(\xbf)\}, \Lcal(\xbf)]^2  \1\{\Xbf_1 \in \Lcal(\xbf)\} ]\nonumber\\
      &\leq C s_{n}^{-2w} \Prob(\Xbf_1 \in \Lcal(\xbf)) \nonumber\\
      &=\O(s_{n}^{-(1+2w)}),
\end{align}
utilizing \eqref{A_100d2}. Similarly, 
\begin{align}\label{A_firstpartofthepuzzle_2}
    &\E[ \|  \E[ \sum_{i=2}^{s_n} S_i'\Delta(\Xbf_i)  \mid \{\Xbf_1 \notin \Lcal(\xbf)\}, \Lcal(\xbf)]\1\{\Xbf_1 \notin \Lcal(\xbf)\}  - \E[ \sum_{i=2}^{s_n} S_i'\Delta(\Xbf_i)  \mid \{\Xbf_1 \notin \Lcal(\xbf)\}, \Lcal(\xbf)]\|_{\H}^2 ]\nonumber\\
    &=\E[ \|  \E[ \sum_{i=2}^{s_n} S_i'\Delta(\Xbf_i)  \mid \{\Xbf_1 \notin \Lcal(\xbf)\}, \Lcal(\xbf)]\1\{\Xbf_1 \in \Lcal(\xbf)\} \|_{\H}^2]\nonumber\\
        &\leq C s_{n}^{-2w} \Prob(\Xbf_1 \in \Lcal(\xbf))\nonumber\\
      &=\O(s_{n}^{-(1+2w)}).
\end{align}
Repeatedly using $\| \xi_1 + \xi_2 \|_{\H}^2 \leq 2 \| \xi_1 \|_{\H}^2 +  2 \|\xi_2\|_{\H}^2$ and then utilizing \eqref{newextendedconnection}, \eqref{A_firstpartofthepuzzle_1} and \eqref{A_firstpartofthepuzzle_2}, we obtain
\begin{align*}
    &\E[ \| \E[ \sum_{i=2}^{s_n} S_i'\Delta(\Xbf_i)  \mid \Lcal(\xbf), \Xbf_1] -  \E[ \sum_{i=2}^{s_n} S_i'\Delta(\Xbf_i)  \mid \{\Xbf_1 \notin \Lcal(\xbf)\}, \Lcal(\xbf)]\|_{\H}^2 ] \leq \\
    & 2 \E[ \| \E[\sum_{i=2}^{s_n} S_i'\Delta(\Xbf_i)  \mid \Lcal(\xbf), \Xbf_1] -  \E[ \sum_{i=2}^{s_n}S_i'\Delta(\Xbf_i)  \mid \1\{\Xbf_1 \notin \Lcal(\xbf)\}, \Lcal(\xbf)] \|_{\H}^2 ] + \\
    & 4\E[ \| \E[ \sum_{i=2}^{s_n}S_i'\Delta(\Xbf_i)  \mid \1\{\Xbf_1 \notin \Lcal(\xbf)\}, \Lcal(\xbf)] -  \E[ \sum_{i=2}^{s_n} S_i'\Delta(\Xbf_i)  \mid \{\Xbf_1 \notin \Lcal(\xbf)\}, \Lcal(\xbf)]\1\{\Xbf_1 \notin \Lcal(\xbf)\}\|_{\H}^2 ] + \\
    & 4\E[ \|  \E[ \sum_{i=2}^{s_n} S_i'\Delta(\Xbf_i)  \mid \{\Xbf_1 \notin \Lcal(\xbf)\}, \Lcal(\xbf)]\1\{\Xbf_1 \notin \Lcal(\xbf)\}  - \E[ \sum_{i=2}^{s_n} S_i'\Delta(\Xbf_i)  \mid \{\Xbf_1 \notin \Lcal(\xbf)\}, \Lcal(\xbf)]\|_{\H}^2 ]\\
    &=\O(s_{n}^{-(1+2w)}),
\end{align*}
showing \eqref{A_firstpartofthepuzzle}. 
\end{claimproof}

Similarly, we have that
\begin{align*} 
    \E[ \sum_{i=2}^{s_n} S_i'\Delta(\Xbf_i)  \mid \Lcal(\xbf)]&=\E[ \sum_{i=2}^{s_n} S_i'\Delta(\Xbf_i)  \mid \{\Xbf_1 \notin \Lcal(\xbf)\}, \Lcal(\xbf)]\Prob(\Xbf_1 \notin \Lcal(\xbf) \mid \Lcal(\xbf)) + \nonumber\\
        &\E[ \sum_{i=2}^{s_n} S_i'\Delta(\Xbf_i)  \mid \{\Xbf_1 \in \Lcal(\xbf)\}, \Lcal(\xbf)]\Prob(\Xbf_1 \in \Lcal(\xbf) \mid \Lcal(\xbf))
\end{align*}
and, since $\E[\Prob(\Xbf_1 \in \Lcal(\xbf) \mid \Lcal(\xbf))^2]\leq \E[\Prob(\Xbf_1 \in \Lcal(\xbf) \mid \Lcal(\xbf))]$, we can use the same approach as in \eqref{A_firstpartofthepuzzle_1} and \eqref{A_firstpartofthepuzzle_2} to get:
\begin{align}\label{A_secondpartofthepuzzle}
    \E[ \| \E[ \sum_{i=2}^{s_n} S_i'\Delta(\Xbf_i)  \mid \Lcal(\xbf)] -  \E[ \sum_{i=2}^{s_n} S_i'\Delta(\Xbf_i)  \mid \{\Xbf_1 \notin \Lcal(\xbf)\}, \Lcal(\xbf)]\|_{\H}^2 ]=\o(s_n^{-(1+\epsilon)}).
\end{align}

Using \eqref{A_firstpartofthepuzzle} and \eqref{A_secondpartofthepuzzle}, we get:
\begin{align*}
    &\E[\|\E[ \sum_{i=2}^{s_n} S_i'\Delta(\Xbf_i)  \mid \Lcal(\xbf), \Xbf_1] - \E[ \sum_{i=2}^{s_n} S_i'\Delta(\Xbf_i)  \mid \Lcal(\xbf)]\|_{\H}^2 ] \\
    &\leq 2 \E[ \| \E[ \sum_{i=2}^{s_n} S_i'\Delta(\Xbf_i)  \mid \Lcal(\xbf), \Xbf_1] -  \E[ 
 \sum_{i=2}^{s_n} S_i'\Delta(\Xbf_i)  \mid \{\Xbf_1 \notin \Lcal(\xbf)\}, \Lcal(\xbf)]\|_{\H}^2 ]\\
    &+2\E[ \| \E[ \sum_{i=2}^{s_n} S_i'\Delta(\Xbf_i)  \mid \Lcal(\xbf)] -  \E[ \sum_{i=2}^{s_n} S_i'\Delta(\Xbf_i)  \mid \{\Xbf_1 \notin \Lcal(\xbf)\}, \Lcal(\xbf)]\|_{\H}^2 ]\\
    &=\o(s_n^{-(1+\epsilon)}),
\end{align*}
proving Claim \eqref{crucialclaimforsumSi}.
\end{claimproof}

Thus it follows that
\begin{align*}
    &\Var(\E[\sum_{i=2}^{s_n} S_i'\Delta(\Xbf_i)  \mid \Xbf_1] ) \\
    &= \Var( \E[ \E[\sum_{i=2}^{s_n} S _i'\Delta(\Xbf_i) \mid \Lcal(\xbf)]  \mid \Xbf_1] )+ \Var( \E[ \E[ \sum_{i=2}^{s_n} S_i'\Delta(\Xbf_i) \mid \Lcal(\xbf)]  \mid \Xbf_1]-\E[ \sum_{i=2}^{s_n} S_i'\Delta(\Xbf_i)  \mid \Xbf_1] )\\
    &+ 2 \Cov(\E[ \E[\sum_{i=2}^{s_n} S _i'\Delta(\Xbf_i) \mid \Lcal(\xbf)]  \mid \Xbf_1],\E[ \E[ \sum_{i=2}^{s_n} S _i'\Delta(\Xbf_i) \mid \Lcal(\xbf)]  \mid \Xbf_1]-\E[\sum_{i=2}^{s_n} S_i'\Delta(\Xbf_i)  \mid \Xbf_1]  )
\end{align*}
First, note that by assumption \ref{forestass1starDRF}, and the boundedness of $\E[ \sum_{i=2}^{s_n} S _i'\Delta(\Xbf_i) \mid \Lcal(\xbf)]$,
\begin{align*}
    \Var( \E[ \E[ \sum_{i=2}^{s_n} S _i'\Delta(\Xbf_i) \mid \Lcal(\xbf)]  \mid \Xbf_1] ) = \o(s_n^{-(1+\epsilon)}).
\end{align*}
Moreover, using Jensen's inequality and \eqref{crucialclaimforsumSi},
\begin{align*}
    &\Var( \E[ \E[\sum_{i=2}^{s_n} S _i'\Delta(\Xbf_i) \mid \Lcal(\xbf)]  \mid \Xbf_1]-\E[ \sum_{i=2}^{s_n} S_i'\Delta(\Xbf_i)  \mid \Xbf_1] )  \\
    &\leq\E[  \|\E[ \E[\sum_{i=2}^{s_n}  S _i'\Delta(\Xbf_i) \mid \Lcal(\xbf)] - \E[\sum_{i=2}^{s_n} S _i'\Delta(\Xbf_i) \mid \Xbf_1, \Lcal(\xbf)]  \mid \Xbf_1] \|_{\H}^2 ]\\
    &\leq \E[  \E[\| \E[\sum_{i=2}^{s_n} S _i'\Delta(\Xbf_i) \mid \Lcal(\xbf)] - \E[ \sum_{i=2}^{s_n} S _i'\Delta(\Xbf_i) \mid \Xbf_1, \Lcal(\xbf)]\|_{\H}^2  \mid \Xbf_1]  ]\\
    &= \E[\| \E[\sum_{i=2}^{s_n} S _i'\Delta(\Xbf_i) \mid \Lcal(\xbf)] - \E[\sum_{i=2}^{s_n} S _i'\Delta(\Xbf_i) \mid \Xbf_1, \Lcal(\xbf)]\|_{\H}^2    ]\\
    &=\o(s_n^{-(1+\epsilon)}).
\end{align*}
Adding the Cauchy-Schwarz inequality to bound the covariances, we obtain
\begin{align*}
    \Var(\E[\sum_{i=2}^{s_n} S_i'\Delta(\Xbf_i)  \mid \Xbf_1] )=\o(s_n^{-(1+\epsilon)}),
\end{align*}
showing \eqref{A_whatweneed}. As outlined above, \eqref{A_whatweneed} and \eqref{A_whatweneed2} imply \eqref{ConnectionbetweenSkandS}.
\end{claimproof}

In turn, \eqref{ConnectionbetweenSkandS} combined with \eqref{ConnectionbetweenS1andSi} proves Claim \eqref{A_variance2term_1}.
\end{claimproof}

Finally, we consider the covariance between $\E[S_1' k(\Ybf_i, \cdot)\mid \Xbf_1]$ and $\sum_{i=2}^{s_n} \E[S_i' k(\Ybf_i, \cdot)\mid \Xbf_1]$.

\begin{claim}
For some $\epsilon > 0$, we have
\begin{align}\label{A_variance3term_1}
        &\Cov\left( \sum_{i=2}^{s_n} \E[S_i' k(\Ybf_i, \cdot)\mid \Xbf_1], \E[S_1'|\Xbf_1]  \E[ k(\Ybf_1, \cdot)|\Xbf_1] \right)\nonumber \\
        &= -\Var(\E[S_1'|\Xbf_1]) \|\E[k(\Ybf_1, \cdot)|\Xbf=\xbf]\|_{\H}^2 + \o(s_n^{-(1+\epsilon)})
\end{align}
and
\begin{align}\label{A_variance3term_2}
    &\Cov\left( \sum_{i=2}^{s_n} \E[S_i' \langle k(\Ybf_i, \cdot),f \rangle\mid \Xbf_1], \E[S_1'|\Xbf_1]  \E[ \langle k(\Ybf_1, \cdot), f \rangle|\Xbf_1] \right) \nonumber\\
    &= -\Var(\E[S_1'|\Xbf_1]) \E[\langle k(\Ybf_1, \cdot), f \rangle|\Xbf=\xbf]^2 + \o(s_n^{-(1+\epsilon)}).
\end{align}
\end{claim}

\begin{claimproof}
Again, we only show~\eqref{A_variance3term_1}, because~\eqref{A_variance3term_2} follows analogously. Using~\eqref{honestchange}, we can subtract and add $\mu(\xbf)\sum_{i=2}^{s_n} \E[S_i'|\Xbf_1]$ and $\E[S_1'|\Xbf_1] \mu(\xbf)$ to obtain
\begin{align*}
     &\Cov\left( \sum_{i=2}^{s_n} \E[S_i' k(\Ybf_i, \cdot)\mid \Xbf_1], \E[S_1'|\Xbf_1]  \E[ k(\Ybf_1, \cdot)|\Xbf_1] \right)\\
     =&  \Cov\left( \sum_{i=2}^{s_n} \E[ S_i' \E[  k(\Ybf_i, \cdot) \mid  \Xbf_i]  \mid \Xbf_1], \E[S_1'|\Xbf_1]  \E[ k(\Ybf_1, \cdot)|\Xbf_1] \right)\\
    =&  \Cov\left( \sum_{i=2}^{s_n} \E[ S_i' \Delta(\Xbf_i)  \mid \Xbf_1] + \mu(\xbf) \sum_{i=2}^{s_n} \E[S_i' \mid \Xbf_1 ], \E[S_1'  \Delta(\Xbf_1)|\Xbf_1]  + \mu(\xbf) \E[S_1'|\Xbf_1]  \right)\\
    =&\Cov\left( \sum_{i=2}^{s_n} \E[ S_i' \Delta(\Xbf_i)  \mid \Xbf_1] , \E[S_1' \Delta(\Xbf_1)|\Xbf_1]    \right) +\Cov\left( \sum_{i=2}^{s_n} \E[ S_i' \Delta(\Xbf_i)  \mid \Xbf_1] , \mu(\xbf) \E[S_1'|\Xbf_1] \right)\\
    &\quad+\Cov\left( \mu(\xbf) \sum_{i=2}^{s_n} \E[S_i' \mid \Xbf_1 ] , \E[S_1' \Delta(\Xbf_1)|\Xbf_1]    \right) +  \Cov\left( \mu(\xbf) \sum_{i=2}^{s_n} \E[S_i' \mid \Xbf_1 ], \mu(\xbf) \E[S_1'|\Xbf_1]  \right)\\
    =& (I) + (II) + (III) + (IV),
\end{align*}
where again $\Delta(\Xbf_i)=\E[  k(\Ybf_i, \cdot) \mid  \Xbf_i]  - \mu(\xbf)$. Since from~\eqref{A_Sdproperty},
\begin{align}
    \mu(\xbf)\sum_{i=2}^{s_n} \E[S_i'|\Xbf_1] = \mu(\xbf)(\P\left(\text{diam}(\Lcal(\xbf))\leq s_n^{-w} \mid \Xbf_1\right)-\E[S_1' \mid \Xbf_1]),
\end{align}
it holds that
\begin{align*}
    (IV)&=\Cov\left( \mu(\xbf) (\P\left(\text{diam}(\Lcal(\xbf))\leq s_n^{-w} \mid \Xbf_1\right)-\E[S_1'|\Xbf_1]), \mu(\xbf) \E[S_1'|\Xbf_1]  \right)\\
    &=\| \mu(\xbf) \|_{\H}^2 \left( \Cov(\P\left(\text{diam}(\Lcal(\xbf))\leq s_n^{-w} \mid \Xbf_1\right), \E[S_1'|\Xbf_1]) - \Var(\E[S_1'|\Xbf_1]) \right) \\
    &=-\Var(\E[S_1'|\Xbf_1]) \| \mu(\xbf) \|_{\H}^2 + \o(s_n^{-(1+\epsilon)}),
\end{align*}
where the last step again followed by Cauchy--Schwarz and \ref{forestass1starDRF}:
\begin{align*}
&\Cov(\P\left(\text{diam}(\Lcal(\xbf))\leq s_n^{-w} \mid \Xbf_1\right), \E[S_1'|\Xbf_1]) \\
&\leq \Var(\P\left(\text{diam}(\Lcal(\xbf))\leq s_n^{-w} \mid \Xbf_1\right))^{1/2} \Var( \E[S_1'|\Xbf_1])^{1/2}\\
&=\o(s_n^{-(1+\epsilon)}),%
\end{align*}
as $\Var( \E[S_1'|\Xbf_1])=\O(s_n^{-1})$.

Subsequently, we show that the remaining terms are negligible.
Due to the Cauchy--Schwarz inequality, we have
\begin{align*}
    |(I)| \leq  \Var\left(\sum_{i=2}^{s_n} \E[ S_i' \Delta(\Xbf_i)  \mid \Xbf_1]\right)^{1/2} \Var(\E[S_1' \Delta(\Xbf_1)|\Xbf_1])^{1/2} .
\end{align*}
As proven above (combining~\eqref{A_whatweneed} and~\eqref{A_whatweneed22}), $\Var(\sum_{i=2}^{s_n} \E[ S_i' \Delta(\Xbf_i)  \mid \Xbf_1])=\o(s_n^{-(1+\epsilon)})$, and it holds that
\[
\Var(\E[S_1' \Delta(\Xbf_1)|\Xbf_1]) \leq \E[  \E[S_1' \|\Delta(\Xbf_1)\|_{\H}|\Xbf_1]^2]=\mathcal{O}(s_n^{-(1+2w)})
\]
holds. Consequently, $(I)=\o(s_n^{-(1+\epsilon)})$. Similarly,
\begin{align*}
    |(II)| \leq  \Var\left(\sum_{i=2}^{s_n} \E[ S_i' \Delta(\Xbf_i)  \mid \Xbf_1] \right)^{1/2} \Var(\mu(\xbf) \E[S_1'|\Xbf_1]) ^{1/2}=\o(s_n^{-(1+\epsilon)}),
\end{align*}
as $\Var(\E[S_1'|\Xbf_1])\leq \E[(S_1')^2]=\mathcal{O}(s_n^{-1}).$
Finally, 
\begin{align*}
    |(III)|&=|\Cov\left( \mu(\xbf) (1-\E[S_1' \Delta(\Xbf_1)|\Xbf_1] ) , \E[S_1' \Delta(\Xbf_1)|\Xbf_1]    \right)| \\
    &=|- \|\mu(\xbf)\|_{\H}^2 \Var(\E[S_1' \Delta(\Xbf_1)|\Xbf_1])|\\
    &=\mathcal{O}(s_n^{-(1+2w)})
\end{align*}
as above. Thus (I), (II) and (III) are all $\o(s_n^{-(1+\epsilon)})$, for (potentially different) $\epsilon > 0$. Combining this with (IV), we obtain Claim \eqref{A_variance3term_1}.

\end{claimproof}

Combining~\eqref{A_variance1term_1},~\eqref{A_variance2term_1}, and~\eqref{A_variance3term_1}, we obtain
\begin{align*}
        &\Var( \E[T'(\Zcal_{s_n})| \Xbf_1 ])\\
        &= 2\Var(\E[S_1'|\Xbf_1]) \|\E[k(\Ybf_1, \cdot)|\Xbf=\xbf]\|_{\H}^2 - 2\Var(\E[S_1'|\Xbf_1]) \|\E[k(\Ybf_1, \cdot)|\Xbf=\xbf]\|_{\H}^2  + \o(s_n^{-(1+\epsilon)})\\
        &=\o(s_n^{-(1+\epsilon)})
\end{align*}
proving~\eqref{A_Varboundnew}.

\end{claimproof}


\end{proof}

\section{Main Proofs} \label{appendix: proofs}

We recall from the main text that
\begin{equation*}
    \htauk = \binom{n}{s_n}^{-1}  \sum_{i_1 < i_2 < \cdots < i_{s_n}} \E_{\mathcal{E}} \left[\Gamma(\xbf; \mathcal{E}, \{\mathbf{Z}_{i_1}, \ldots, \mathbf{Z}_{i_{s_n}}\})\right],
\end{equation*}
whereby 
\begin{align*}
    \E_{\mathcal{E}} \left[\Gamma(\xbf; \mathcal{E}, \{\mathbf{Z}_{i_1}, \ldots, \mathbf{Z}_{i_{s_n}}\})\right]= &\frac{1}{ | \{j: \Xbf_j \in \Lcal(\xbf), W_j=1 \}|}\sum_{\Xbf_i\in\Lcal(\xbf), W_i=1} k(\Ybf_i,\cdot) - \nonumber \\
    &\frac{1}{ | \{j: \Xbf_i \in \Lcal(\xbf), W_j=0 \}|}\sum_{\Xbf_i\in\Lcal(\xbf), W_i=0} k(\Ybf_i,\cdot).
\end{align*}
To abbreviate this, we will just write 
\begin{align}\label{singletree}
   \Gamma(\mathcal{Z}_{s_n})= \E_{\mathcal{E}} \left[\Gamma(\xbf; \mathcal{E}, \{\mathbf{Z}_{i_1}, \ldots, \mathbf{Z}_{i_{s_n}}\})\right]=\sum_{i=1}^{s_n} S_i k(\Ybf_{i}, \cdot)
\end{align}
with weights
\begin{align}\label{SdefGamma}
     S_{i1}&= \frac{\1\{\Xbf_{i} \in \mathcal{L}(\xbf), W_i=1\}}{| \{j: \Xbf_j \in \Lcal(\xbf), W_j=1 \}|}\\
     S_{i0}&= -\frac{\1\{\Xbf_{i} \in \mathcal{L}(\xbf), W_i=0\}}{| \{j: \Xbf_j \in \Lcal(\xbf), W_j=0 \}|}\\
     S_i &= S_{i1}+S_{i0}.
\end{align}
We note that $\htauk$ takes the sum over all possible subsets of size $s_n$. As such, the chosen subsets of size $s_n$ might not have the indices $\Zbf_1, \ldots, \Zbf_{s_n}$ in general. To simplify notation we usually only consider the indices $i=1,\ldots, s_n$, appealing to the assumption of i.i.d. data.

\begin{lemma} \label{lemma4}
Suppose \dataass{1} holds for $\Xbf_1,\Xbf_2, \ldots$ and \forestass{3}, \forestass{4} for $\Gamma(\Zcal_{s_n})$. Moreover, assume \ref{causalityass1} holds. Then, there is a constant $C_{f,p}$ depending on $f$ and $p$ such that,
\begin{equation}\label{loverboundvarianceSbehavior}
    s_n \Var(\E[S_1 | \Zbf_1 ]) \succsim \frac{\varepsilon}{\kappa} \frac{C_{f,p}}{\log(s_n)^p},
\end{equation}
where $\varepsilon > 0$ is from the overlap condition in \ref{causalityass1}. When $f$ is uniform over $[0,1]^p$, the bound holds with $C_{f,p}=2^{-(p+1)} (p-1)!$. Moreover, it holds that for $j \in \{0,1\}$
\begin{align}\label{crucialSproperties2}
    &\sum_{i=1}^{s_n} |S_{i}|=2, \sum_{i: W_i=j} |S_{ij}|=1 \nonumber\\
    &\E[|S_{1}|]=\O(s_n^{-1}), \E[|S_{1j}|]=\O(s_n^{-1}), \E[|S_{1j}| \mid W_1=j]=\O(s_n^{-1}), \nonumber\\ 
    &\Var( \E[S_{1}\mid \Zbf_1 ])=\O(s_n^{-1}), \E[S_{1}^2]=\O(s_n^{-1}), \Var( \E[S_{1j}\mid \Zbf_1 ])=\O(s_n^{-1}), \E[S_{1j}^2]=\O(s_n^{-1}).
\end{align}
\end{lemma}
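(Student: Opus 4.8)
My plan is to treat the elementary identities and the $\O(s_n^{-1})$ bounds in \eqref{crucialSproperties2} separately from the genuinely delicate lower bound \eqref{loverboundvarianceSbehavior}. For the identities, note directly from \eqref{SdefGamma} that the treated weights are nonnegative and $\sum_{i\colon W_i=1}|S_{i1}|=\sum_{i\colon W_i=1}S_{i1}=1$, because the numerator indicators sum to the denominator; symmetrically $\sum_{i\colon W_i=0}|S_{i0}|=1$, and since $S_{i1},S_{i0}$ are supported on disjoint index sets, $\sum_{i=1}^{s_n}|S_i|=2$. Taking expectations and using the tree symmetry \forestass{3} together with \dataass{1}, the variables $|S_{ij}|$ (resp.\ $|S_i|$) are exchangeable and hence identically distributed, so $\E[|S_{1j}|]=s_n^{-1}$ and $\E[|S_1|]=2s_n^{-1}$, which are the claimed first-moment bounds.

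For the conditional first moment $\E[|S_{1j}|\mid W_1=j]$, I would condition on the entire label vector $(W_1,\dots,W_{s_n})$: given the labels the group-$j$ covariates are i.i.d.\ and the construction is symmetric, so the $n_j:=|\{i\colon W_i=j\}|$ group-$j$ weights are exchangeable and each has conditional expectation $1/n_j$. Hence $\E[|S_{1j}|\mid W_1=j]=\E[1/n_j\mid W_1=j]$, and writing $n_j=1+\sum_{i\ge 2}\1\{W_i=j\}$ under $W_1=j$, overlap \ref{causalityass1} forces each indicator to have conditional mean at least $\varepsilon$, so a standard reciprocal-moment bound for sums of conditionally independent Bernoullis gives $\E[1/n_j]=\O((\varepsilon s_n)^{-1})=\O(s_n^{-1})$. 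The remaining upper bounds follow from boundedness: the leaf-size part of \forestass{4} guarantees at least $\kappa$ points per group in each leaf, so $|S_{ij}|\le\kappa^{-1}$ and thus $S_{1j}^2\le\kappa^{-1}|S_{1j}|$, giving $\E[S_{1j}^2]=\O(s_n^{-1})$; the same holds for $S_1$ (alternatively $S_1^2=S_{11}^2+S_{10}^2$). Finally, conditional Jensen yields $\Var(\E[S_1\mid\Zbf_1])\le\E[S_1^2]=\O(s_n^{-1})$ and likewise for $S_{1j}$.

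The substance of the lemma is the lower bound \eqref{loverboundvarianceSbehavior}. First, $\E[S_1]=\E[S_{11}]+\E[S_{10}]=s_n^{-1}-s_n^{-1}=0$, and since the weights do not depend on $\Ybf_1$ this reduces the target to $\Var(\E[S_1\mid\Zbf_1])=\E[\E[S_1\mid\Xbf_1,W_1]^2]$. Because $S_{11}$ is supported on $\{W_1=1\}$, I retain only the treated contribution, $\E[\E[S_{11}\mid\Xbf_1,W_1]^2]$, and use the \emph{upper} leaf-size bound $|\{j\colon\Xbf_j\in\Lcal(\xbf),W_j=1\}|\le 2\kappa-1<2\kappa$ from \forestass{4} to replace the weight by a membership probability: $\E[S_{11}\mid\Xbf_1=\xbf_1,W_1=1]\ge\frac{1}{2\kappa}q(\xbf_1)$, where $q(\xbf_1)=\Prob(\xbf_1\in\Lcal(\xbf)\mid\Xbf_1=\xbf_1,W_1=1)$. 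Writing $e(\xbf_1)=\Prob(W_1=1\mid\Xbf_1=\xbf_1)\ge\varepsilon$ by overlap \ref{causalityass1}, this gives the clean reduction
\[
\Var(\E[S_1\mid\Zbf_1])\ \ge\ \frac{1}{4\kappa^2}\,\E[e(\Xbf_1)\,q(\Xbf_1)^2]\ \ge\ \frac{\varepsilon}{4\kappa^2}\,\E[q(\Xbf_1)^2],
\]
so the factor $\varepsilon$ is produced by the propensity and the problem is reduced to lower-bounding $\E[q(\Xbf_1)^2]$ under the marginal law of $\Xbf_1$ (whose density is controlled by \dataass{1}).

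The remaining step, bounding $\E[q(\Xbf_1)^2]$ from below, is the main obstacle, and it is exactly the single-group analogue of the variance lower bound of \citet{wager2018estimation} and \citet{näf2023confidence}. The point is that $\E[q^2]$ is of the order of the $f$-measure of the leaf, namely $\Theta(\kappa/s_n)$, rather than its square, because $q$ behaves like the indicator of $\Lcal(\xbf)$; establishing this requires the random-split guarantee \forestass{2} and $\alpha$-regularity \forestass{4} to control how the adaptively grown leaf shrinks and to show that a treated point drawn near $\xbf$ shares its leaf with non-negligible probability. The extra difficulty relative to the one-group setting is that the leaf must now simultaneously satisfy the per-group size constraints, so its geometry depends on both groups, but the counting argument carries over. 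Tracking the typical number of splits (of order $\log s_n$, spread across the $p$ coordinates) yields the $\log(s_n)^{-p}$ factor, and in the uniform case a careful accounting of the per-coordinate side lengths produces the explicit constant $C_{f,p}=2^{-(p+1)}(p-1)!$; combining with the prefactor $\varepsilon/(4\kappa^2)$ and multiplying by $s_n$ gives \eqref{loverboundvarianceSbehavior}. Everything outside this geometric estimate is elementary moment arithmetic.
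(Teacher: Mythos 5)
Your proposal is correct and follows essentially the same route as the paper: the identities and the $\O(s_n^{-1})$ moment bounds come from the per-group sum-to-one property plus exchangeability of the i.i.d.\ sample (the paper obtains $\E[|S_{1j}|\mid W_1=j]$ by the one-line identity $\E[|S_{1j}|]/\Prob(W_1=j)$ together with overlap, rather than your exchangeability-plus-reciprocal-Binomial argument, but both are valid), and the variance bounds follow from $|S_{1j}|\le 1$ and Jensen exactly as you do. For the lower bound \eqref{loverboundvarianceSbehavior} the paper gives no argument beyond citing the incrementality proof of \citet{wager2017estimation}, which is precisely the geometric estimate you also defer to; your explicit reduction to $\E[q(\Xbf_1)^2]$, with the factor $\varepsilon$ extracted from the propensity and $1/(2\kappa)$ from the maximal per-group leaf size, is consistent with that cited argument.
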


\begin{proof}

The proof of \eqref{loverboundvarianceSbehavior} is the same as in \citep[p. 47]{wager2017estimation} when proving incrementally. 

We next note that for $i=1,\ldots,s_n$ 
\begin{align*}
    |S_i| = \begin{cases}
        \frac{\1\{\Xbf_{i} \in \mathcal{L}(\xbf)\}}{| \{j: \Xbf_j \in \Lcal(\xbf), W_j=1 \}|} & \text{ if } W_i=1\\
        \frac{\1\{\Xbf_{i} \in \mathcal{L}(\xbf)\}}{| \{j: \Xbf_j \in \Lcal(\xbf), W_j=0 \}|} & \text{ if } W_i=0,
    \end{cases}
\end{align*}
and that by \ref{forestass4}, $| \{j: \Xbf_j \in \Lcal(\xbf), W_j=l \}|$ is not empty, for $l \in \{0,1\}$. Thus indeed $\sum_{i: W_i=j} |S_{ij}|=1$, so that $\sum_{i=1}^{s_n} |S_{i}|=2$. Thus it follows by the i.i.d assumption that
\begin{align*}
    2 = \E[\sum_{i=1}^{s_n} |S_i|] = s_n \E[|S_1|]
\end{align*}
showing $\E[|S_{1}|]=\O(s_n^{-1})$. Since $ \E[|S_{1}|]=\E[|S_{11}|] + \E[|S_{10}|]$, this also shows that $\E[|S_{1j}|]=\O(s_n^{-1})$. By the overlap conditon \ref{causalityass1}, we moreover have:
\[
\E[|S_{1j}| \mid W_1=j] = \frac{\E[|S_{1j}|]}{\Prob(W_1=j)} \leq \frac{\E[|S_{1j}|]}{\min(\varepsilon,1-\varepsilon)},
\]
showing that $\E[|S_{1j}| \mid W_1=j]=\O(s_n^{-1})$ as well.
Finally, using the fact that $|S_{ij}| \in [0,1]$,
\begin{align*}
    0\leq \Var(\E[S_{1j} \mid \Zbf_1]) \leq \E[ \E[S_{1j} \mid \Zbf_1]^2 ] \leq  \E[ \E[|S_{1j}| \mid \Zbf_1] ] = \E[|S_{1j}|]=\O(s_n^{-1}),
\end{align*}
and similarly for $\E[S_1^2]$. 
\end{proof}

We note that by honesty \ref{forestass1}, $\E[S_1 | \Zbf_1 ]= \E[S_1 | \Xbf_1, W_1 ]$, and so the same properties also hold, for $\E[S_1 | \Xbf_1, W_1 ]$, in particular
\begin{equation}\label{loverboundvarianceSbehavior2}
    s_n \Var(\E[S_1 | \Xbf_1, W_1 ]) \succsim \frac{\varepsilon}{\kappa} \frac{C_{f,p}}{\log(s_n)^p}.
\end{equation}
This also implies that
\begin{align}\label{firsttimevarminusE}
    \Var(\E[S_1 | \Xbf_1, W_1 ])&= \E[\E[S_1 | \Xbf_1, W_1 ]^2]-\E[S_1 ]^2 \nonumber \\
    &=\E[\E[S_1 | \Xbf_1, W_1 ]^2] + \O(s_n^{-2}).
\end{align}
With \eqref{loverboundvarianceSbehavior2}, this means that the difference between $\Var(\E[S_1 | \Zbf_1 ])$ and $\E[\E[S_1 | \Zbf_1 ]^2]$ is negligible. Let 
\begin{align}\label{vardefSij}
    \Var(\E[S_{1j} | \Xbf_1, W_1=j] )=\E[\E[S_{1j} | \Xbf_1, W_1 ]^2 \mid W_1=j] - \E[S_{1j} | W_1=j ]^2.
\end{align}
Then,  
\begin{align}\label{overallvsindividual}
    &\E[\E[S_1 | \Xbf_1, W_1 ]^2] \nonumber\\
    &=\E[\E[ \E[\E[S_1 | \Xbf_1, W_1 ]^2 \mid W_1] ]] \nonumber \\
    &=\E[\E[S_1 | \Xbf_1, W_1 ]^2 \mid W_1=1] \Prob(W_1=1)  + \E[\E[S_1 | \Xbf_1, W_1 ]^2 \mid W_1=0]\Prob(W_1=0)  \nonumber\\
    &=\Var(\E[S_{11} | \Xbf_1, W_1=1 ])  \Prob(W_1=1)  + \Var(\E[S_{10} | \Xbf_1, W_1=0 ])\Prob(W_1=0)+ \O(s_n^{-2})
\end{align}
as for $j \in \{0,1\}$
\[
\E[S_{1j} \mid W_1=j]^2 \Prob(W_1=j)=\O(s_n^{-2}). 
\]
Combining \eqref{firsttimevarminusE} and \eqref{overallvsindividual}, we get
\begin{align}\label{overalliszeroplusone}
\Var(\E[S_1 | \Xbf_1, W_1 ])&=\Var(\E[S_{11} | \Xbf_1, W_1=1 ] )  \Prob(W_1=1)  \nonumber \\
    &+ \Var(\E[S_{10} | \Xbf_1, W_1=0 ] )\Prob(W_1=0)+ \O(s_n^{-2}).
\end{align}
This helps to relate the behavior of $\Var(\E[S_{1j} | \Xbf_1, W_1=j ] )$ to the overall variance $\Var(\E[S_1 | \Xbf_1, W_1 ])$. In particular, either $\Var(\E[S_{11} | \Xbf_1, W_1=1 ] )= \Omega( (\log(s_n)^p s_n)^{-1})$ or $\Var(\E[S_{10} | \Xbf_1, W_1=0 ])= \Omega( (\log(s_n)^p s_n)^{-1})$ or both. These arguments will be crucial for the proof of Theorem \ref{thm:varianceassumption} below. We now turn to the behavior of the bias:

\begin{lemma} \label{lemma2}
Assume \forestass{2}--\forestass{4} 
that is trained on data $\Zcal_{s_n}$. Suppose that assumption~\dataass{1} holds for $\Xbf_1,\ldots, \Xbf_{s_n}$ and that \ref{causalityass1} holds. Then,
\begin{align}
    \P \left( \text{\textup{diam}} (\Lcal(\xbf)) \geq \sqrt{p} \left( \frac{\varepsilon s_n}{2\kappa-1}\right) ^{-0.51 \frac{\log((1-\alpha)^{-1})}{\log(\alpha^{-1})} \frac{\pi}{p}}   \right) \leq p \left( \frac{\varepsilon s_n}{2\kappa-1}\right) ^{-1/2 \frac{\log((1-\alpha)^{-1})}{\log(\alpha^{-1})} \frac{\pi}{p}}.
\end{align}
\end{lemma}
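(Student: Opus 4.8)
The plan is to adapt the diameter bound of \citet{wager2018estimation} (their Lemma 2) to the two-group causal setting, where the genuinely new ingredient is the overlap condition \ref{causalityass1}, which controls the size of the minority treatment group. Throughout, write $c(\xbf)$ for the number of splits on the path from the root to $\Lcal(\xbf)$, and $c_j(\xbf)$ for the number of those splits occurring along coordinate $j$, so that $c(\xbf)=\sum_{j=1}^p c_j(\xbf)$, and let $\mathrm{diam}_j(\Lcal(\xbf))$ denote the side length of the leaf along coordinate $j$.

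First I would lower bound $c(\xbf)$. By \forestass{4} the tree is grown until each leaf holds between $\kappa$ and $2\kappa-1$ observations from each group, and each split retains at least a fraction $\alpha$ of the parent's samples in the binding (minority) group. The overlap condition \ref{causalityass1} guarantees that at the root the minority group contains on the order of $\varepsilon s_n$ points (in expectation, and with high probability after a Bernoulli concentration step). Tracing the minority count down the path, where it shrinks by a factor at least $\alpha$ per split and terminates below $2\kappa-1$, yields $c(\xbf)\geq \log\!\left(\tfrac{\varepsilon s_n}{2\kappa-1}\right)/\log(\alpha^{-1})$. This is precisely where the quantity $\tfrac{\varepsilon s_n}{2\kappa-1}$ enters the statement.

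Next, conditionally on $c(\xbf)$, the random-split assumption \forestass{2} makes $c_j(\xbf)$ stochastically dominate a $\mathrm{Binomial}(c(\xbf),\pi/p)$ variable, with mean at least $\tfrac{\pi}{p}c(\xbf)$. I would then relate the side length to $c_j(\xbf)$: using \dataass{1} (density bounded away from $0$ and $\infty$), each split along $j$ retains at most a fraction $(1-\alpha)$ of the parent's probability content along that coordinate, so $\mathrm{diam}_j(\Lcal(\xbf))$ is of order $(1-\alpha)^{c_j(\xbf)}$. Consequently the event $\{\mathrm{diam}_j(\Lcal(\xbf)) \geq (\tfrac{\varepsilon s_n}{2\kappa-1})^{-0.51\frac{\log((1-\alpha)^{-1})}{\log(\alpha^{-1})}\frac{\pi}{p}}\}$ is contained in $\{c_j(\xbf) < 0.51\,\tfrac{\pi}{p}c(\xbf)\}$, a lower-tail deviation of the binomial below its mean. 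A Chernoff/Hoeffding bound, with the constants $0.51$ and $\tfrac12$ tuned so the tail matches the target, gives $\P(\mathrm{diam}_j(\Lcal(\xbf)) \text{ large}) \leq (\tfrac{\varepsilon s_n}{2\kappa-1})^{-\frac12\frac{\log((1-\alpha)^{-1})}{\log(\alpha^{-1})}\frac{\pi}{p}}$. Finally, since $\mathrm{diam}(\Lcal(\xbf))\leq \sqrt{p}\,\max_j \mathrm{diam}_j(\Lcal(\xbf))$, a union bound over the $p$ coordinates produces the $\sqrt p$ prefactor and the leading factor $p$ in the claim.

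The main obstacle I anticipate is the first step in the causal setting: unlike \citet{wager2018estimation}, the stopping rule and regularity are phrased per treatment group, so I must show that the minority group's count both decreases geometrically at rate $\alpha$ and starts at order $\varepsilon s_n$. Making the ``$\gtrsim \varepsilon s_n$ minority points at the root'' statement hold with high probability (via \ref{causalityass1} and a binomial concentration argument) and checking that the fraction-$\alpha$ guarantee of \forestass{4} transfers to the minority group is the delicate part; the conversion from probability content to Euclidean side length under \dataass{1}, and the precise Chernoff tuning producing the $0.51$ and $\tfrac12$ exponents, are then routine adaptations of the arguments in \citet{wager2018estimation}.
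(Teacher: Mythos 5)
Your proposal is correct and follows essentially the same route as the paper, which simply invokes the diameter argument of \citet[pp.\ 34, 46]{wager2017estimation} together with a union bound over the $p$ coordinates, the only causal-specific modification being exactly the one you identify: replace $s_n$ by the minority-group count, which is $\succsim \varepsilon s_n$ by the overlap condition \ref{causalityass1}. The paper states this replacement almost surely in one line rather than spelling out the concentration and per-group $\alpha$-regularity details you flag, but the substance is identical.
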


\begin{proof}
   This follows directly from the arguments in \citep[pages 34 and 46]{wager2017estimation}, in addition to a union bound over all dimensions of $\Xbf$. In particular, instead of $s_n$ we need to consider the minimal number of observations of the smaller class $\Gamma$ receives, say $s_{n, min}$. By \ref{causalityass1}, it holds a.s. that $s_{n, min}  \succsim \varepsilon s_n $, such that $s_{n, min}$ can be replaced by $\varepsilon s_n$.
\end{proof}


\begin{corollary}\label{bias}
In addition to the conditions of Lemma~\ref{lemma2}, assume \dataass{2}, \ref{causalityass2} and that the trees $\Gamma(\Zcal_{s_n})$ in the forest satisfy \ref{forestass1}. Then, we have
\begin{equation}\label{biasbound}
    \| \E[\htauk] - \tauk \|_{\H} = \O\left( (\varepsilon s_n)^{-1/2 \frac{\log((1-\alpha)^{-1})}{\log(\alpha^{-1})} \frac{\pi}{p}}\right).
\end{equation}
\end{corollary}

\begin{proof}
We recall that
\begin{align*}
     S_i&= \frac{\1\{\Xbf_{i} \in \mathcal{L}(\xbf), W_i=1\}}{| \{j: \Xbf_j \in \Lcal(\xbf), W_j=1 \}|} - \frac{\1\{\Xbf_{i} \in \mathcal{L}(\xbf), W_i=0\}}{| \{j: \Xbf_j \in \Lcal(\xbf), W_j=0 \}|}\\
     &=S_{i1} + S_{i0}
\end{align*}
Analogous to \citep{wager2017estimation}, we may write
\begin{align*}
   & \| \E[\Gamma(\Zcal_{s_n})] - \tauk\|_{\H}  \\
   &=\|\E\left[\sum_{W_i=1} S_{i1} k(\Ybf_i,\cdot) \right]- \E\left[\sum_{ W_i=0} ( - S_{i0}) k(\Ybf_i,\cdot)\right] - \tauk\|_{\H}\\
    &\leq \|\E[\sum_{ W_i=1} S_{i1} k(\Ybf_i,\cdot) ] - \E[k(\Ybf^{(1)}, \cdot) \mid \Xbf=\xbf]\|_{\H}+  \\
    &\| \E[\sum_{ W_i=0} (-S_{i0}) k(\Ybf_i,\cdot)]- \E[k(\Ybf^{(0)}, \cdot) \mid \Xbf=\xbf]\|_{\H}.
\end{align*}
Adapting the arguments of \citep[Lemma 12]{DRF-paper}, using that $\sum_{i: W_i=1} S_{i1}=\sum_{i: W_i=0} (-S_{i0})=1$,
\begin{align*}
    \E[\sum_{ W_j=1} S_{i1} k(\Ybf_i,\cdot)] &= \E[ \E[ k(\Ybf, \cdot) \mid \Xbf \in \Lcal(\xbf), W=1, \Lcal(\xbf) ]   ]\\
    &=\E[ \E[ k(\Ybf^{(1)}, \cdot) \mid \Xbf \in \Lcal(\xbf), \Lcal(\xbf)]   ],
\end{align*}
where the last argument follows by \ref{causalityass2}. The same applies to the control group, and with \ref{dataass2} thus we can directly apply the arguments in \citep[Corollary 13]{DRF-paper} for each term separately to obtain 
\begin{align*}
     \| \E[\Gamma(\Zcal_{s_n})] - \tauk\|_{\H} = \O\left( (\varepsilon s_n)^{-1/2 \frac{\log((1-\alpha)^{-1})}{\log(\alpha^{-1})} \frac{\pi}{p}}\right).
\end{align*}
Since the bias of a tree is the same as the bias of the overall forest, we obtain \eqref{biasbound}.
\end{proof}




Denote by 
\begin{align}
    \Gamma_n(\Zbf_i) = \E[\Gamma(\mathcal{Z}_{s_n})\mid\Zbf_i] - \E[T(\mathcal{Z}_{s_n})],
\end{align}
and by
\begin{align}\label{firstorderapproxT}
    \tilde{\Gamma}(\Zcal_{s_n})=\sum_{i=1}^{s_n}\Gamma_n(\Zbf_i),
\end{align}
the first order approximation of $\Gamma(\Zcal_{s_n})$ as in Lemma \ref{Hdecomposition}. Recall from the main text that $\mu^{(0)}(\xbf)=\E[k(\Ybf^{0}, \cdot) \mid \Xbf=\xbf]$ and $\mu^{(1)}(\xbf)=\E[k(\Ybf^{1}, \cdot) \mid \Xbf=\xbf]$. Moreover, we define the random element
\begin{align}\label{muWx}
    \mu^{(W)}(\xbf)=\mu^{(0)}(\xbf) (1-W) + \mu^{(1)}(\xbf) W,
\end{align}
and
\begin{align}\label{vardef}
    \sigma^2(\xbf, W)&=\E[\|k(\Ybf, \cdot) - \mu^{(W)}(\xbf) \|_{\H}^2 \mid \Xbf=\xbf, W].
\end{align}

Finally, for any $f \in \H\setminus\{0\}$, we also define
\begin{align}\label{fvardef}
    \sigma^2_f(\xbf, W)=\E[|\langle f, k(\Ybf, \cdot) \rangle - \langle f, \mu^{(W)}(\xbf) \rangle |^2 \mid \Xbf=\xbf, W].
\end{align}

Finally we define
\begin{align}
    \Var(\E[S_1 |\Xbf_1, W_1] \mid  W_1)= \Var(\E[S_{11} |\Xbf_1, W_1=1]) W_1 + \Var(\E[S_{10} |\Xbf_1, W_1=0]) (1-W_1), 
 \end{align}
such that,
\begin{align}
    &\E \Big[ \Var(\E[S_1 |\Xbf_1, W_1] \mid  W_1)  \sigma^2(\xbf, W_1) \Big ]\\
    &=\Var(\E[S_{11} |\Xbf_1, W_1=1]) \sigma^2(\xbf, 1) \Prob(W_1=1) + \Var(\E[S_{10} |\Xbf_1, W_1=0]) \sigma^2(\xbf,0) \Prob(W_1=0),
\end{align}
where $\Var(\E[S_{1j} |\Xbf_1, W_1=j])$ is defined in \eqref{vardefSij}.

In the following, we give our crucial result, Theorem \ref{thm:varianceassumption}, which studies the behavior of $\Var( \E[\langle \Gamma(\Zcal_{s_n}), f \rangle \mid \Zbf_1 ])$, for any $f \in \H$, and $\Var( \E[\Gamma(\Zcal_{s_n})| \Zbf_1 ])$. In particular Theorem \ref{thm:varianceassumption} implies that:
\begin{itemize}
    \item[1.] The ratio of $\Var( \E[\langle \Gamma(\Zcal_{s_n}), f \rangle \mid \Zbf_1 ])$ to $\Var( \E[\Gamma(\Zcal_{s_n})| \Zbf_1 ])$ converges to a real number under \ref{forestass6CausalDRF}. This will be crucial in showing univariate convergence, which in turn is a crucial ingredient in showing total convergence.
    \item[2.] Together with Lemma \ref{lemma4}, \eqref{loverboundvarianceSbehavior2} and \eqref{overalliszeroplusone}, it shows that
    \begin{align}\label{lowervariancebehavior}
        \Var(\tilde{\Gamma}(\Zcal_{s_n})) = s_n \Var(\E[\Gamma(\Zcal_{s_n})| \Zbf_1 ])=\Omega\left( \frac{1}{\log(s_n)^p} \right).
    \end{align}
This will be essential, as it means that the variance of the first-order approximation decays slow enough relative to the variance of the actual tree $\Gamma(\Zcal_{s_n})$. 
\end{itemize}

\begin{remark}\label{rmk:Difficulty}
Proving Theorem \ref{thm:varianceassumption} boils down to controlling the variance of $\E[\Gamma(\Zcal_{s_n})| \Xbf_1, W_1 ]$. In \cite{näf2023confidence} a similar argument had to be made only conditioning on $\Xbf_1$, which allowed to employ successive Lipschitz arguments. However, here the conditioning on $W_1$ adds variance that does not vanish as $n \to \infty$. Thus, compared to DRF for Causal-DRF we need to differentiate the case $W_1=1$ and $W_1=0$. Although this may seem benign, it requires a more careful treatment. This may best be seen in the somewhat awkward looking term $\E \Big[ \Var(\E[S_1 |\Xbf_1, W_1] \mid  W_1) \sigma^2(\xbf, W_1) \Big ] $. Its analogue in the proof of \cite[Theorem 5]{näf2023confidence} was simply $\Var(\E[S_1  \mid \Xbf_1])\Var(k(\Ybf_1, \cdot) \mid \Xbf=\xbf) $.
\end{remark}

\begin{theorem}\label{thm:varianceassumption}
    Assume conditions~\ref{forestass1}--\ref{forestass6CausalDRF},~\ref{forestass1starCausalDRF},~\ref{dataass1}--\ref{dataass6},~\ref{kernelass1}--\ref{kernelass2} and \ref{causalityass1}--\ref{causalityass2} hold. Then, for all $f \in \H\setminus\{0\}$, we have
    \begin{align}\label{varexpansionfull}
  \Var( \langle \Gamma_{n}(\Zbf_1), f \rangle )&= \E \Big[ \Var(\E[S_1 |\Xbf_1, W_1] \mid  W_1) \sigma_f^2(\xbf, W_1) \Big ] + \o(s_n^{-(1+\epsilon)}), \nonumber \\
\Var( \Gamma_{n}(\Zbf_1))&=  \E \Big[ \Var(\E[S_1 |\Xbf_1, W_1] \mid  W_1)  \sigma^2(\xbf, W_1) \Big ]   + \o(s_n^{-(1+\epsilon)})
\end{align}
for some $\epsilon > 0$. In addition, if \ref{forestass6CausalDRF} holds,
\begin{align}\label{varianceassumption}
    &\lim_{n \to \infty }\frac{\Var(\langle \Gamma_{n}(\Zbf_1), f \rangle)}{\Var(\Gamma_{n}(\Zbf_1))}
    = \frac{c \sigma_f^2(\xbf, 1) \Prob(W=1) + \sigma_f^2(\xbf, 0) \Prob(W=0)}{c \sigma^2(\xbf, 1) \Prob(W=1) +  \sigma^2(\xbf, 0) \Prob(W=0)},
\end{align}
with $\Gamma_{n}(\Zbf_1)=\E[\Gamma(\Zcal_{s_n})| \Zbf_1 ] - \E[\Gamma(\Zcal_{s_n})]$.
\end{theorem}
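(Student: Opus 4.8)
The plan is to first reduce the first-order (projection) term $\Gamma_n(\Zbf_1)=\E[\Gamma(\Zcal_{s_n})\mid\Zbf_1]-\E[\Gamma(\Zcal_{s_n})]$ to an explicit leading form, and then compute its variance by conditioning on $W_1$. By honesty \ref{forestass1} the leaf, and hence every weight $S_i$, is a function of the first sample and of $(\Xbf_j,W_j)_{j\in\I_2}$ only, so the summands with $i\ge 2$ are conditionally independent of $\Ybf_1$ given $(\Xbf_1,W_1)$. Consequently I would write
$$\E[\Gamma(\Zcal_{s_n})\mid\Zbf_1]=\E[S_1\mid\Xbf_1,W_1]\,k(\Ybf_1,\cdot)+\sum_{i=2}^{s_n}\E[S_i\,k(\Ybf_i,\cdot)\mid\Xbf_1,W_1],$$
and pass to the truncated weights $S_i'=S_i\1\{\mathrm{diam}(\Lcal(\xbf))\le s_n^{-w}\}$ with $w$ as in \eqref{wdef}, the truncation error being controlled by the diameter bound of Lemma \ref{lemma2} exactly as in the DRF correction.

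The crux is the sum over $i\ge 2$. Using honesty together with unconfoundedness \ref{causalityass2}, so that $\E[k(\Ybf_i,\cdot)\mid\Xbf_i,W_i]=\mu^{(W_i)}(\Xbf_i)$, and the Lipschitz data assumptions, I would replace each $k(\Ybf_i,\cdot)$ by the group-specific embedding $\mu^{(W_i)}(\xbf)$ at a cost of $\o(s_n^{-(1+\epsilon)})$ in variance. This is precisely the analog of Claims \eqref{ConnectionbetweenSkandS} and \eqref{crucialclaimforsumSi}; the one genuinely new ingredient is that the conditioning set now carries $W_1$ in addition to $\Xbf_1$, which is exactly why the honesty refinement \ref{forestass1starCausalDRF} is stated with $(\Xbf_i,W_i)$ rather than $\Xbf_i$ alone. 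Splitting the sum into its treatment and control parts and invoking the constraints $\sum_{i:W_i=1}S_{i1}=1$ and $\sum_{i:W_i=0}S_{i0}=-1$ from Lemma \ref{lemma4}, the surviving contribution collapses, on the event $\{W_1=j\}$, to $\tau_k(\xbf)-\E[S_1\mid\Xbf_1,W_1]\,\mu^{(j)}(\xbf)$. Combined with the bias bound $\E[\Gamma(\Zcal_{s_n})]=\tau_k(\xbf)+\o(\cdot)$ of Corollary \ref{bias}, this yields the leading form
$$\Gamma_n(\Zbf_1)=\E[S_1\mid\Xbf_1,W_1]\bigl(k(\Ybf_1,\cdot)-\mu^{(W_1)}(\xbf)\bigr)+R,\qquad \E[\norm{R}_{\H}^2]=\o(s_n^{-(1+\epsilon)}).$$

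With this in hand I would compute $\Var(\Gamma_n(\Zbf_1))$ directly. Its mean is negligible (the inner expectation produces $\mu^{(W_1)}(\Xbf_1)-\mu^{(W_1)}(\xbf)$, of order $s_n^{-w}$ on the support of $S_1$), so the variance equals $\E[\norm{\cdot}_{\H}^2]$ up to $\o(s_n^{-(1+\epsilon)})$. Conditioning on $W_1$ and using the Lipschitz assumption to replace $\E[\norm{k(\Ybf_1,\cdot)-\mu^{(W_1)}(\xbf)}_{\H}^2\mid\Xbf_1,W_1]$ by $\sigma^2(\xbf,W_1)$ near $\xbf$, and then using $\E[S_{1j}\mid W_1=j]^2=\O(s_n^{-2})$ from Lemma \ref{lemma4} to turn $\E[\E[S_1\mid\Xbf_1,W_1]^2]$ into the centered quantity $\Var(\E[S_1\mid\Xbf_1,W_1]\mid W_1)$ as in \eqref{overalliszeroplusone}, gives the second line of \eqref{varexpansionfull}; the first line for $\langle\Gamma_n(\Zbf_1),f\rangle$ follows identically with $\sigma_f^2$ in place of $\sigma^2$.

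Finally, the ratio \eqref{varianceassumption} follows by writing both expansions explicitly as $\Var(\E[S_{11}\mid\Xbf_1,W_1{=}1])\,\sigma^2(\xbf,1)\Prob(W{=}1)+\Var(\E[S_{10}\mid\Xbf_1,W_1{=}0])\,\sigma^2(\xbf,0)\Prob(W{=}0)$ (and likewise with $\sigma_f^2$), dividing numerator and denominator by $\Var(\E[S_{10}\mid\Xbf_1,W_1{=}0])$, and passing $n\to\infty$ using \ref{forestass6CausalDRF} so that the weight-variance ratio tends to $c$. Here \eqref{overalliszeroplusone} together with \eqref{loverboundvarianceSbehavior2} guarantees that at least one of the two group variances is of order $(\log(s_n)^p s_n)^{-1}$, so the denominator dominates the $\o(s_n^{-(1+\epsilon)})$ remainders and the limit is well defined, with the obvious limiting reading when $c\in\{0,\infty\}$. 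The main obstacle is the sum-replacement step: controlling the dependence between $(\Xbf_1,W_1)$ and the leaf $\Lcal(\xbf)$ in the two-group setting, which is exactly the role of \ref{forestass1starCausalDRF} and forces the separate, group-wise bookkeeping that was unnecessary for ordinary DRF.
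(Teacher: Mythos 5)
Your proposal is correct and, at bottom, runs on the same engine as the paper's proof: honesty to factor $\E[\Gamma(\Zcal_{s_n})\mid\Zbf_1]$ into the $i=1$ term and the sum over $i\geq 2$, truncation by the leaf-diameter event, the replacement of $k(\Ybf_i,\cdot)$ by $\mu^{(W_i)}(\xbf)$ controlled through \ref{forestass1starCausalDRF} (the paper's Claims \eqref{whatweneed} and \eqref{crucialclaimforsumSiCausalDRF}), the group-wise identities $\sum_{i:W_i=1}S_{i1}=1$ and $\sum_{i:W_i=0}S_{i0}=-1$, and the final conditioning on $W_1$ to pass from $\E[\E[S_1\mid\Xbf_1,W_1]^2]$ to $\E[\Var(\E[S_1\mid\Xbf_1,W_1]\mid W_1)]$. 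Where you genuinely differ is in the bookkeeping: the paper first applies the $\mathbb{L}^2$-projection identity \eqref{firstkeyequality}, isolates the residual variance (which directly yields the $\sigma^2(\xbf,W_1)$ term), and then shows $\Var(\E[\Gamma'(\Zcal_{s_n})\mid\Xbf_1,W_1])=\o(s_n^{-(1+\epsilon)})$ via an exact cancellation between two variances and a covariance, each equal to $\pm\E[\Var(\E[S_1'\mid\Xbf_1,W_1]\mid W_1)\|\mu^{(W_1)}(\xbf)\|_{\H}^2]$ to leading order (Claims \eqref{variance1term_1}, \eqref{variance2term_1}, \eqref{covariance2term_1}); you instead derive the asymptotically linear representation $\Gamma_n(\Zbf_1)=\E[S_1\mid\Xbf_1,W_1]\bigl(k(\Ybf_1,\cdot)-\mu^{(W_1)}(\xbf)\bigr)+\mathrm{const}+R$ up front, which makes that cancellation structural rather than computational and arguably exposes more clearly why only the $i=1$ fluctuation survives. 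The two routes cost the same technically, since the hard step in either case is bounding $\Var\bigl(\sum_{i\geq 2}\E[S_i'\Delta^{(W_i)}(\Xbf_i)\mid\Xbf_1,W_1]\bigr)$ via the honesty refinement. Two minor remarks: the appeal to Corollary \ref{bias} is unnecessary, since $\E[\Gamma(\Zcal_{s_n})]$ is deterministic and drops out of every variance; and in the ratio step one should normalize by whichever of the two group variances attains the guaranteed order $\Omega\bigl((s_n\log(s_n)^p)^{-1}\bigr)$ from \eqref{overalliszeroplusone} and \eqref{loverboundvarianceSbehavior2} (at least one does), which is exactly what lets the $\o(s_n^{-(1+\epsilon)})$ remainders be absorbed and the degenerate cases $c\in\{0,\infty\}$ be read off correctly.
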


\begin{proof}

Throughout we focus on proving \eqref{varexpansionfull} for $\Var( \E[\Gamma(\Zcal_{s_n}) \mid \Zbf_1 ])$, as the proof for $\Var(\E[\langle \Gamma(\Zcal_{s_n}), f \rangle \mid \Zbf_1 ])$ works analogously.

By the same arguments as in \citep[Theorem 5]{näf2023confidence} and \citep{wager2017estimation}, we can ignore the double-sampling aspect in \ref{forestass1} and condition on a point $\Zbf_1$ used in populating the trees  (i.e. $1 \in \I_2$) and use $s_n$ rather than $|\I_2|$ elements in the tree predictions to simplify notation.

Recall that $\Gamma_{n}(\Zbf_1)=\E[\Gamma(\Zcal_{s_n})| \Zbf_1 ] - \E[\Gamma(\Zcal_{s_n})]$, such that,
\begin{align*}
    \frac{\Var(\langle \Gamma_{n}(\Zbf_1), f \rangle)}{\Var(\Gamma_{n}(\Zbf_1))}= \frac{\Var( \E[\langle \Gamma(\Zcal_{s_n}), f \rangle \mid \Zbf_1 ])}{\Var( \E[\Gamma(\Zcal_{s_n})| \Zbf_1 ])}.
\end{align*}

Let $\1_{w,s_n}=\1\{\text{diam}(\Lcal(\xbf))\leq s_n^{-w}\}$, $ S_i'= S_i \1_{w,s_n}$ and
\begin{align*}
&\Gamma'(\Zcal_{s_n})=\Gamma(\Zcal_{s_n})\1_{w,s_n}=\sum_{i=1}^{s_n} S_i' k(\Ybf_{i}, \cdot)
\end{align*}
 Crucially, $w$ is chosen such that
 \begin{align}\label{Wagerprobbound}
     \Prob( \text{diam}(\Lcal(\xbf, \Zcal_{s_n})) > s_n^{-w})=\mathcal{O}(s_n^{-w}),
 \end{align}
following from Lemma \ref{lemma2}, similar to~\citep{wager2017estimation}. 

First, using the fact that conditional expectations are projections in $\mathbb{L}^2(\Omega, \mathcal{A}, \H)$, we can write
\begin{align}\label{firstkeyequality}
   \Var( \E[\Gamma'(\Zcal_{s_n})| \Zbf_1 ]) = \Var( \E[\Gamma'(\Zcal_{s_n})| \Xbf_1, W_1 ]) + \Var(  \E[\Gamma'(\Zcal_{s_n})| \Xbf_1, W_1, \Ybf_1 ]-\E[\Gamma'(\Zcal_{s_n})| \Xbf_1, W_1 ])  ,
\end{align}
as in \citep[Claim (69)]{näf2023confidence}. We start by focusing on the second variance in this sum: 
\begin{claim}
    \begin{align}\label{firstnewvarexpansion}
        \Var(  \E[\Gamma'(\Zcal_{s_n})| \Xbf_1, W_1, \Ybf_1 ]-\E[\Gamma'(\Zcal_{s_n})| \Xbf_1, W_1 ])= \E \Big[ \Var(\E[S_1' |\Xbf_1, W_1] \mid  W_1) \sigma^2(\xbf, W_1) \Big ] + \o(s_n^{-(1+\epsilon)}),
    \end{align}
    for some $\epsilon > 0$.
\end{claim}

\begin{claimproof}
We adapt the expansion of \citep{näf2023confidence}: By honesty (i) $\Ybf_i$ is independent of $S_i'$ conditional on $\Xbf_i, W_i$, and more generally, (ii) $\Ybf_i$ is independent of $S_j'$, $j=1,\ldots,n$, conditional on $\Xbf_i, W_i$. Thus, using (i), (ii), and the independence of $\Ybf_1$ from $\Ybf_j$, $j > 1$, we have
\begin{align*}
    \E[ \Gamma'(\Zcal_{s_n}) \mid\Xbf_1, W_1, \Ybf_1 ] &= \E[S_1' k(\Ybf_1, \cdot)\mid\Xbf_1, W_1, \Ybf_1] + \sum_{i=2}^{s_n} \E[S_i' k(\Ybf_i, \cdot)\mid\Xbf_1, W_1, \Ybf_1]\\
    &=\E[S_1' \mid \Xbf_1, W_1] k(\Ybf_1, \cdot) + \sum_{i=2}^{s_n} \E[S_i' k(\Ybf_i, \cdot)\mid\Xbf_1, W_1]
\end{align*}
Similarly,
\begin{align*}
  \E[ \Gamma'(\Zcal_{s_n}) \mid \Xbf_1, W_1 ] &=\E[S_1' k(\Ybf_1, \cdot)\mid \Xbf_1, W_1] + \sum_{i=2}^{s_n} \E[S_i' k(\Ybf_i, \cdot)\mid \Xbf_1, W_1]\\
  &=\E[S_1'\mid  \Xbf_1, W_1]  \E[ k(\Ybf_1, \cdot) \mid \Xbf_1, W_1] + \sum_{i=2}^{s_n} \E[S_i' k(\Ybf_i, \cdot)\mid \Xbf_1, W_1]
\end{align*}
and consequently 
\begin{align}\label{21}
   \Var( \E[ T'(\Zcal_{s_n}) \mid \Xbf_1, W_1, \Ybf_1 ]  - \E[ T'(\Zcal_{s_n}) \mid \Xbf_1, W_1 ] )= \Var(\E[S_1' \mid\Xbf_1, W_1] (k(\Ybf_1, \cdot) -   \E[ k(\Ybf_1, \cdot)\mid\Xbf_1,W_1])).
\end{align}
We note that by \ref{causalityass2},
\begin{align*}
    \E[ k(\Ybf_1, \cdot)\mid\Xbf_1,W_1]&=\E[ k(\Ybf_1^{0}, \cdot)\mid\Xbf_1] \1\{W_1=0\} + \E[ k(\Ybf_1^{1}, \cdot)\mid\Xbf_1] \1\{W_1=1\}\\
    &=\mu^{(0)}(\Xbf_1) \1\{W_1=0\} + \mu^{(1)}(\Xbf_1) \1\{W_1=1\}\\
    &=\mu^{(W_1)}(\Xbf_1).
\end{align*}
Note that by the Lipschitz assumption on $\mu^{(0)}(\Xbf_1)$, $\mu^{(1)}(\Xbf_1)$, i.e., \dataass{2}, it holds that
\begin{align}\label{awesomeLipschitz}
    \|\mu^{(W_1)}(\Xbf_1) - \mu^{(W_1)}(\xbf) \|_{\H} &=  \|(\mu^{(0)}(\Xbf_1) - \mu^{(0)}(\xbf) )\1\{W_1=0\} + (\mu^{(1)}(\Xbf_1) - \mu^{(1)}(\xbf) )\1\{W_1=1\} \|_{\H} \nonumber\\
    &\leq \|(\mu^{(0)}(\Xbf_1) - \mu^{(0)}(\xbf) )\|_{\H} \1\{W_1=0\} + \|(\mu^{(1)}(\Xbf_1) - \mu^{(1)}(\xbf) ) \|_{\H}\1\{W_1=1\} \nonumber\\
    &\leq C \| \Xbf_1 - \xbf \|_{\H}.
\end{align}

Then 
\begin{align}\label{22}
    &\Var(\E[S_1' \mid \Xbf_1, W_1] (k(\Ybf_1, \cdot) -   \E[ k(\Ybf_1, \cdot)\mid\Xbf_1, W_1]))  \nonumber \\
    =&\Var(\E[S_1' \mid \Xbf_1, W_1] (k(\Ybf_1, \cdot) -   \mu^{(W_1)}(\Xbf_1) + \mu^{(W_1)}(\xbf)- \mu^{(W_1)}(\xbf) )) \nonumber \\
    =&\Var(\E[S_1' \mid\Xbf_1, W_1] (k(\Ybf_1, \cdot) - \mu^{(W_1)}(\xbf)  )   ) + \Var(\E[S_1' \mid\Xbf_1, W_1]( \mu^{(W_1)}(\Xbf_1) - \mu^{(W_1)}(\xbf) )) \nonumber \\
    &\quad- \Cov(\E[S_1' \mid\Xbf_1, W_1] (k(\Ybf_1, \cdot) - \mu^{(W_1)}(\xbf)), \E[S_1' \mid\Xbf_1, W_1]( \mu^{(W_1)}(\Xbf_1)- \mu^{(W_1)}(\xbf) ) ). 
\end{align}
For the second variance we have:
\begin{align*}
    &\Var(\E[S_1' \mid\Xbf_1, W_1]( \mu^{(W_1)}(\Xbf_1)- \mu^{(W_1)}(\xbf))) \\
    &\leq \E[\E[S_1' \mid\Xbf_1, W_1]^2 \| \mu^{(W_1)}(\Xbf_1)- \mu^{(W_1)}(\xbf)     \|_{\H}^2 ] \nonumber \\
    & \leq \E[\E[S_1'^2 \mid\Xbf_1, W_1] C^2\|\Xbf_1 - \xbf \|_{\R^p}^2 ] \nonumber\\
    & \leq \E[S_1'^2  ] C^2 s_n^{-2w},
\end{align*}
where we used assumption \dataass{2} for the second inequality and where the last step followed because $\E[S_1'^2 \mid\Xbf_1, W_1]=0$, for $\|\Xbf_1 - \xbf \|_{\R^p} > s_n^{-w}$ by definition of $S_1'=S_1\1\{ \text{diam}(\Lcal(\xbf))\leq s_n^{-w}\} $. Finally, because $\E[S_1'^2  ]\leq \E[|S_1'|  ]=\O(s_n^{-1})$ by Lemma \ref{lemma4}, we have that
\begin{align}\label{22_1}
    \Var(\E[S_1' \mid\Xbf_1, W_1]( \mu^{(W_1)}(\Xbf_1)- \mu^{(W_1)}(\xbf)))=o(s_n^{-(1+2w)}).
\end{align}
Similarly, for the covariance, it holds that
\begin{align}\label{22_2}
    &\Cov(\E[S_1' \mid\Xbf_1, W_1] (k(\Ybf_1, \cdot) - \mu^{(W_1)}(\xbf)), \E[S_1' \mid\Xbf_1, W_1]( \mu^{(W_1)}(\Xbf_1)- \mu^{(W_1)}(\xbf) ) ) \nonumber\\
    &\leq \Var(\E[S_1' \mid\Xbf_1, W_1] (k(\Ybf_1, \cdot) - \mu^{(W_1)}(\xbf)))^{1/2} \Var(\E[S_1' \mid\Xbf_1, W_1]( \mu^{(W_1)}(\Xbf_1)- \mu^{(W_1)}(\xbf) ))^{1/2}\nonumber\\
    &\leq \E\left[\E[S_1' \mid\Xbf_1, W_1]^2 \|k(\Ybf_1, \cdot) - \mu^{(W_1)}(\xbf)\|_{\H}\right]^{1/2} o(s_n^{-(1/2+w)})\nonumber\\
    &\leq C \E\left[S_1'  \right]^{1/2} o(s_n^{-(1/2+w)})\nonumber\\
    &= o(s_n^{-(1+w)}),
\end{align}
where the second last step followed because 
\[
\|k(\Ybf_1, \cdot) - \mu^{(W_1)}(\xbf)\|_{\H} \leq \|k(\Ybf_1, \cdot)\|_{\H} + \|\mu^{(W_1)}(\xbf)\|_{\H} \leq C < \infty,
\]
by the boundedness of $k$ \ref{kernelass1}. Combining \eqref{22}, \eqref{22_1} and \eqref{22_2}, we obtain 
\begin{align}\label{23}
    &\Var(\E[S_1' \mid \Xbf_1, W_1] (k(\Ybf_1, \cdot) -   \E[ k(\Ybf_1, \cdot)\mid\Xbf_1, W_1])) \nonumber \\
    &=\Var(\E[S_1' \mid\Xbf_1, W_1] (k(\Ybf_1, \cdot) - \mu^{(W_1)}(\xbf)  )   ) + o(s_n^{-(1+\epsilon)}).
\end{align}
Focussing on the first term, we have:
\begin{align}\label{24}
    &\Var(\E[S_1' \mid\Xbf_1, W_1] (k(\Ybf_1, \cdot) - \mu^{(W_1)}(\xbf)  )   )  \nonumber \\
    =& \E[ \E[S_1' |\Xbf_1, W_1]^2 \|k(\Ybf_1, \cdot) - \mu^{(W_1)}(\xbf) \|_{\H}^2  ] - \E[\E[S_1' |\Xbf_1, W_1] \|k(\Ybf_1, \cdot) - \mu^{(W_1)}(\xbf)  \|_{\H} ]^2 \nonumber \\
    =& \E \Big[ \E[S_1' |\Xbf_1, W_1]^2 \E[\|k(\Ybf_1, \cdot) - \mu^{(W_1)}(\xbf) \|^2 | \Xbf_1, W_1]  \Big] - \E[\E[S_1' |\Xbf_1, W_1] \|k(\Ybf_1, \cdot) - \mu^{(W_1)}(\xbf)  \|_{\H} ]^2.
\end{align}
For the second part, note that because $\|k(\Ybf, \cdot)\|_{\H}$ and $\|\mu^{(W_1)}(\xbf)\|_{\H}$ are bounded by \kernelass{2}, it holds that
\begin{align}\label{24_1}
    \E[\E[S_1' |\Xbf_1, W_1] \|k(\Ybf_1, \cdot) - \mu^{(W_1)}(\xbf)  \|_{\H} ]^2 &\leq \E[\E[S_1' |\Xbf_1, W_1] \|k(\Ybf_1, \cdot)\|_{\H} + \|\mu^{(W_1)}(\xbf)  \|_{\H} ]^2\\
    &\leq C\E[S_1']^2\\
    &=\O(s_n^{2}).
\end{align}
Define $\sigma^2(\xbf, W_1)$ as in \eqref{vardef} and note that by \ref{causalityass2},
\begin{align*}
    \sigma^2(\xbf, W_1) &=\E[\|k(\Ybf_1, \cdot) - \mu^{(W_1)}(\xbf) \|^2 | W_1, \Xbf=\xbf]\\
    &=\Var(k(\Ybf_1^0, \cdot) \mid \Xbf=\xbf) (1-W_1) + \Var(k(\Ybf_1^1, \cdot) \mid \Xbf=\xbf) W_1.
\end{align*}
The first term in~\eqref{24} can be bounded by
\begin{align}\label{24_2}
    &\E \Big[ \E[S_1' |\Xbf_1, W_1]^2 \E[\|k(\Ybf_1, \cdot) - \mu^{(W_1)}(\xbf) \|^2 | \Xbf_1, W_1]  \Big] \nonumber\\
    =&\E \Big[ \E[S_1' |\Xbf_1, W_1]^2 \left(\E[\|k(\Ybf_1, \cdot) - \mu^{(W_1)}(\xbf) \|^2 | \Xbf_1, W_1] - \sigma^2(\xbf, W_1)+ \sigma^2(\xbf, W_1) \right) \Big] \nonumber \\
    =&\E \Big[ \E[S_1' |\Xbf_1, W_1]^2 \left(\E[\|k(\Ybf_1, \cdot) - \mu^{(W_1)}(\xbf) \|^2 | \Xbf_1, W_1] - \sigma^2(\xbf, W_1)+ \sigma^2(\xbf, W_1) \right) \Big] \nonumber\\
    =&\E \Big[ \E[S_1' |\Xbf_1, W_1]^2 \left(\E[\|k(\Ybf_1, \cdot) - \mu^{(W_1)}(\xbf) \|^2 | \Xbf_1, W_1] - \sigma^2(\xbf, W_1) \right) \Big] + \E[\E[S_1' |\Xbf_1, W_1]^2 \sigma^2(\xbf, W_1)  ]
\end{align}
By the same argument, as before, it can be shown that 
\begin{align*}
   &| \E[\|k(\Ybf_1, \cdot) - \mu^{(W_1)}(\xbf) \|^2 | \Xbf_1, W_1] - \sigma^2(\xbf, W_1)| \\
   &\leq |\E[\|k(\Ybf_1^1, \cdot) - \mu^{(1)}(\xbf) \|^2 | \Xbf_1]-\E[\|k(\Ybf_1^1, \cdot) - \mu^{(1)}(\xbf) \|^2 | \Xbf=\xbf]|\\
   &+ |\E[\|k(\Ybf_1^0, \cdot) - \mu^{(0)}(\xbf) \|^2 | \Xbf_1]-\E[\|k(\Ybf_1^0, \cdot) - \mu^{(0)}(\xbf) \|^2 | \Xbf=\xbf]|\\
   &\leq C \| \Xbf_1 - \xbf\|,
\end{align*}
using the fact that $\xbf \mapsto \mu^{(j)}(\xbf)$ and $\xbf \mapsto \E[\|k(\Ybf_1^j, \cdot) - \mu^{(j)}(\xbf) \|^2 | \Xbf=\xbf]$, for $j \in \{0,1\}$, are all Lipschitz (see e.g., the argument in \citep[Equation (78)]{näf2023confidence}). Thus,
\begin{align}\label{24_22}
    &\E \Big[ \E[S_1' |\Xbf_1, W_1]^2 \left(\E[\|k(\Ybf_1, \cdot) - \mu^{(W_1)}(\xbf) \|^2 | \Xbf_1, W_1] - \sigma^2(\xbf, W_1) \right) \Big] \nonumber \\
    &\leq \E \Big[ \E[S_1' |\Xbf_1, W_1]^2 |\E[\|k(\Ybf_1, \cdot) - \mu^{(W_1)}(\xbf) \|^2 | \Xbf_1, W_1] - \sigma^2(\xbf, W_1)| \Big] \nonumber\\
    &=\E[\E[S_1' |\Xbf_1, W_1]^2] C s_n^{w}\nonumber\\
    &=\o(s_n^{-(1+w)}),
\end{align}
using again that $\E[S_1'^2 \mid\Xbf_1, W_1]=0$, for $\|\Xbf_1 - \xbf \|_{\R^p} > s_n^{-w}$ by definition of $S_1'$. Combining~\eqref{24}, \eqref{24_1}, \eqref{24_2} and \eqref{24_22} it holds that
\begin{align}\label{25}
    &\Var(\E[S_1' \mid\Xbf_1, W_1] (k(\Ybf_1, \cdot) - \mu^{(W_1)}(\xbf)  )   )=\E \Big[ \E[S_1' |\Xbf_1, W_1]^2 \sigma^2(\xbf, W_1) \Big ] + \o(s_n^{-(1+w)}).
\end{align}
We can further rewrite this:
\begin{align*}
    &\E \Big[ \E[S_1' |\Xbf_1, W_1]^2 \sigma^2(\xbf, W_1) \Big ]\\
    &=\E\Big[ \E[\E[S_1' |\Xbf_1, W_1]^2 \mid W_1] \sigma^2(\xbf, W_1)  \Big ]\\
\end{align*}
Moreover, since $\sigma^2(\xbf, W_1)$ is again bounded and $\E[S_{1j}'\mid W_1=j]= \O(s_n^{-1})$ by Lemma \ref{lemma4},
\begin{align}\label{crucialexpectationresult}
\E\Big[\E[\E[S_1' |\Xbf_1, W_1] \mid W_1]^2\sigma^2(\xbf, W_1)\Big]&=\E[\E[\E[S_1' \mid W_1]^2\sigma^2(\xbf, W_1)]\nonumber\\
&=\E[S_1'\mid W_1=1]^2 *P(W_1=1) + \E[S_1'\mid W_1=0]^2 *P(W_1=0) \nonumber\\
&= \O(s_n^{-2}).
\end{align}
Consequently,
\begin{align}\label{25_1}
    \E \Big[ \E[S_1' |\Xbf_1, W_1]^2 \sigma^2(\xbf, W_1) \Big ]=\E \Big[ \Var(\E[S_1' |\Xbf_1, W_1] \mid  W_1) \sigma^2(\xbf, W_1) \Big ] + \O(s_n^{-2}).
\end{align}
Combining \eqref{21}, \eqref{23}, \eqref{25} and \eqref{25_1}, we obtain Claim \eqref{firstnewvarexpansion}. 


\end{claimproof}

Combining~\eqref{firstkeyequality} with~\eqref{firstnewvarexpansion}, we have 
\begin{align}\label{varexpansion}
    \Var( \E[\langle \Gamma'(\Zcal_{s_n}), f \rangle \mid \Zbf_1 ])&= \Var(\E[ \langle \Gamma'(\Zcal_{s_n}), f \rangle \mid \Xbf_1, W_1 ])+ \E \Big[ \Var(\E[S_1' |\Xbf_1, W_1] \mid  W_1) \sigma^2(\xbf, W_1) \Big ]  + \o(s_n^{-(1+\epsilon)}), \nonumber \\
\Var( \E[\Gamma'(\Zcal_{s_n})| \Zbf_1 ])&= \Var(\E[ \Gamma'(\Zcal_{s_n}) |\Xbf_1, W_1 ])+ \E \Big[ \Var(\E[S_1' |\Xbf_1, W_1] \mid  W_1) \sigma_f^2(W_1,\xbf) \Big ]  + \o(s_n^{-(1+\epsilon)})
\end{align}
for some $\epsilon > 0$. We now show that
\begin{claim}
For some $\epsilon > 0$,
\begin{align}\label{Varboundnew}
\Var( \E[\Gamma'(\Zcal_{s_n})| \Xbf_1, W_1 ])= \o(s_n^{-(1+\epsilon)}).
\end{align}

\end{claim}

\begin{claimproof}
First, due to honesty \ref{forestass1}, we have
\begin{align*}
   \E[ \langle \Gamma'(\Zcal_{s_n}),f \rangle \mid \Xbf_1, W_1 ] &=\E[S_1'\mid \Xbf_1, W_1]  \E[ \langle k(\Ybf_1,\cdot),f \rangle\mid  \Xbf_1, W_1] + \sum_{i=2}^{s_n} \E[S_i' \langle k(\Ybf_i, \cdot),f \rangle\mid \Xbf_1, W_1]
\end{align*}
and
\begin{align*}
  \E[ \Gamma'(\Zcal_{s_n}) \mid \Xbf_1, W_1 ] &=\E[S_1'\mid \Xbf_1, W_1]  \E[ k(\Ybf_1,\cdot)\mid \Xbf_1, W_1] + \sum_{i=2}^{s_n} \E[S_i' k(\Ybf_i, \cdot)\mid \Xbf_1, W_1].
\end{align*}
Subsequently, we consider the variance of the two terms and their covariance individually. That is, we study:
\begin{align}\label{overviewoverview}
    \Var( \E[S_1'\mid \Xbf_1, W_1]  \E[ k(\Ybf_1,\cdot)\mid \Xbf_1, W_1]) + \Var(\sum_{i=2}^{s_n} \E[S_i' k(\Ybf_i, \cdot)\mid \Xbf_1, W_1] ) \nonumber \\
    + 2 \Cov(\E[S_1'\mid \Xbf_1, W_1]  \E[ k(\Ybf_1,\cdot)\mid \Xbf_1, W_1],\sum_{i=2}^{s_n} \E[S_i' k(\Ybf_i, \cdot)\mid \Xbf_1, W_1] ).
\end{align}

First, we study the variance of the first terms. The variances satisfy
\begin{claim}
\begin{align}\label{variance1term_1}
    \Var(\E[S_1'|\Xbf_1, W_1]  \E[ k(\Ybf_1,\cdot)|\Xbf_1, W_1])&= \E\left[ \Var(\E[S_1'|\Xbf_1, W_1] \mid W_1) \|\mu^{(W_1)}(\xbf)\|_{\H}^2 \right]  + \mathcal{O}(s_n^{-(1+\epsilon)})
\end{align}
\end{claim}

\begin{claimproof}

We have
\begin{align}
    &\Var(\E[S_1' |\Xbf_1, W_1]   \E[ k(\Ybf_1,\cdot)|\Xbf_1, W_1]) \\
    &=\Var(\E[S_1' |\Xbf_1, W_1] (  \E[ k(\Ybf_1,\cdot)|\Xbf_1, W_1] - \mu^{(W_1)}(\xbf)+ \mu^{(W_1)}(\xbf) )) \nonumber \\
    &=  \Var(\E[S_1' |\Xbf_1, W_1] \mu^{(W_1)}(\xbf)    ) + \Var(\E[S_1' |\Xbf_1, W_1]( \E[ k(\Ybf_1,\cdot)|\Xbf_1, W_1] - \mu^{(W_1)}(\xbf) )) \nonumber \\
    &\quad+ \Cov\left(\E[S_1' |\Xbf_1, W_1] \mu^{(W_1)}(\xbf), \E[S_1' |\Xbf_1, W_1]( \E[ k(\Ybf_1,\cdot)|\Xbf_1, W_1] - \mu^{(W_1)}(\xbf) ) \right) . 
\end{align}
For the first term it holds that
\begin{align}\label{26}
     \Var(\E[S_1' |\Xbf_1, W_1] \mu^{(W_1)}(\xbf)    ) = \E[\E[S_1' |\Xbf_1, W_1]^2 \|\mu^{(W_1)}(\xbf)\|_{\H}^2    ] + O(s_n^{-2}),
\end{align}
with 
\begin{align*}
    \E[\E[S_1' |\Xbf_1, W_1]^2 \|\mu^{(W_1)}(\xbf)\|_{\H}^2    ] &=\E\Big[ \E[\E[S_1' |\Xbf_1, W_1]^2 \|\mu^{(W_1)}(\xbf)\|_{\H}^2  \mid W_1]\Big] \\
    &=\E\Big[ \E[\E[S_1' |\Xbf_1, W_1]^2  \mid W_1]\|\mu^{(W_1)}(\xbf)\|_{\H}^2 \Big]\\
     &=\E\Big[ \Var(\E[S_1' |\Xbf_1, W_1] \mid W_1)\|\mu^{(W_1)}(\xbf)\|_{\H}^2 \Big] + O(s_n^{-2}),
\end{align*}
since,$\E\Big[ \E[\E[S_1' |\Xbf_1, W_1]  \mid W_1]^2\|\mu^{(W_1)}(\xbf)\|_{\H}^2 \Big]=O(s_n^{-2})$, using the same arguments as in \eqref{crucialexpectationresult}. As was shown in \eqref{awesomeLipschitz},
\begin{align*}
    \|\E[ k(\Ybf_1,\cdot)|\Xbf_1, W_1]- \mu^{(W_1)}(\xbf)    ) \|_{\H} \leq C \| \Xbf-\xbf \|,
\end{align*}
so that,
\begin{align}\label{27}
    \Var(\E[S_1' |\Xbf_1, W_1]( \E[ k(\Ybf_1,\cdot)|\Xbf_1, W_1]- \mu^{(W_1)}(\xbf))) & \leq \E[ \|\E[S_1' |\Xbf_1, W_1]( \E[ k(\Ybf_1,\cdot)|\Xbf_1, W_1]) -\mu^{(W_1)}(\xbf))   \|_{\H}^2 ]\nonumber \\
    &= \E[\E[S_1' |\Xbf_1, W_1]^2 \|( \E[ k(\Ybf_1,\cdot)|\Xbf_1, W_1]- \mu^{(W_1)}(\xbf)    ) \|_{\H}^2 ] \nonumber\\
    & \leq \E[\E[S_1'^2 |\Xbf_1, W_1] C^2\|\Xbf_1 -\xbf \|_{\R^p}^2 ] \nonumber\\
    &=\O(s_n^{-(1+2w)}).
\end{align}
where the last step followed because $\E[S_1'^2 |\Xbf_1, W_1]=0$, for $\|\Xbf_1 -\xbf \|_{\R^p} > s_n^{-w}$ by definition of $S_1'$ and the fact that $\E[S_1'^2  ]\leq \E[|S_1'|] \leq \E[|S_1|]=\O(s_n^{-1})$.
Finally, we infer
\begin{align*}
    &\left| \Cov\left(\E[S_1' |\Xbf_1, W_1] \mu^{(W_1)}(\xbf), \E[S_1' |\Xbf_1, W_1]( \E[ k(\Ybf_1,\cdot)|\Xbf_1, W_1] - \mu^{(W_1)}(\xbf) ) \right)\right| \\
    \leq& \sqrt{\Var(\E[S_1' |\Xbf_1, W_1] \mu^{(W_1)}(\xbf)     )} \sqrt{\Var(\E[S_1' |\Xbf_1, W_1]( \E[ k(\Ybf_1,\cdot)|\Xbf_1, W_1]- \mu^{(W_1)}(\xbf) )) }\\
    =& \O(s_n^{-(1+w)}), 
\end{align*}
due to \eqref{27}, and,
\begin{align*}
    \Var(\E[S_1' |\Xbf_1, W_1] \mu^{(W_1)}(\xbf)     )&= \E[\E[S_1' |\Xbf_1, W_1]^2 \|\mu^{(W_1)}(\xbf)\|_{\H}^2    ]+ \O(s_n^{-2}) \nonumber \\
    &=\E\Big[ \E[ \E[S_1' |\Xbf_1, W_1]^2  \mid W_1]  \|\mu^{(W_1)}(\xbf)\|_{\H}^2 \Big]+ \O(s_n^{-2})\\
    &=  \E[S_1' |\Xbf_1, W_1=1]^2 \|\mu^{(1)}(\xbf)\|_{\H}^2 \Prob(W_1=1)\\
    &+\E[S_1' |\Xbf_1, W_1=0]^2 \|\mu^{(0)}(\xbf)\|_{\H}^2 \Prob(W_1=0) + \O(s_n^{-2})\\
    &=\O(s_n^{-1}).
\end{align*}
Thus, our Claim~\eqref{variance1term_1} holds.
\end{claimproof}

Before we continue proving the theorem, we note that, due to honesty, we have
\begin{align}\label{honestchange_0}
 \sum_{i=2}^{s_n} \E[S_i' k(\Ybf_i, \cdot)\mid \Xbf_1, W_1] &= \sum_{i=2}^{s_n} \E[ \E[S_i' k(\Ybf_i, \cdot) \mid   \Xbf_i,W_i, \Xbf_1, W_1]  \mid \Xbf_1, W_1] \nonumber \\
 &=\sum_{i=2}^{s_n} \E[ \E[S_i' \mid  \Xbf_i,W_i, \Xbf_1, W_1] \E[ k(\Ybf_i, \cdot) \mid  \Xbf_i,W_i, \Xbf_1, W_1]  \mid \Xbf_1, W_1] \nonumber \\
  &=\sum_{i=2}^{s_n} \E[ \E[S_i' \E[ k(\Ybf_i, \cdot) \mid  \Xbf_i,W_i]   \mid  \Xbf_i, W_i, \Xbf_1, W_1] \mid \Xbf_1, W_1]\nonumber \\
  &=\sum_{i=2}^{s_n} \E[ S_i' \E[  k(\Ybf_i, \cdot) \mid  \Xbf_i, W_i]  \mid \Xbf_1, W_1].
\end{align}

Now, we consider the variance of the sum in \eqref{honestchange_0}:

\begin{claim}
\begin{align}\label{variance2term_1}
    \Var\left( \sum_{i=2}^{s_n} \E[S_i' k(\Ybf_i, \cdot)\mid \Xbf_1, W_1] \right)&= \E\left[ \Var(\E[S_1'|\Xbf_1, W_1] \mid W_1) \|\mu^{(W_1)}(\xbf)\|_{\H}^2 \right]  + \o(s_n^{-(1+\epsilon)}) 
\end{align}

\end{claim}

\begin{claimproof}
We start by showing the following crucial claim:
\begin{claim}
For some $\epsilon > 0$,
\begin{align}\label{whatweneed}
\Var\left(  \sum_{i=2}^{s_n} \E[S_i' k(\Ybf_i, \cdot)\mid \Xbf_1, W_1] - \sum_{i=2}^{s_n} \E[S_i' \mu^{(W_i)}(\xbf)\mid \Xbf_1, W_1] \right)=\O(s_n^{-(1+\epsilon)}),
\end{align}
\end{claim}

\begin{claimproof}
We first note that, by \ref{causalityass2},
\begin{align*}
     \E[ \sum_{i=2}^{s_n} S_{i}' \E[  k(\Ybf_i, \cdot) \mid  \Xbf_i, W_i]  \mid \Xbf_1, W_1]&=\E[ \sum_{i=2}^{s_n} S_{i}' \E[  k(\Ybf_i^{(W_i)}, \cdot) \mid  \Xbf_i]  \mid \Xbf_1, W_1]\\
     &=\E[ \sum_{i=2}^{s_n} S_{i}'  \mu^{(W_i)}(\Xbf_i)  \mid \Xbf_1, W_1].
\end{align*}
Combining this with~\eqref{honestchange_0}, we have,
\begin{align}\label{rewriting}
      &\Var\left(  \E[ \sum_{i=2}^{s_n} S_{i}' k(\Ybf_i, \cdot)\mid \Xbf_1, W_1] -  \E[\sum_{i=2}^{s_n} S_{i}'  \mu^{(W_i)}(\xbf)  \mid \Xbf_1, W_1]  \right)\nonumber\\
      &=\Var\left(  \E[ \sum_{i=2}^{s_n} S_{i}' \mu^{(W_i)}(\Xbf_i)  \mid \Xbf_1, W_1] -   \E[\sum_{i=2}^{s_n} S_{i}'  \mu^{(W_i)}(\xbf) \mid \Xbf_1, W_1]  \right) \nonumber \\
      &=\Var\left( \E[ \sum_{i=2}^{s_n} S_{i}' (\mu^{(W_i)}(\Xbf_i)- \mu^{(W_i)}(\xbf) )   \mid \Xbf_1, W_1]  \right)\nonumber \\
      &=\Var\left( \E[ \sum_{i=2}^{s_n} S_{i}' \Delta^{(W_i)}(\Xbf_i)  \mid \Xbf_1, W_1]  \right),
\end{align}
with $\Delta^{(W_i)}(\Xbf_i)=\mu^{(W_i)}(\Xbf_i)- \mu^{(W_i)}(\xbf) $. 
We can now proceed to show \eqref{whatweneed} in the same way we showed \eqref{A_whatweneed} in Section \ref{DRFproofcorrection}:
\begin{claim}
\begin{align}\label{crucialclaimforsumSiCausalDRF}
        \E[\|\E[ \sum_{i=2}^{s_n} S_i'\Delta^{(W_i)}(\Xbf_i)  \mid \Lcal(\xbf), \Xbf_1, W_1] - \E[\sum_{i=2}^{s_n} S_i'\Delta^{(W_i)}(\Xbf_i)  \mid \Lcal(\xbf)]\|_{\H}^2 ]= \o(s_n^{-(1+\epsilon)}).
\end{align}
\end{claim}

\begin{claimproof}
First, using \eqref{forestass1starCausalDRF_2} in place of \eqref{forestass1starDRF_2}, the same proof as in Claim \eqref{newextendedconnection} shows that,
\begin{align}\label{newextendedconnection_CausalDRF}
    \E[ \| \E[\sum_{i=2}^{s_n} S_i'\Delta^{(W_i)}(\Xbf_i)  \mid \Lcal(\xbf), \Xbf_1, W_1] -  \E[ \sum_{i=2}^{s_n}S_i'\Delta^{(W_i)}(\Xbf_i)  \mid \1\{\Xbf_1 \notin \Lcal(\xbf)\}, \Lcal(\xbf)] \|_{\H}^2 ] = \o(s_n^{-(1+\epsilon)}),
\end{align}
for some $\epsilon > 0$. The remainder of the proof then works analogously as for the claim \eqref{crucialclaimforsumSi} in case of a regression tree. 
\end{claimproof}

We can then split as follows:
\begin{align*}
    &\Var(\E[\sum_{i=2}^{s_n} S_i'\Delta^{(W_i)}(\Xbf_i)  \mid \Xbf_1, W_1] ) \\
    &= \Var\Big( \E[ \E[\sum_{i=2}^{s_n} S _i'\Delta^{(W_i)}(\Xbf_i) \mid \Lcal(\xbf)]  \mid \Xbf_1, W_1] \Big)\\
    &+ \Var\Big( \E[ \E[ \sum_{i=2}^{s_n} S_i'\Delta^{(W_i)}(\Xbf_i) \mid \Lcal(\xbf)]  \mid \Xbf_1, W_1]-\E[ \sum_{i=2}^{s_n} S_i'\Delta^{(W_i)}(\Xbf_i)  \mid \Xbf_1, W_1] \Big)\\
    &- 2 \Cov\Big(\E[ \E[\sum_{i=2}^{s_n} S _i'\Delta^{(W_i)}(\Xbf_i) \mid \Lcal(\xbf)]  \mid \Xbf_1, W_1],\\
    &\E[ \E[ \sum_{i=2}^{s_n} S _i'\Delta^{(W_i)}(\Xbf_i) \mid \Lcal(\xbf)]  \mid \Xbf_1, W_1]-\E[\sum_{i=2}^{s_n} S_i'\Delta^{(W_i)}(\Xbf_i)  \mid \Xbf_1, W_1]  \Big).
\end{align*}
For the first term, by assumption \eqref{forestass1starCausalDRF_1}, and the boundedness of $\E[ \sum_{i=2}^{s_n} S _i'\Delta^{(W_i)}(\Xbf_i) \mid \Lcal(\xbf)]$,
\begin{align*}
    \Var( \E[ \E[ \sum_{i=2}^{s_n} S _i'\Delta^{(W_i)}(\Xbf_i) \mid \Lcal(\xbf)]  \mid \Xbf_1, W_1] ) = \o(s_n^{-(1+\epsilon)}).
\end{align*}
Moreover, using Jensen's inequality and \eqref{crucialclaimforsumSiCausalDRF},
\begin{align*}
    &\Var( \E[ \E[\sum_{i=2}^{s_n} S _i'\Delta^{(W_i)}(\Xbf_i) \mid \Lcal(\xbf)]  \mid \Xbf_1, W_1]-\E[ \sum_{i=2}^{s_n} S_i'\Delta^{(W_i)}(\Xbf_i)  \mid \Xbf_1, W_1] )  \\
    &\leq\E[  \|\E[ \E[\sum_{i=2}^{s_n}  S _i'\Delta^{(W_i)}(\Xbf_i) \mid \Lcal(\xbf)] - \E[\sum_{i=2}^{s_n} S _i'\Delta^{(W_i)}(\Xbf_i) \mid \Xbf_1, W_1, \Lcal(\xbf)]  \mid \Xbf_1, W_1] \|_{\H}^2 ]\\
    &\leq \E[  \E[\| \E[\sum_{i=2}^{s_n} S _i'\Delta^{(W_i)}(\Xbf_i) \mid \Lcal(\xbf)] - \E[ \sum_{i=2}^{s_n} S _i'\Delta^{(W_i)}(\Xbf_i) \mid \Xbf_1, W_1, \Lcal(\xbf)]\|_{\H}^2  \mid \Xbf_1, W_1]  ]\\
    &= \E[\| \E[\sum_{i=2}^{s_n} S _i'\Delta^{(W_i)}(\Xbf_i) \mid \Lcal(\xbf)] - \E[\sum_{i=2}^{s_n} S _i'\Delta^{(W_i)}(\Xbf_i) \mid \Xbf_1, W_1, \Lcal(\xbf)]\|_{\H}^2    ]\\
    &=\o(s_n^{-(1+\epsilon)}).
\end{align*}
Adding the Cauchy-Schwarz inequality to bound the covariances, we obtain
\begin{align*}
    \Var(\E[\sum_{i=2}^{s_n} S_i'\Delta^{(W_i)}(\Xbf_i)  \mid \Xbf_1, W_1] )=\o(s_n^{-(1+\epsilon)}),
\end{align*}
showing \eqref{whatweneed}.


\end{claimproof}

Now, we further split
\begin{align*}
     \E[\sum_{i=1}^{s_n} S_{i}'\mid \Xbf_1, W_1]=\E[\sum_{W_i=1} S_{i1}'\mid \Xbf_1, W_1] - \E[\sum_{W_i=0} (-S_{i0}')\mid \Xbf_1, W_1].
\end{align*}
and note that
\begin{align*}
 \E[\sum_{W_i=1} S_{i1}'\mid \Xbf_1, W_1] &=  \E[\sum_{W_i=0} (-S_{i0}')\mid \Xbf_1, W_1] \\
    &=\P\left(\text{diam}(\Lcal(\xbf))\leq s_n^{-w} \mid \Xbf_1, W_1\right),
\end{align*}
so that in particular,
\begin{align*}
         \E[\sum_{W_i=1, i\geq 2} S_{i1}'\mid \Xbf_1, W_1] &= \E[ (\1_{w,s_n}  - S_{11}') W_1 + \1_{w,s_n} (1-W_1)\mid \Xbf_1, W_1]\\
    &=\E[ \1_{w,s_n}  - S_{11}' W_1 \mid \Xbf_1, W_1],
\end{align*}
and
\begin{align*}
    \E[\sum_{W_i=0, i\geq 2} (-S_{i0}')\mid \Xbf_1, W_1]=\E[ \1_{w,s_n}  - S_{10}' (1-W_1) \mid \Xbf_1, W_1].
\end{align*}
Thus, if follows that:
\begin{align}\label{zerosumgame}
\E[\sum_{W_i=0, i\geq 2} (-S_{i0}')\mid \Xbf_1, W_1] &=\P\left(\text{diam}(\Lcal(\xbf))\leq s_n^{-w} \mid \Xbf_1, W_1\right) - \E[(-S_{10}') \mid \Xbf_1, W_1] (1-W_1)\nonumber \\
    \E[\sum_{W_i=1, i\geq 2} S_{i1}'\mid \Xbf_1, W_1] &=\P\left(\text{diam}(\Lcal(\xbf))\leq s_n^{-w} \mid \Xbf_1, W_1\right) - \E[S_{11}' \mid \Xbf_1, W_1] W_1 
\end{align}
Following this logic and splitting \eqref{honestchange_0} according to $W_i$, we obtain
\begin{align}\label{overviewforsgreater1}
    &\Var\left(  \sum_{i=2}^{s_n} \E[S_i' \mu^{(W_i)}(\xbf)\mid \Xbf_1, W_1] \right) \nonumber\\
    &=  \Var\left( \E[ \sum_{i \geq 2, W_i=1}  S_{i1}' \mu^{(1)}(\xbf)\mid \Xbf_1, W_1] \right) +\Var\left(  \E[ \sum_{i \geq 2, W_i=0}  (-S_{i0}') \mu^{(0)}(\xbf)\mid \Xbf_1, W_1]  \right)  \\
    &-2 \Cov\left(\E[ \sum_{i \geq 2, W_i=1}  S_{i1}' \mu^{(1)}(\xbf)\mid \Xbf_1, W_1],  \E[ \sum_{i \geq 2, W_i=0}  (-S_{i0}')  \mu^{(0)}(\xbf)\mid \Xbf_1, W_1]\right) \nonumber \\
        &= (I) + (II) - (III)
\end{align}

and
\begin{align}\label{whatweneed2}
\Var\left( \E[ \sum_{i \geq 2, W_i=1} S_{i1}' \mu^{(1)}(\xbf) \mid \Xbf_1, W_1]  \right)= \|\mu^{(1)}(\xbf)\|_{\H}^2\Var\left( \E[S_{11}'\mid \Xbf_1, W_1=1]\right) \Prob(W_1=1) + \O(s_n^{-(1+2w)})
\end{align}
\begin{align}\label{whatweneed22}
\Var\left( \E[ \sum_{i \geq 2, W_i=0} (-S_{i0}') \mu^{(0)}(\xbf) \mid \Xbf_1, W_1]  \right)= \|\mu^{(0)}(\xbf)\|_{\H}^2\Var\left( \E[S_{10}'\mid \Xbf_1,  W_1=0]\right) \Prob(W_1=0) +\O(s_n^{-(1+2w)})
\end{align}
\begin{align}\label{whatweneed23}
    &\Cov\Big(\mu^{(1)}(\xbf)\E[ \sum_{i \geq 2, W_i=1}  S_{i1}' \mid \Xbf_1, W_1], \mu^{(0)}(\xbf)\E[ \sum_{i \geq 2, W_i=0}  (-S_{i0}')\mid \Xbf_1, W_1] \Big)=\O(s_n^{-(1+2w)}).
\end{align}


\begin{claim}
   For some $\epsilon > 0$, \eqref{whatweneed2}, \eqref{whatweneed22}, and \eqref{whatweneed23}  hold.
\end{claim}

\begin{claimproof}

We first show \eqref{whatweneed2}. By \eqref{zerosumgame},
\begin{align*}
    &\Var\left( \E[ \sum_{i \geq 2, W_i=1} S_{i1}' \mu^{(1)}(\xbf) \mid \Xbf_1, W_1]  \right)\\ 
    &=\|\mu^{(1)}(\xbf)\|_{\H}^2\Var\left(   \P\left(\text{diam}(\Lcal(\xbf))\leq s_n^{-w} \mid \Xbf_1, W_1\right) - \E[S_{11}' \mid \Xbf_1, W_1] W_1\right)\\
    &=\|\mu^{(1)}(\xbf)\|_{\H}^2 (\Var\left(\P\left(\text{diam}(\Lcal(\xbf))\leq s_n^{-w} \mid \Xbf_1, W_1\right)  \right) + \Var(\E[S_{11}' \mid \Xbf_1, W_1] W_1 ) \\
    &- 2 \Cov(\P\left(\text{diam}(\Lcal(\xbf))\leq s_n^{-w} \mid \Xbf_1, W_1\right),  \E[S_{11}' \mid \Xbf_1, W_1] W_1))
\end{align*}
By assumption \ref{forestass1starCausalDRF}, $\Var\left(\P\left(\text{diam}(\Lcal(\xbf))\leq s_n^{-w} \mid \Xbf_1, W_1\right)  \right)=\o(s_n^{-(1+\epsilon)})$. Moreover, 
\[
\Var(\E[S_{11}' \mid \Xbf_1, W_1] W_1 )\leq \E[(\E[S_{11}' \mid \Xbf_1, W_1] )^2 W_1 ]\leq \E[S_{11}'] =\O(s_n^{-1}),
\]
so that by Cauchy-Schwarz also $\Cov(\P\left(\text{diam}(\Lcal(\xbf))\leq s_n^{-w} \mid \Xbf_1, W_1\right),  \E[S_{11}' \mid \Xbf_1, W_1] W_1)= \o(s_n^{-(1+\epsilon)})$. Thus,
\begin{align*}
    \Var\left( \E[ \sum_{i \geq 2, W_i=1} S_{i1}' \mu^{(1)}(\xbf) \mid \Xbf_1, W_1]  \right)=\|\mu^{(1)}(\xbf)\|_{\H}^2 \Var(\E[S_{11}' \mid \Xbf_1, W_1] W_1 ) + \o(s_n^{-(1+\epsilon)}).
\end{align*}
Since in addition, $\E[\E[S_{11}' \mid \Xbf_1, W_1] W_1]^2= \E[S_{11}'W_1]^2= \E[S_{11}']^2 =\O(s_n^{-2})$, it further holds that
\begin{align}\label{variancerelationS1_1}
    \Var(\E[S_{11}' \mid \Xbf_1, W_1] W_1 ) &= \E[(\E[S_{11}' \mid \Xbf_1, W_1] W_1)^2] + \O(s_n^{-2}) \nonumber \\
    &= \E[ \E[S_{11}' \mid \Xbf_1, W_1]^2 W_1] + \O(s_n^{-2})\nonumber\\
    &= \E[ \E[\E[S_{11}' \mid \Xbf_1, W_1]^2 \mid W_1] W_1] + \O(s_n^{-2})\nonumber\\
    &= \E[ \Var(\E[S_{11}' \mid \Xbf_1, W_1] \mid W_1) W_1] + \O(s_n^{-2}),
\end{align}
where the last step again followed because $\E[ \E[\E[S_{11}' \mid \Xbf_1, W_1] \mid W_1]^2 W_1]=\O(s_n^{-2})$. Finally, 
\begin{align}\label{variancerelationS1_2}
     \E[ \Var(\E[S_{11}' \mid \Xbf_1, W_1] \mid W_1) W_1]= \Var(\E[S_{11}' \mid \Xbf_1, W_1] \mid W_1=1) \Prob(W=1),
\end{align}
showing \eqref{whatweneed2}. Condition \eqref{whatweneed22} can be shown analogously.

Finally, we need to show \eqref{whatweneed23}. By \eqref{zerosumgame},
\begin{align}\label{whatweneed23_1}
    & \Cov\Big(\mu^{(1)}(\xbf)\E[ \sum_{i \geq 2, W_i=1}  S_{i1}' \mid \Xbf_1, W_1], \mu^{(0)}(\xbf)\E[ \sum_{i \geq 2, W_i=0}  (-S_{i0}')\mid \Xbf_1, W_1] \Big) \nonumber \\
    &= \langle \mu^{(1)}(\xbf), \mu^{(0)}(\xbf) \rangle \Cov\Big(\P\left(\text{diam}(\Lcal(\xbf))\leq s_n^{-w} \mid \Xbf_1, W_1\right) - \E[S_{11}' \mid \Xbf_1, W_1] W_1, \nonumber\\
    &\P\left(\text{diam}(\Lcal(\xbf))\leq s_n^{-w} \mid \Xbf_1, W_1\right) + \E[S_{10}' \mid \Xbf_1, W_1] (1-W_1)\Big)\nonumber\\
    &= -\langle \mu^{(1)}(\xbf), \mu^{(0)}(\xbf) \rangle  \Cov(\E[S_{11}' \mid \Xbf_1, W_1] W_1,\E[S_{10}' \mid \Xbf_1, W_1] (1-W_1))+\o(s_n^{-(1+\epsilon)}),
\end{align}
where the last step again followed because $\Var(\P\left(\text{diam}(\Lcal(\xbf))\leq s_n^{-w} \mid \Xbf_1, W_1\right))=\o(s_n^{-(1+\epsilon)})$ by Assumption \ref{forestass1starCausalDRF}, $\Var(\E[S_{11}' \mid \Xbf_1, W_1] W_1)=\O(s_n^{-1})$, $\Var(\E[S_{10}' \mid \Xbf_1, W_1] (1-W_1))=\O(s_n^{-1})$, combined with Cauchy-Schwarz. Now since,
\begin{align*}
    \E[ \E[S_{11}' \mid \Xbf_1, W_1] W_1  ] \E[\E[S_{10}' \mid \Xbf_1, W_1] (1-W_1)] =\O(s_n^{-2}),
\end{align*}
as before, we arrive at
\begin{align}\label{whatweneed23_2}
    &\Cov(\E[S_{11}' \mid \Xbf_1, W_1] W_1, \E[S_{10}' \mid \Xbf_1, W_1] (1-W_1)) \nonumber \\
    &= \E[\E[S_{11}' \mid \Xbf_1, W_1]  \E[S_{10}' \mid \Xbf_1, W_1] W_1 (1-W_1)]+ \O(s_n^{-2}) \nonumber \\
    &=\O(s_n^{-2}).
\end{align}
Combining, \eqref{whatweneed23_1} with \eqref{whatweneed23_2} gives \eqref{whatweneed23}.

\end{claimproof}

Combining \eqref{whatweneed2}-\eqref{whatweneed23} with \eqref{overviewforsgreater1}, we get:
\begin{align}\label{overviewforsgreater2}
    &\Var\left(  \sum_{i=2}^{s_n} \E[S_i' \mu^{(W_i)}(\xbf)\mid \Xbf_1, W_1] \right) \nonumber \\
    &= \|\mu^{(1)}(\xbf)\|_{\H}^2\Var\left( \E[S_{11}'\mid \Xbf_1, W_1=1]\right) \Prob(W_1=1) \nonumber\\
    &+  \|\mu^{(0)}(\xbf)\|_{\H}^2\Var\left( \E[S_{10}'\mid \Xbf_1,  W_1=0]\right) \Prob(W_1=0) + \o(s_n^{-(1+\epsilon)}) \nonumber\\
    &=\E\left[ \Var(\E[S_1'|\Xbf_1, W_1] \mid W_1) \|\mu^{(W_1)}(\xbf)\|_{\H}^2 \right] + \o(s_n^{-(1+\epsilon)}).
\end{align}

In addition, as $\Var\left( \E[S_{11}'\mid \Xbf_1, W_1=1]\right)=\O(s_n^{-1})$, $\Var\left( \E[S_{10}'\mid \Xbf_1,  W_1=0]\right)=\O(s_n^{-1})$ and since $\|\mu^{(j)}(\xbf)\|_{\H}^2$ is bounded for $j \in \{0,1\}$ by \ref{kernelass1}, it follows that 
\begin{align}\label{Varosn1}
   \Var\left(  \sum_{i=2}^{s_n} \E[S_i' \mu^{(W_i)}(\xbf)\mid \Xbf_1, W_1] \right)=\O(s_n^{-1}). 
\end{align}
Combining \eqref{whatweneed} with \eqref{Varosn1} and Cauchy Schwarz, we get
\begin{align}\label{whatweneedfinal}
   &\Var(\sum_{i=2}^{s_n} \E[S_i' k(\Ybf_i, \cdot)\mid \Xbf_1, W_1]) \nonumber \\
   &=\Var\left(  \sum_{i=2}^{s_n} \E[S_i' \mu^{(W_i)}(\xbf)\mid \Xbf_1, W_1] \right)+ \Var\left(  \sum_{i=2}^{s_n} \E[S_i' k(\Ybf_i, \cdot)\mid \Xbf_1, W_1] - \sum_{i=2}^{s_n} \E[S_i' \mu^{(W_i)}(\xbf)\mid \Xbf_1, W_1] \right)\nonumber\\
   &+ \Cov( \sum_{i=2}^{s_n} \E[S_i' \mu^{(W_i)}(\xbf)\mid \Xbf_1, W_1] ,   \sum_{i=2}^{s_n} \E[S_i' k(\Ybf_i, \cdot)\mid \Xbf_1, W_1] - \sum_{i=2}^{s_n} \E[S_i' \mu^{(W_i)}(\xbf)\mid \Xbf_1, W_1])\nonumber\\
   &=\Var\left(  \sum_{i=2}^{s_n} \E[S_i' \mu^{(W_i)}(\xbf)\mid \Xbf_1, W_1] \right) + \o(s_n^{-(1+\epsilon)}).
\end{align}

Combining \eqref{overviewforsgreater2} and \eqref{whatweneedfinal} gives Claim \eqref{variance2term_1}.

\end{claimproof}

Finally, we consider the covariances between $\E[S_1' k(\Ybf_i, \cdot)\mid \Xbf_1, W_1]$ and $ \E[\sum_{i=2}^{s_n} S_i' k(\Ybf_i, \cdot)\mid \Xbf_1, W_1]$:

\begin{claim}
For some $\epsilon > 0$,
\begin{align}\label{covariance2term_1}
      &\Cov\left(\E[S_1' k(\Ybf_i, \cdot)\mid \Xbf_1, W_1],   \E[\sum_{i=2}^{s_n} S_i' k(\Ybf_i, \cdot)\mid \Xbf_1, W_1]\right) \nonumber \\
      &= -\E\left[ \Var(\E[S_1'|\Xbf_1, W_1] \mid W_1) \|\mu^{(W_1)}(\xbf)\|_{\H}^2 \right] +\o(s_n^{-(1+\epsilon)}).
\end{align}
\end{claim}

\begin{claimproof}
    
Similar to above, we first simplify the expression:

\begin{claim}
For some $\epsilon > 0$, we have   
\begin{align}\label{covariance1}
& \Cov\left(\E[S_1' k(\Ybf_1, \cdot)\mid \Xbf_1, W_1],   \E[ \sum_{i=2}^{s_n} S_{i}' k(\Ybf_i, \cdot) \mid \Xbf_1, W_1]\right) \nonumber \\
    &=\Cov\left(\E[ S_1' \mu^{(W_1)}(\xbf) \mid \Xbf_1, W_1],   \E[ \sum_{i=2}^{s_n} S_{i}' \mu^{(W_i )}(\xbf) \mid \Xbf_1, W_1]\right) +\o(s_n^{-(1+\epsilon)}).
\end{align}



\end{claim}

\begin{claimproof}
It holds that:
\begin{align*}
    & \Cov\left(\E[ S_1' k(\Ybf_1,\cdot) \mid \Xbf_1, W_1],   \E[ \sum_{i=2}^{s_n} S_{i}' k(\Ybf_i,\cdot) \mid \Xbf_1, W_1]\right)   \\
    &=\Cov\left( \E[ S_1' k(\Ybf_1,\cdot) \mid \Xbf_1, W_1],   \E[ \sum_{i=2}^{s_n} S_{i}' k(\Ybf_i,\cdot) \mid \Xbf_1, W_1]- \E[ \sum_{i=2}^{s_n} S_{i}' \mu^{(W_i)}(\xbf) \mid \Xbf_1, W_1]\right)\\
    &+\Cov\left( \E[ S_1' k(\Ybf_1,\cdot) \mid \Xbf_1, W_1]-\E[ S_1' \mu^{(W_1)}(\xbf) \mid \Xbf_1, W_1],  \E[ \sum_{i=2}^{s_n} S_{i}' \mu^{(W_i)}(\xbf) \mid \Xbf_1, W_1]\right)\\
    &+\Cov\left( \E[ S_1' \mu^{(W_1)}(\xbf) \mid \Xbf_1, W_1],  \E[ \sum_{i=2}^{s_n} S_{i}' \mu^{(W_i)}(\xbf) \mid \Xbf_1, W_1]\right)\\
    &=(C1) + (C2) + \Cov\left( \E[ S_1' \mu^{(W_1)}(\xbf) \mid \Xbf_1, W_1],  \E[ \sum_{i=2}^{s_n} S_{i}' \mu^{(W_i)}(\xbf) \mid \Xbf_1, W_1]\right).
\end{align*}
We now use the Cauchy-Schwarz inequality to bound the first two covariances by the product of the square root of the variances of the respective terms. For (C1), as $k$ is bounded \ref{kernelass1},
\begin{align*}
   \Var( \E[ S_1' k(\Ybf_1,\cdot) \mid \Xbf_1, W_1]) &\leq \E[ \| \E[ S_1' k(\Ybf_1,\cdot) \mid \Xbf_1, W_1]\|_{\H}^2] \\
   &\leq \E[ \E[ S_1' \| k(\Ybf_1,\cdot)\|_{\H} \mid \Xbf_1, W_1]^2]\\
   & \leq C \E[S_1']\\
   &=\O(s_n^{-1}),
\end{align*}
and, as shown in \eqref{whatweneed},
\begin{align}
    \Var\left(  \E[ \sum_{i=2}^{s_n} S_{i}' k(\Ybf_i, \cdot )\mid \Xbf_1, W_1] -   \E[\sum_{i=2}^{s_n} S_{i}' \mu^{(W_i)}(\xbf) \mid \Xbf_1, W_1] \right) =\o(s_n^{-(1+\epsilon)}),
\end{align}
so that $|(C1)|\leq \O(s_n^{-1/2})\o(s_n^{-(1+\epsilon)/2})=\o(s_n^{-(1+\epsilon/2)})$. Similarly, by \eqref{27}, it holds that
\begin{align*}
    &\Var(\E[ S_1' k(\Ybf_1,\cdot) \mid \Xbf_1, W_1] - \E[ S_1' \mu^{(W_1)}(\xbf) \mid \Xbf_1, W_1]) \\
    &= \Var(\E[S_1' |\Xbf_1, W_1]( \E[ k(\Ybf_1,\cdot)|\Xbf_1, W_1]- \mu^{(W_1)}(\xbf)))\\
    &= \o(s_n^{-(1+\epsilon)}),
\end{align*}
whereby we recall that $\E[ S_1' k(\Ybf_1,\cdot) \mid \Xbf_1, W_1]=\E[S_1' |\Xbf_1, W_1] \E[ k(\Ybf_1,\cdot)|\Xbf_1, W_1]$ by honesty \ref{forestass1}. Combining this with \eqref{Varosn1}, we obtain $|(C2)|\leq \O(s_n^{-1/2})\o(s_n^{-(1+\epsilon)/2})=\o(s_n^{-(1+\epsilon/2)})$, thus showing \eqref{covariance1}.

\end{claimproof}

We now again split up the term into the part for control and treatment groups:

\begin{align}\label{covarianceoverview}
    &\Cov\left(\E[ S_1' \mu^{(W_1)}(\xbf) \mid \Xbf_1, W_1],   \E[ \sum_{i=2}^{s_n} S_{i}' \mu^{(W_i )}(\xbf) \mid \Xbf_1, W_1]\right) \nonumber \\
    &= \Cov\left(\E[S_1' \mu^{(W_1)}(\xbf)\mid \Xbf_1, W_1],   \E[ \sum_{i \geq 2, W_i=1} S_{i1}' \mu^{(1 )}(\xbf) \mid \Xbf_1, W_1]\right) \nonumber  \\
    &+ \Cov\left(\E[S_1' \mu^{(W_1)}(\xbf)\mid \Xbf_1, W_1],  \E[ \sum_{i \geq 2, W_i=0} S_{i0}'\mu^{(0)}(\xbf) \mid \Xbf_1, W_1]\right)
\end{align}

We then study each term separately:

\begin{claim}
For some $\epsilon > 0$,
\begin{align}\label{covariance2}
    &\Cov\left(\E[ S_1' \mu^{(W_1)}(\xbf) \mid \Xbf_1, W_1],   \E[ \sum_{i \geq 2, W_i=1} S_{i1}' \mu^{(1)}(\xbf) \mid \Xbf_1, W_1]\right) \\
    &=-\|\mu^{(1)}(\xbf)\|_{\H}^2\Var\left( \E[S_{11}'\mid \Xbf_1, W_1=1]\right) \Prob(W_1=1) +\o(s_n^{-(1+\epsilon)})
\end{align}

\begin{align}\label{covariance21}
  &\Cov\left(\E[S_1' \mu^{(W_1)}(\xbf)\mid \Xbf_1, W_1],   \E[ \sum_{i \geq 2, W_i=0} S_{i0}'\mu^{(0)}(\xbf) \mid \Xbf_1, W_1]\right)\\
  &=-\|\mu^{(0)}(\xbf)\|_{\H}^2\Var\left( \E[S_{10}'\mid \Xbf_1,  W_1=0]\right) \Prob(W_1=0) +\o(s_n^{-(1+\epsilon)})
\end{align}
\end{claim}

\begin{claimproof}
Following \eqref{zerosumgame}, it holds that
\begin{align}\label{covariance21_1}
    &\Cov\left(\E[ S_1' \mu^{(W_1)}(\xbf) \mid \Xbf_1, W_1],   \E[ \sum_{i \geq 2, W_i=1} S_{i1}' \mu^{(1)}(\xbf) \mid \Xbf_1, W_1]\right)\nonumber\\
    &=\Cov\left(\E[ S_1' \mu^{(W_1)}(\xbf) \mid \Xbf_1, W_1],   \mu^{(1)}(\xbf)(\P\left(\text{diam}(\Lcal(\xbf))\leq s_n^{-w} \mid \Xbf_1, W_1\right) - \E[S_{11}' \mid \Xbf_1, W_1] W_1) \right)\\
     &=\Cov\Big(\E[ S_1' \mu^{(W_1)}(\xbf) \mid \Xbf_1, W_1],   \mu^{(1)}(\xbf)\P\left(\text{diam}(\Lcal(\xbf))\leq s_n^{-w} \mid \Xbf_1, W_1\right) \Big)\nonumber \\
     &-\Cov\Big(\E[ S_1' \mu^{(W_1)}(\xbf) \mid \Xbf_1, W_1], \E[S_{11}'\mu^{(1)}(\xbf) \mid \Xbf_1, W_1] W_1 \Big).
\end{align}
The first element is $\o(s_n^{-(1+\epsilon)})$, as 
\begin{align}\label{covariance21_2}
    &|\Cov\Big(\E[ S_1' \mu^{(W_1)}(\xbf) \mid \Xbf_1, W_1],   \mu^{(1)}(\xbf)\P\left(\text{diam}(\Lcal(\xbf))\leq s_n^{-w} \mid \Xbf_1, W_1\right) \Big)| \nonumber\\
    &\leq \Var(\E[ S_1' \mu^{(W_1)}(\xbf) \mid \Xbf_1, W_1])^{1/2} \|  \mu^{(1)}(\xbf)\|_{\H}\Var(\P\left(\text{diam}(\Lcal(\xbf))\leq s_n^{-w} \mid \Xbf_1, W_1\right))^{1/2} \nonumber\\
    &=\O(s_n^{-1/2})\o(s_n^{-(1+\epsilon)/2}),
\end{align}
where the last step follows because of \ref{forestass1starCausalDRF}, the fact that $\Var(\E[ S_1' \mu^{(W_1)}(\xbf) \mid \Xbf_1, W_1])=\O(s_n^{-1})$ and $\|  \mu^{(1)}(\xbf)\|_{\H}\leq C$ by boundedness of $k$ \ref{kernelass1}. 
In addition, we write
\begin{align}\label{covariance21_3}
    &\Cov\Big(\E[ S_1' \mu^{(W_1)}(\xbf) \mid \Xbf_1, W_1], \E[S_{11}' \mid \Xbf_1, W_1] W_1 \Big)\nonumber\\
    &=\Cov\Big(\E[ S_{11}' \mu^{(1)}(\xbf) \mid \Xbf_1, W_1=1] W_1, \E[S_{11}'\mu^{(1)}(\xbf) \mid \Xbf_1, W_1=1] W_1 \Big)\nonumber\\
    &+\Cov\Big(\E[ S_{10}' \mu^{(0)}(\xbf) \mid \Xbf_1, W_1=0] (1-W_1), \E[S_{11}'\mu^{(1)}(\xbf) \mid \Xbf_1, W_1=1] W_1 \Big).
\end{align}
As
\[
\E[\E[ S_{10}' \mid \Xbf_1, W_1=0] (1-W_1) ] \cdot \E[ \E[S_{11}' \mid \Xbf_1, W_1=1] W_1]=\O(s_n^{-2}),
\]
the second covariance is $ \O(s_n^{-2})$:
\begin{align}\label{covariance21_4}
    &\Cov\Big(\E[ S_{10}' \mu^{(0)}(\xbf) \mid \Xbf_1, W_1=0] (1-W_1), \E[S_{11}'\mu^{(1)}(\xbf) \mid \Xbf_1, W_1=1] W_1 \Big)\nonumber\\
    &=\E[\E[ S_{10}' \mu^{(0)}(\xbf) \mid \Xbf_1, W_1=0] (1-W_1) \E[S_{11}'\mu^{(1)}(\xbf) \mid \Xbf_1, W_1=1] W_1] + \O(s_n^{-2})\nonumber\\
    &= \O(s_n^{-2}).
\end{align}
Similarly, the first covariance can be rewritten as
\begin{align}\label{covariance21_5}
    &\|\mu^{(1)}(\xbf)\|_{\H}^2 \Cov\Big(\E[ S_{11}'  \mid \Xbf_1, W_1=1] W_1, \E[S_{11}'\mid \Xbf_1, W_1=1] W_1 \Big)\nonumber\\
    &=\|\mu^{(1)}(\xbf)\|_{\H}^2 \Var(\E[S_{11}' \mid \Xbf_1, W_1] \mid W_1=1) \Prob(W=1),
\end{align}
using the same steps as in \eqref{variancerelationS1_1} and \eqref{variancerelationS1_2}. Plugging \eqref{covariance21_2}, \eqref{covariance21_3}, \eqref{covariance21_4} and \eqref{covariance21_5} into \eqref{covariance21_1} gives \eqref{covariance2}. Using analogous arguments, we also obtain \eqref{covariance21}.

\end{claimproof}

Plugging \eqref{covariance1}, \eqref{covariance2}, \eqref{covariance21} into \eqref{covarianceoverview}, we obtain \eqref{covariance2term_1}.
\end{claimproof}

Plugging \eqref{variance1term_1}, \eqref{variance2term_1}, \eqref{covariance2term_1}
into \eqref{overviewoverview}, shows
\begin{align*}
    &\E[ \Gamma'(\Zcal_{s_n}) \mid \Xbf_1, W_1 ]\\
    &= 2\E\left[ \Var(\E[S_1'|\Xbf_1, W_1] \mid W_1) \|\mu^{(W_1)}(\xbf)\|_{\H}^2 \right]-2\E\left[ \Var(\E[S_1'|\Xbf_1, W_1] \mid W_1) \|\mu^{(W_1)}(\xbf)\|_{\H}^2 \right]  +\o(s_n^{-(1+\epsilon)})\\
    &=\o(s_n^{-(1+\epsilon)}),
\end{align*}
showing \eqref{Varboundnew}.
\end{claimproof}

Combining \eqref{varexpansion} with \eqref{Varboundnew}, we obtain
\begin{align}\label{VarexpansionS'}
    \Var( \E[\Gamma'(\Zcal_{s_n})| \Zbf_1 ])&=  \E \Big[ \Var(\E[S_1' |\Xbf_1, W_1] \mid  W_1) \sigma_f^2(W_1,\xbf) \Big ]  + \o(s_n^{-(1+\epsilon)}).
\end{align}

We now replace $S_1'$ with $S_1$:

\begin{claim}
For $\epsilon > 0$ and $j \in \{0,1\}$,
\begin{align}\label{truncation_S}
     \left|\Var(\E[S_{1j}'  |\Xbf_1, W_1=j]) -  \Var(\E[S_{1j}  |\Xbf_1, W_1=j])\right| = \o(s_n^{-(1+\epsilon)}).
\end{align}
\end{claim}

\begin{claimproof}
We have 
\begin{align*}
    \Var(\E[S_{1j}|\Xbf_1, W_1=j] - \E[S_{1j}'  |\Xbf_1, W_1=j]) &= \Var( \E[S_{1j} - S_{1j}'  |\Xbf_1, W_1=j])\nonumber\\
    &= \Var( \E[S_{1j} \1\{ \text{diam}(\Lcal(\xbf, \Zcal_{s_n})) > s_n^{-w}\}  |\Xbf_1,W_1=j])\nonumber\\
    &\leq \E[\E[S_{1j} \1\{ \text{diam}(\Lcal(\xbf, \Zcal_{s_n})) > s_n^{-w}\}|\Xbf_1,W_1=j]^2 ]\nonumber\\
    & \leq \E[|S_{1j}|\1\{ \text{diam}(\Lcal(\xbf, \Zcal_{s_n})) > s_n^{-w}\} \mid W_1=j]\nonumber\\
    & =\frac{1}{\Prob(W_1=j)}\E[|S_{1j}|\1\{ \text{diam}(\Lcal(\xbf, \Zcal_{s_n})) > s_n^{-w}\}]\nonumber\\
      & \leq \frac{1}{\min(\varepsilon,1-\varepsilon)}\E[|S_{1j}|\1\{ \text{diam}(\Lcal(\xbf, \Zcal_{s_n})) > s_n^{-w}\}],
    %
\end{align*}
where the last step followed due to \ref{causalityass1}. Since moreover, $|S_{1j}| \leq |S_{1}|=|S_{11}|+|S_{10}|$, we further have
\begin{align}\label{eq:varDecompSprime}
    \Var(\E[S_{1j}|\Xbf_1, W_1=j] - \E[S_{1j}'  |\Xbf_1, W_1=j])&\leq \frac{1}{\min(\varepsilon,1-\varepsilon)}\E[|S_{1}|\1\{ \text{diam}(\Lcal(\xbf, \Zcal_{s_n})) > s_n^{-w}\}]\nonumber\\
    &\leq \frac{1}{\min(\varepsilon,1-\varepsilon) s_n} \sum_{i=1}^{s_n}\E[|S_{i}|\1\{ \text{diam}(\Lcal(\xbf, \Zcal_{s_n})) > s_n^{-w}\}]\nonumber\\
    &= \frac{2}{\min(\varepsilon,1-\varepsilon) s_n} \Prob( \text{diam}(\Lcal(\xbf, \Zcal_{s_n})) > s_n^{-w})\nonumber\\
    & = \O(s_n^{-(1+w)}),
\end{align}
due to $\sum_{i=1}^{s_n} S_i=2$ and where the last step followed due to \eqref{Wagerprobbound}. As $\Var(\E[S_1  |\Xbf_1, W_1=j])=\O(s_n^{-1})$ from Lemma \ref{lemma4} and analogously $\Var(\E[S_1'  |\Xbf_1, W_1=j])=\O(s_n^{-1})$, it holds that 
\begin{align*}
    \Var(\E[S_{1j}'  |\Xbf_1, W_1=j]) &=  \Var(\E[S_{1j}  |\Xbf_1, W_1=j]) +  \Var(\E[S_{1j}|\Xbf_1, W_1=j] - \E[S_{1j}'  |\Xbf_1, W_1=j]) \\
    &+ 2 \Cov(\E[S_{1j}  |\Xbf_1, W_1=j], \E[S_{1j}|\Xbf_1, W_1=j] - \E[S_{1j}'  |\Xbf_1, W_1=j])\\
    &=\Var(\E[S_{1j}  |\Xbf_1, W_1=j]) +  \O(s_n^{-(1+w/2)}),
\end{align*}
again making use of Cauchy-Schwarz to bound the covariances.
\end{claimproof}

Combining Claim \eqref{truncation_S} with \eqref{VarexpansionS'}, we have:
\begin{align}\label{VarexpansionS}
    \Var( \E[\Gamma'(\Zcal_{s_n})| \Zbf_1 ])&=  \E \Big[ \Var(\E[S_1 |\Xbf_1, W_1] \mid  W_1) \sigma_f^2(W_1,\xbf) \Big ]  + \o(s_n^{-(1+\epsilon)}).
\end{align}

It remains to show that:

\begin{claim}
For some $\epsilon > 0$,
\begin{align}\label{BounddiffbetweenTandTd}
|\Var( \E[\Gamma(\Zcal_{s_n})| \Zbf_1 ]) -  \Var( \E[\Gamma'(\Zcal_{s_n})| \Zbf_1 ])| =   \o(s_n^{-(1+\epsilon)}).    
\end{align}

\end{claim}

\begin{claimproof}
This can be proven in exactly the same way as Claim (85) in Theorem 24 of \citep{näf2023confidence} with the regression trees $T$, $T'$ exchanged by the causal trees $\Gamma$, $\Gamma'$.
\end{claimproof}

Thus, as $\Gamma_{n}(\Zbf_1)=\E[\Gamma(\Zcal_{s_n})| \Zbf_1 ]$, combining \eqref{VarexpansionS} and \eqref{BounddiffbetweenTandTd}, we arrive at \eqref{varexpansionfull}.

We recall that $ \Var(\E[S_1|\Xbf_1, W_1=j])=\Omega((s_n\log(s_n))^{-1})$, by Lemma \ref{lemma4} and thus the $\o(s_n^{-(1+\epsilon)})$ terms can be ignored. This together with assumption \ref{forestass6CausalDRF} establishes~\eqref{varianceassumption}.
\end{proof}

Having established these basics, we can show asymptotic normality of the estimator in $\H$.

\asymptoticnormality*

\begin{proof}

Recall that 
\begin{align*}
    \tilde{\Gamma}(\Zcal_{s_n})=\sum_{i=1}^{s_n} \E[\Gamma(\mathcal{Z}_{s_n})\mid\Zbf_i] - \E[T(\mathcal{Z}_{s_n})]=\sum_{i=1}^{s_n}\Gamma_n(\Zbf_i),
\end{align*}
the first order approximation of $\Gamma(\Zcal_{s_n})$ as in Lemma \ref{Hdecomposition}.

\begin{claim}\label{eq: U_statsapproximation}
    \begin{align} \label{asymptoticlin}
   \htauk -  \tauk = \frac{s_n}{n} \sum_{i=1}^n  \Gamma_{n}(\Zbf_i) + \o_{p}(\sigma_n),
\end{align}
with 
\begin{align}\label{eq: sigmandef}
   \sigma_n^2=\frac{s_n^2}{n} \Var(\Gamma_{n}(\Zbf_1))=\frac{s_n}{n}\Var(\tilde{\Gamma}(\Zcal_{s_n})),
\end{align} 
\end{claim}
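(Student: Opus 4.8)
The plan is to treat $\htauk$ as a complete (infinite-order) U-statistic with symmetric kernel $\Gamma$ of growing degree $s_n$ and to show that it is asymptotically equivalent to its Hájek projection. First I would apply the ANOVA decomposition of Lemma~\ref{Hdecomposition} to $\Gamma(\Zcal_{s_n})$, producing canonical components $\Gamma_c$ with first-order term $\Gamma_1(\Zbf_i)=\E[\Gamma(\Zcal_{s_n})\mid\Zbf_i]-\E[\Gamma(\Zcal_{s_n})]=\Gamma_n(\Zbf_i)$. The standard bookkeeping for a complete U-statistic then gives the representation
\begin{equation*}
   \htauk-\E[\htauk]=\sum_{c=1}^{s_n}\binom{s_n}{c}\binom{n}{c}^{-1}\sum_{|A|=c}\Gamma_c(\Zbf_A).
\end{equation*}
The $c=1$ term is exactly the Hájek projection $\frac{s_n}{n}\sum_{i=1}^{n}\Gamma_n(\Zbf_i)$, whose variance is $\sigma_n^2=\frac{s_n^2}{n}\Var(\Gamma_n(\Zbf_1))$; since the $\Gamma_n(\Zbf_i)$ are i.i.d.\ and mean zero, this also equals $\frac{s_n}{n}\Var(\tilde{\Gamma}(\Zcal_{s_n}))$, which reconciles the two forms in~\eqref{eq: sigmandef}.

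The core of the argument is to show the remainder $R_n=\htauk-\E[\htauk]-\frac{s_n}{n}\sum_{i}\Gamma_n(\Zbf_i)$ is negligible. As $R_n$ has mean zero and the canonical components are uncorrelated, $\E[\|R_n\|_\H^2]=\Var(R_n)=\sum_{c=2}^{s_n}\binom{s_n}{c}^2\binom{n}{c}^{-1}\Var(\Gamma_c)$. Using the variance identity~\eqref{vardecomp}, which gives $\binom{s_n}{c}\Var(\Gamma_c)\le\Var(\Gamma(\Zcal_{s_n}))$ for each $c$, together with the elementary bound $\binom{s_n}{c}\binom{n}{c}^{-1}\le(s_n/n)^c$, I would obtain
\begin{equation*}
   \Var(R_n)\le\Var(\Gamma(\Zcal_{s_n}))\sum_{c=2}^{s_n}\left(\frac{s_n}{n}\right)^{c}=\O\!\left(\frac{s_n^2}{n^2}\right),
\end{equation*}
where boundedness of the kernel~\ref{kernelass1} (so $\|\Gamma\|_\H\le2\sqrt{C}$, hence $\Var(\Gamma(\Zcal_{s_n}))=\O(1)$) and $s_n/n=n^{\beta-1}\to0$ are used. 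Dividing by the slow-decay lower bound $\sigma_n^2=\Omega\big(s_n/(n\log(s_n)^p)\big)$ from~\eqref{lowervariancebehavior} (via Lemma~\ref{lemma4}) yields $\Var(R_n)/\sigma_n^2=\O\big(s_n\log(s_n)^p/n\big)\to0$. Thus $\E[\|R_n\|_\H^2]=\o(\sigma_n^2)$, and Markov's inequality applied to $\|R_n\|_\H^2/\sigma_n^2$ gives $R_n=\o_p(\sigma_n)$.

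It then remains to replace $\E[\htauk]$ by $\tauk$. Corollary~\ref{bias} bounds the bias by $\|\E[\htauk]-\tauk\|_\H=\O\big((\varepsilon s_n)^{-\gamma'}\big)$ with $\gamma'=\tfrac12\frac{\log((1-\alpha)^{-1})}{\log(\alpha^{-1})}\frac{\pi}{p}$. Writing $s_n=n^\beta$ and comparing against $\sigma_n=\Omega\big(n^{(\beta-1)/2}/\log(s_n)^{p/2}\big)$, the requirement that the bias be $\o(\sigma_n)$ reduces, up to polylogarithmic factors, to $\beta(1+2\gamma')>1$, which is precisely the subsampling-rate lower bound imposed in~\forestass{5}. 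Hence the bias is $\o(\sigma_n)$ as well, and summing the three contributions $\htauk-\tauk=(\E[\htauk]-\tauk)+\frac{s_n}{n}\sum_i\Gamma_n(\Zbf_i)+R_n$ establishes~\eqref{asymptoticlin}.

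The main obstacle is controlling the higher-order remainder relative to $\sigma_n$: because the kernel degree $s_n$ grows with $n$, the negligibility of $\Var(R_n)/\sigma_n^2$ is not automatic and hinges essentially on the slow-decay lower bound~\eqref{lowervariancebehavior} for the first-order projection variance, whose proof (Lemma~\ref{lemma4}, via the overlap condition~\ref{causalityass1}) is where the causal structure enters. The Hilbert-space setting adds little beyond notation, since Lemma~\ref{Hdecomposition} already supplies both the ANOVA decomposition and the variance identity~\eqref{vardecomp} for $\H$-valued kernels; the genuinely delicate point is matching the variance and bias orders simultaneously against the narrow window for $\beta$ fixed by~\forestass{5}.
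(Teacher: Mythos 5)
Your proposal is correct and follows essentially the same route as the paper: Hájek projection of the complete U-statistic, the $\O(s_n^2/n^2)$ bound on the higher-order remainder (which the paper delegates to the cited steps of Theorem 3 in \citet{näf2023confidence} but you derive explicitly from Lemma~\ref{Hdecomposition} and \eqref{vardecomp}), the incrementality lower bound \eqref{lowervariancebehavior} to control $\Var(R_n)/\sigma_n^2$, and Corollary~\ref{bias} plus \forestass{5} for the bias term. The only minor imprecision is attributing \eqref{lowervariancebehavior} solely to Lemma~\ref{lemma4}; the paper also needs Theorem~\ref{thm:varianceassumption} to pass from $\Var(\E[S_1\mid\Xbf_1,W_1])$ to $\Var(\Gamma_n(\Zbf_1))$, but since \eqref{lowervariancebehavior} is established before this claim, that does not affect the argument.
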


\begin{claimproof}
We first note that by boundedness of the kernel, $\Var(\Gamma(\Zcal_{s_n})) \leq C$ for all $n$, for some $C < \infty$. Second, as a consequence of Theorem \ref{thm:varianceassumption}, Lemma \ref{lemma4}, \eqref{loverboundvarianceSbehavior2} and \eqref{overalliszeroplusone}, \eqref{lowervariancebehavior} holds, that is:
    \begin{align*}
        \Var(\tilde{\Gamma}(\Zcal_{s_n})) = s_n \Var(\Gamma_n(\Zbf_1))=\Omega\left(\frac{1}{\log(s_n)^p}\right).
    \end{align*} 
Thus it follows that:
\begin{align}\label{vincrementality}
    \frac{\Var(\tilde{\Gamma}(\Zcal_{s_n}))}{\Var(\Gamma(\Zcal_{s_n}))} = \Omega \left( \frac{1}{\log(s_n)^p}\right).
\end{align}
The same steps as in \citep[Theorem 3]{näf2023confidence} in turn implies that
\begin{align*}
    \frac{1}{\sigma_n^2}\E[\norm{ \htauk -  \E[\htauk] - \frac{s_n}{n} \sum_{i=1}^n  \Gamma_{n}(\Zbf_i)}_{\H}^2]\leq&   \frac{s_n^2}{n^2} \frac{\Var(\Gamma(\Zcal_{s_n}))}{\sigma_{n}^2}\\
    =&\frac{s_n}{n}   \frac{\Var(\Gamma(\Zcal_{s_n}))}{\Var(\tilde{\Gamma}(\Zcal_{s_n}))}\\
     =&  \O\left( \frac{s_n}{n}   \frac{ \log(s_n)^p}{ \varepsilon C_{f,p}} \right),
\end{align*}
the latter going to zero as $s_n=n^{\beta}$, with $\beta < 1$ by \forestass{5}.\footnote{We note that \citep{näf2023confidence} uses the abbreviation $k(\Ybf_i, \cdot)=\xi_i$.} By Theorem \ref{bias}, we also have that
\begin{align*}
    \frac{1}{\sigma_n^2}\norm{ \E[\htauk] -  \tauk}_{\H}^2=\frac{1}{\sigma_n^2} \O\left( (\varepsilon s_n)^{- \frac{\log((1-\alpha)^{-1})}{\log(\alpha^{-1})} \frac{\pi}{p}}\right).
\end{align*}
As, with \eqref{lowervariancebehavior},
\begin{align*}
    \sigma_n^2 =\frac{s_n}{n}\Var(\tilde{\Gamma}(\Zcal_{s_n}))= \Omega \left( \frac{s_n}{n} \frac{1}{\log(s_n)^p}\right)
\end{align*}
it holds with $C_{\alpha}=\frac{\log((1-\alpha)^{-1})}{\log(\alpha^{-1})} $, that,
\begin{align*}
    \frac{1}{\sigma_n^2}\norm{ \E[\htauk] -  \tauk}_{\H}^2=\frac{1}{\sigma_n^2} \O\left( s_n^{- C_{\alpha} \frac{\pi}{p}}\right)=\O\left( \frac{ n \log(s_n)^p}{s_n^{1+C_{\alpha} \pi/p}} \right)= \O\left( \frac{ n \log(s_n)^p}{n^{\beta(1+C_{\alpha} \pi/p)}} \right)  .
\end{align*}
This goes to zero using the scaling of $\beta$ in \forestass{5}.
\end{claimproof}

Claim \eqref{asymptoticlin} is the first key to show asymptotic normality, as it shows that $1/\sigma_n (\htauk -  \tauk)$ is asymptotically equivalent to a sum of independent elements in the Hilbert space $\H$. Thus we can use CLT results for $\H$, and in particular, we will show (I) marginal convergence and (II) uniform tightness:

\begin{claim}
    (I) For all $f \in \H$, 
    \begin{align}\label{marginalconvergence}
         \frac{s_n}{n \sigma_n} \sum_{i=1}^n  \langle f, \Gamma_{n}(\Zbf_i) \rangle \stackrel{D}{\to} N\left(0,\frac{c \sigma_f^2(\xbf, 1) \Prob(W=1) + \sigma_f^2(\xbf, 0) \Prob(W=0)}{c \sigma^2(\xbf, 1) \Prob(W=1) +  \sigma^2(\xbf, 0) \Prob(W=0)}  \right)
    \end{align}
\end{claim}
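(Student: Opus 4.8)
The plan is to observe that, for each fixed $n$, the variables $\zeta_{n,i}:=\langle f, \Gamma_{n}(\Zbf_i)\rangle$ are i.i.d.\ mean-zero real random variables (the $\Zbf_i$ are i.i.d.\ and $\Gamma_n$ is, for fixed $n$, a deterministic measurable map with $\E[\Gamma_n(\Zbf_i)]=0$). Hence the left-hand side of \eqref{marginalconvergence} is a normalized row-sum of a triangular array, and the natural tool is the Lindeberg--Feller central limit theorem. For $f=0$ the statement is the degenerate limit, so I would assume $f\neq 0$ throughout.

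First I would match the variance. Since the $\zeta_{n,i}$ are i.i.d.\ and centered,
\begin{align*}
\Var\!\left(\frac{s_n}{n\sigma_n}\sum_{i=1}^n \zeta_{n,i}\right)
= \frac{s_n^2}{n\,\sigma_n^2}\,\Var(\zeta_{n,1})
= \frac{\Var(\langle f, \Gamma_{n}(\Zbf_1)\rangle)}{\Var(\Gamma_{n}(\Zbf_1))},
\end{align*}
where the last equality uses the definition $\sigma_n^2=\tfrac{s_n^2}{n}\Var(\Gamma_{n}(\Zbf_1))$ from \eqref{eq: sigmandef}. By Theorem~\ref{thm:varianceassumption} together with \ref{forestass6CausalDRF}, this ratio converges to the quantity appearing in \eqref{marginalconvergence}, which is strictly positive because $\sigma^2(\xbf,j)>0$ and, for $f\neq0$, $\sigma_f^2(\xbf,j)>0$ by \dataass{4} and \dataass{6} (invoking \ref{causalityass2} to identify the conditional variances). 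Thus the limiting variance is exactly $\sigma^2(f)$, the claimed ratio.

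Next I would verify the Lindeberg condition, and here the decisive input is a uniform bound on the summands. Writing $X_{n,i}=\tfrac{s_n}{n\sigma_n}\zeta_{n,i}$, boundedness of the kernel \kernelass{1} and $\sum_{i}|S_i|=2$ from Lemma~\ref{lemma4} give $\|\Gamma(\Zcal_{s_n})\|_{\H}\le 2\sqrt{C}$, whence $\|\Gamma_{n}(\Zbf_i)\|_{\H}\le 4\sqrt{C}$ by Jensen and the triangle inequality, and therefore $|\zeta_{n,i}|\le 4\sqrt{C}\,\|f\|_{\H}=:M_f$ uniformly in $i,n$. Consequently $|X_{n,i}|\le b_n:=\tfrac{s_n M_f}{n\sigma_n}$. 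Using the lower bound $\sigma_n^2=\Omega\!\big(s_n/(n\log(s_n)^p)\big)$ from \eqref{lowervariancebehavior}, one gets $b_n=\O\!\big(\sqrt{s_n\log(s_n)^p/n}\big)\to 0$, since $s_n/n=n^{\beta-1}\to0$ with $\beta<1$ by \forestass{5}. Because the summands are uniformly bounded by $b_n\to0$, for every $\varepsilon>0$ the event $\{|X_{n,i}|>\varepsilon\}$ is eventually empty, so $\sum_{i=1}^n \E[X_{n,i}^2\,\mathbf{1}\{|X_{n,i}|>\varepsilon\}]\to0$ holds trivially. Lindeberg--Feller then yields $\sum_{i=1}^n X_{n,i}\stackrel{D}{\to}N(0,\sigma^2(f))$, which is \eqref{marginalconvergence}.

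I expect the only genuinely quantitative obstacle to be the control of $b_n$, i.e.\ ensuring that each individual term's contribution vanishes; this rests entirely on the lower variance bound \eqref{lowervariancebehavior} (a consequence of Theorem~\ref{thm:varianceassumption}, Lemma~\ref{lemma4}, \eqref{loverboundvarianceSbehavior2}, and \eqref{overalliszeroplusone}), which guarantees that $\sigma_n$ does not decay too quickly relative to the trivial upper bound $\Var(\Gamma(\Zcal_{s_n}))\le C$. Once that rate is in hand, the remaining steps are routine bookkeeping with i.i.d.\ bounded summands.
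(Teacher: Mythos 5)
Your proof is correct, and it reaches the conclusion by a more self-contained route than the paper. The paper handles the univariate CLT by invoking Theorem 11 of Wager and Athey: it asserts that the forest, causal and data assumptions verify the hypotheses of that external result, obtains $\langle \sigma_n(f)^{-1}\tfrac{s_n}{n}\sum_i \Gamma_n(\Zbf_i), f\rangle \stackrel{D}{\to} N(0,1)$ with the $f$-dependent scaling $\sigma_n(f)^2=\tfrac{s_n^2}{n}\Var(\langle\Gamma_n(\Zbf_1),f\rangle)$, and then converts to the common scaling $\sigma_n$ via the limit of $\sigma_n(f)/\sigma_n$ supplied by Theorem~\ref{thm:varianceassumption} and \ref{forestass6CausalDRF} --- exactly the rescaling you also perform when you match the row variance. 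What you do differently is to replace the citation by a direct Lindeberg--Feller argument for the row-wise i.i.d.\ triangular array $X_{n,i}=\tfrac{s_n}{n\sigma_n}\langle f,\Gamma_n(\Zbf_i)\rangle$: boundedness of the kernel (\kernelass{1}) together with $\sum_i|S_i|=2$ from Lemma~\ref{lemma4} gives a uniform bound $M_f$ on the unscaled summands, and the lower bound \eqref{lowervariancebehavior} on $\Var(\tilde\Gamma(\Zcal_{s_n}))$ forces $\tfrac{s_nM_f}{n\sigma_n}=\O\bigl(\sqrt{s_n\log(s_n)^p/n}\bigr)\to0$ under \forestass{5}, so the Lindeberg condition holds vacuously. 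This is essentially the same incrementality mechanism that drives Wager and Athey's own proof, so the underlying mathematics coincides; what your version buys is transparency, since one does not have to check that the external theorem's hypotheses survive the passage to a causal, Hilbert-space-projected tree --- a verification the paper asserts but does not spell out --- while the paper's version buys brevity. Both arguments rest on the same two quantitative inputs: Theorem~\ref{thm:varianceassumption} (for the limiting variance ratio, finite and positive for $f\neq0$ by \ref{dataass4} and \ref{dataass6}) and the variance lower bound \eqref{lowervariancebehavior}.
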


\begin{claimproof}
    
For $f \in \H$, consider the univariate marginal $\frac{s_n}{n } \sum_{i=1}^n  \langle \Gamma_{n}(\Zbf_i) ,f \rangle$. Assumptions~\ref{forestass1}--\ref{forestass5}, \ref{causalityass1}--\ref{causalityass2} and Assumptions~\ref{dataass1}--\ref{dataass7} verify all assumptions of Theorem 11 of~\citep{wager2017estimation}. Consequently, there exists a $\sigma_{n}(f) > 0$ converging to zero with $n$ such that
\begin{align}\label{univariatconvergence0}
 \left\langle \frac{1}{\sigma_n(f)} \left( \frac{s_n}{n } \sum_{i=1}^n  \Gamma_{n}(\Zbf_i) \right), f  \right\rangle
 \stackrel{D}{\to} N(0,1).
\end{align}
Unfortunately, the scaling factor 
\[
\sigma_n^2(f)=\frac{s_n^2}{n} \Var( \langle \Gamma_{n}(\Zbf_1), f \rangle)
\]
obtained from~\citep{wager2018estimation} depends on $f$. However, from Theorem \ref{thm:varianceassumption}, it follows that
\begin{align*}
    \frac{\sigma_n(f)}{\sigma_n}= \frac{\sqrt{\Var(\langle \Gamma_{n}(\Zbf_1), f \rangle)}}{\sqrt{\Var(\Gamma_{n}(\Zbf_1))}} \to \frac{c \sigma_f^2(\xbf, 1) \Prob(W=1) + \sigma_f^2(\xbf, 0) \Prob(W=0)}{c \sigma^2(\xbf, 1) \Prob(W=1) +  \sigma^2(\xbf, 0) \Prob(W=0)}, 
\end{align*}
which is finite and strictly larger zero by \ref{dataass4}, \ref{dataass6}. 
Thus
\begin{align*}
    \left\langle \frac{1}{\sigma_n} \left( \frac{s_n}{n } \sum_{i=1}^n  \Gamma_{n}(\Zbf_i) \right), f  \right\rangle &=\frac{\sigma_n(f)}{\sigma_n}  \frac{1}{\sigma_n(f)}\frac{s_n}{n } \sum_{i=1}^n \left\langle  \Gamma_{n}(\Zbf_i) , f  \right\rangle\\
    & \stackrel{D}{\to} N\left(0,\frac{c \sigma_f^2(\xbf, 1) \Prob(W=1) + \sigma_f^2(\xbf, 0) \Prob(W=0)}{c \sigma^2(\xbf, 1) \Prob(W=1) +  \sigma^2(\xbf, 0) \Prob(W=0)}  \right),
\end{align*}
proving the claim. 
\end{claimproof}

Before continuing, by the definition of $\sigma_n$, we define
\[
\xi_n^0(\xbf) := \sum_{i=1}^n   \frac{s_n}{n \sigma_n} \Gamma_{n}(\Zbf_i) =   \sum_{i=1}^{n} \frac{\Gamma_{n}(\Zbf_i)}{\sqrt{n \Var(\Gamma_n(\Zbf_1))}}.
\]

\begin{claim}
    (II) $\xi_n^0(\xbf)$ is uniformly tight.
\end{claim}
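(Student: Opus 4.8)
The plan is to reduce uniform tightness to a tail--variance estimate in a fixed orthonormal basis. Since $\H$ is separable, I would fix an orthonormal basis $\{e_j\}_{j\geq 1}$ and let $P_{N_0}$ denote the orthogonal projection onto $\mathrm{span}\{e_1,\dots,e_{N_0}\}$. By the standard characterization of tightness in a separable Hilbert space, the family $\{\xi_n^0\}$ is uniformly tight provided that (a) for each fixed $N_0$ the finite--dimensional marginals $P_{N_0}\xi_n^0$ are tight, and (b) $\lim_{N_0\to\infty}\sup_n \E[\|(I-P_{N_0})\xi_n^0\|_{\H}^2]=0$; indeed, (b) together with Markov's inequality controls the mass of $\xi_n^0$ outside a finite--dimensional subspace uniformly in $n$, while (a) handles the projected part.

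Because $\xi_n^0=\sum_{i=1}^n \Gamma_{n}(\Zbf_i)/\sqrt{n\,\Var(\Gamma_{n}(\Zbf_1))}$ is a normalized sum of i.i.d.\ mean--zero elements, I would first record that $\E[\langle \xi_n^0,e_j\rangle^2]=\Var(\langle \Gamma_{n}(\Zbf_1),e_j\rangle)/\Var(\Gamma_{n}(\Zbf_1))=:r_{n,j}$, and hence $\E[\|(I-P_{N_0})\xi_n^0\|_{\H}^2]=\sum_{j>N_0} r_{n,j}$. The key structural observation is that, by Parseval's identity and $\E[\Gamma_{n}(\Zbf_1)]=0$, one has $\sum_j \Var(\langle \Gamma_{n}(\Zbf_1),e_j\rangle)=\E[\|\Gamma_{n}(\Zbf_1)\|_{\H}^2]=\Var(\Gamma_{n}(\Zbf_1))$, so that $\sum_j r_{n,j}=1$ \emph{exactly}, for every $n$; that is, $(r_{n,j})_j$ is a probability mass function on $\N$. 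Tightness of the projected marginals in (a) is then immediate from Chebyshev, since each $\langle \xi_n^0,e_j\rangle$ has mean zero and variance $r_{n,j}\leq 1$.

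Next I would identify the pointwise limit of $r_{n,j}$. Applying Theorem~\ref{thm:varianceassumption} with the fixed functional $f=e_j$ gives $r_{n,j}\to\lambda_j$, where $\lambda_j=\tfrac{c\,\sigma_{e_j}^2(\xbf,1)\Prob(W=1)+\sigma_{e_j}^2(\xbf,0)\Prob(W=0)}{c\,\sigma^2(\xbf,1)\Prob(W=1)+\sigma^2(\xbf,0)\Prob(W=0)}$, the denominator being strictly positive by \ref{dataass4}. Applying Parseval again to the centered element $k(\Ybf,\cdot)-\mu^{(W)}(\xbf)$ yields $\sum_j \sigma_{e_j}^2(\xbf,W)=\sigma^2(\xbf,W)$ for $W\in\{0,1\}$, whence $\sum_j \lambda_j=1$ as well, so $(\lambda_j)_j$ is again a probability mass function. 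Since the masses $(r_{n,j})_j$ converge pointwise to the mass function $(\lambda_j)_j$ and all have total mass one, Scheff\'e's lemma gives total--variation convergence $\sum_j |r_{n,j}-\lambda_j|\to 0$. From this I would deduce $\lim_{N_0\to\infty}\sup_n\sum_{j>N_0} r_{n,j}=0$: choose $N_0$ with $\sum_{j>N_0}\lambda_j$ small (possible since $\sum_j\lambda_j=1$), bound the tail of $r_{n,\cdot}$ by that of $\lambda$ plus the total--variation distance for large $n$, and absorb the finitely many small $n$ by further enlarging $N_0$. This establishes (b) and completes the proof.

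The main obstacle is precisely the uniformity in $n$ demanded by (b): the ratios $r_{n,j}$ depend on $n$ through the variance expansion of Theorem~\ref{thm:varianceassumption}, whose remainders are only $\o(s_n^{-(1+\epsilon)})$, so the tail $\sum_{j>N_0} r_{n,j}$ cannot be controlled term by term uniformly in $n$. The clean way around this is to exploit the exact normalization $\sum_j r_{n,j}=1$ (valid for each $n$ with no approximation), which recasts the problem as convergence of probability mass functions and lets Scheff\'e's lemma supply the required uniform smallness of the tails, so that the $\o$--remainders never need to be controlled uniformly over the basis index $j$.
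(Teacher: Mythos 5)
Your proposal is correct and follows essentially the same route as the paper: both reduce uniform tightness to the flat-concentration criterion $\lim_{k\to\infty}\limsup_n \E\bigl[\|\xi_n^0 - P_k(\xi_n^0)\|_{\H}^2\bigr] = 0$ for projections onto a fixed orthonormal basis, which the paper invokes via Lemma 3.2 of the cited Hilbert-space CLT reference and verifies by pointing to the arguments of Theorem 6 in \citet{näf2023confidence}. Your Parseval normalization $\sum_j r_{n,j}=1$ combined with Scheff\'e's lemma is a clean and valid way to supply the uniformity in $n$ that the paper delegates to that earlier proof, and it correctly uses Theorem~\ref{thm:varianceassumption} (with $f=e_j$) together with \ref{dataass4} and \ref{dataass6} for the pointwise limits $r_{n,j}\to\lambda_j$.
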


\begin{claimproof}
Because $\H$ is separable due to our assumptions on the kernel, there exists a complete orthogonal basis $\left( e_j\right)_{j \in \N}$ of $\H$; see for instance~\citep{hilbertspacebook}. Let $P_k$ be the projection operator onto the linear span of the first $k$ elements of $ \left( e_j\right)_{j \in \N} $, $S_k=\mbox{span}(e_1, \ldots, e_k)$. Given the tools from above, we can use the exact same arguments as in \citep[Theorem 6]{näf2023confidence} to verify the condition $\limsup_n \E[\| \xi_n^0(\xbf) - P_k(\xi_n^0(\xbf)) \|_{\H}^2] \to 0$, as $k \to \infty$, which is sufficient for tightness~\citep[Lemma 3.2]{HilbertspaceCLTs}.
\end{claimproof}

Univariate convergence together with tightness imply $\xi_n^0(\xbf) \stackrel{D}{\to} N(0, \boldsymbol{\Sigma}_{\xbf})$; see for example~\citep[Lemma 3.1/3.2]{HilbertspaceCLTs} or~\citep[Chapter 7]{hilbertspacebook}. Since by Equation \eqref{asymptoticlin} we have 
\[
\frac{1}{\sigma_n} (\htauk - \tauk) = \xi_n^0(\xbf) + o_{p}(1),
\]
the result follows.

\end{proof}

Finally, having obtained these prior results Theorems \ref{thm: asymptoticnormalityhalfsampling}, \ref{powerprop} and \ref{coverprop} can be proven using the same approach as in \citep{näf2023confidence}, making use of the general results given in \cite{B1,B2,B3}. The proofs are thus omitted.

\end{document}